\newtheorem{definition}{Definition}[section]
\newtheorem{theorem}{Theorem}[section]
\newtheorem{lemma}{Lemma}[section]
\newtheorem{claim}{Claim}[section]
\newcommand{\marginal}{\lambda}
\newcommand{\yminus}[1]{\vec{y}_{\text{-}#1}}
\newcommand{\Gminus}[1]{G_{\text{-}#1}}
\newcommand{\fomp}{\text{fully online matching}\xspace}
\newcommand{\ranking}{Ranking\xspace}
\newcommand{\wtf}{Water-filling\xspace}
\newcommand{\ewf}{Eager Water-filling\xspace}
\newcommand{\br}{Balanced Ranking\xspace}
\newcommand{\eqdef}{\stackrel{\textnormal{def}}{=}}
\newcommand{\vect}[1]{\ensuremath{\vec{#1}}}
\newcommand{\vecy}{\vect{y}}
\newcommand{\vecmv}[1][v]{\vect{y}_{\text{-}#1}}
\newcommand{\pw}{p}
\newcommand{\dd}[1]{\mathrm{d}#1}
\renewcommand{\E}{\mathbb{E}}
\newcommand*\samethanks[1][\value{footnote}]{\footnotemark[#1]}
\title{Fully Online Matching II: \\ Beating Ranking and Water-filling}
\author{Zhiyi Huang\thanks{The Univerisity of Hong Kong. \texttt{\{zhiyi, yhzhang2\}@cs.hku.hk}} 
    \and Zhihao Gavin Tang\thanks{ITCS, Shanghai University of Finance and Economics. \texttt{tang.zhihao@mail.shufe.edu.cn}}
    \and Xiaowei Wu\thanks{IOTSC, University of Macau. \texttt{xiaoweiwu@um.edu.mo}}
    \and Yuhao Zhang\samethanks[1]}
\date{}
\begin{document}

\begin{titlepage}
\thispagestyle{empty}
\maketitle
\begin{abstract}
    \thispagestyle{empty}
    Karp, Vazirani, and Vazirani (STOC 1990) initiated the study of online bipartite matching, which has held a central role in online algorithms ever since.
    Of particular importance are the Ranking algorithm for integral matching and the Water-filling algorithm for fractional matching.
    Most algorithms in the literature can be viewed as adaptations of these two in the corresponding models.
    Recently, Huang et al.~(STOC 2018, SODA 2019) introduced a more general model called \emph{fully online matching}, which considers general graphs and allows all vertices to arrive online.
    They also generalized Ranking and Water-filling to fully online matching and gave some tight analysis:
    Ranking is $\Omega \approx 0.567$-competitive on bipartite graphs where the $\Omega$-constant satisfies $\Omega e^\Omega = 1$, and Water-filling is $2-\sqrt{2} \approx 0.585$-competitive on general graphs.
    
    We propose fully online matching algorithms strictly better than Ranking and Water-filling.
    For integral matching on bipartite graphs, we build on the online primal dual analysis of Ranking and Water-filling to design a $0.569$-competitive hybrid algorithm called Balanced Ranking.
    To our knowledge, it is the first integral algorithm in the online matching literature that successfully integrates ideas from Water-filling.
    For fractional matching on general graphs, we give a $0.592$-competitive algorithm called Eager Water-filling, which may match a vertex on its arrival.
    By contrast, the original Water-filling algorithm always matches vertices at their deadlines.
    Our result for fractional matching further shows a separation between fully online matching and the general vertex arrival model by Wang and Wong (ICALP 2015), due to an upper bound of $0.5914$ in the latter model by Buchbinder, Segev, and Tkach (ESA 2017).
\end{abstract}

\end{titlepage}

\section{Introduction}
\label{sec:introduction}

Online matching is one of the oldest and most fruitful topic in the online algorithms literature.
It dates back to thirty years ago when \citet{stoc/KarpVV90} proposed the online bipartite matching problem and the Ranking algorithm.
Consider a bipartite graph, where the left-hand-side vertices are offline, i.e., known upfront to the algorithm, and the right-hand-side vertices are online arriving one at a time.
On the arrival of an online vertex, the algorithm observes its incident edges and must immediately and irrevocably decide how to match it.
The goal is to maximize the cardinality of the matching.
For a real-world example, think of online advertising where the offline and online vertices correspond to advertisers and impressions respectively.
Ranking picks a random permutation of the offline vertices, and matches each online vertex to the first unmatched neighbor by the permutation.
\citet{stoc/KarpVV90} showed that it is $1-\frac{1}{e}$-competitive, and this is the best possible for the problem.

In online advertising, an advertiser can usually be matched to many impressions.
This is the $b$-matching model of \citet{tcs/KalyanasundaramP00} where $b$ is the number of times an offline vertex can be matched, a.k.a., its capacity.
This is also closely related to the fractional relaxation of online bipartite matching where each online vertex may be matched fractionally to multiple offline neighbors so long as the total matched amount does not exceed one unit.%
\footnote{The fractional problem is equivalent to a $b$-matching problem in which online vertices arrive in batches of $b$ copies and $b$ tends to infinity.
Further, the assumption of having $b$ copies per online vertex is irrelevant in existing analysis.}
In this case, the optimal $1-\frac{1}{e}$ competitive can be achieved with a deterministic algorithm called Water-filling (a.k.a.\ Water-level or Balance).
It matches each online vertex continuously to the least matched offline neighbor.

Ranking and Water-filling are the most fundamental algorithms in online bipartite matching and its variants.
Most algorithms in the online matching literature under worst-case analysis can be viewed as adaptations of them in the corresponding models.

\paragraph{Fully Online Matching.}
Let us turn to a different real-world scenario which involves a bipartite matching problem with an online flavor.
Consider a ride-hailing platform that matches drivers on one side and passengers on the other side.
This is \emph{not} captured by the model of \citet{stoc/KarpVV90} because vertices on both sides of the bipartite graph arrive and depart online.
To this end, \citet{stoc/HuangKTWZZ18, soda/HuangPTTWZ19} recently introduced a generalized model called \emph{fully online matching}.
Each vertex in the fully online model is associated with not only an arrival time but also a deadline.
Each edge is revealed to the algorithm when both endpoints have arrived, and can be selected into the matching anytime before the endpoints' deadlines provided that they are still unmatched.
Furthermore, the model extends naturally to general graphs, capturing an even broader class of problems including matching passengers in ride-sharing.

\citet{stoc/HuangKTWZZ18, soda/HuangPTTWZ19} generalized both Ranking and Water-filling to fully online matching.
Both algorithms only match vertices at their deadlines.
Ranking selects a random permutation of \emph{all vertices};%
\footnote{For example, draw a random number in $[0, 1)$ on the arrival of each vertex and sort them by the numbers.}
then, for any vertex that stays unmatched till its deadline, Ranking matches it to the first available neighbor by the permutation.
Similarly, for any vertex that is not fully matched by its deadline, Water-filling matches the remaining portion fractionally to the least-matched available neighbors.
They showed that Ranking is $0.521$-competitive on general graphs.
For bipartite graphs, they gave an tight analysis that Ranking is $\Omega \approx 0.567$-competitive, where the $\Omega$-constant is the solution of $\Omega \cdot e^\Omega = 1$.
Further, they proved a tight $2-\sqrt{2} \approx 0.585$ competitive ratio of Water-filling on general graphs.
Finally, they separated fully online matching with online bipartite matching of \citet{stoc/KarpVV90} by showing that there is no $1-\frac{1}{e}$-competitive algorithm in the fully online model.

\subsection{Our Contributions}

This work is driven by a natural question:
\emph{Are Ranking and Water-filling optimal in fully online matching, like in many other online matching models?}
In particular, is the $\Omega \approx 0.567$ competitive ratio the best possible for integral algorithms on bipartite graphs?
How about the $2-\sqrt{2} \approx 0.585$ competitive ratio for fractional algorithms on general graphs?
Surprisingly, the answers are no!
There are algorithms strictly better than Ranking and Water-filling in fully online matching!

\paragraph{Beating Ranking on Bipartite Graphs.}
We follow a simple intuition:
since Water-filling has a superior competitive ratio, we may ``correct'' the decisions by Ranking with those by Water-filling.
While easy to state, this intuition is difficult to substantiate.
In fact, to our knowledge, there is no integral algorithm in the online matching literature prior to our work which successful integrates ideas from Water-filling.
To explain our algorithm, we need the following equivalent interpretations of Ranking and Water-filling from the online primal dual technique (see, e.g., \citet{soda/DevanurJK13}).

\begin{itemize}
    \item \textbf{Ranking:~}
        Draw a random number $y_u \in [0, 1]$ for each vertex $u$.
        Then, if $v$ is matched to vertex $u$ at $u$'s deadline, they split one unit of gain.
        Vertex $v$ keeps $g(y_v) = e^{y_v-1}$ to itself, and offers $1 - g(y_v)$ to $u$.
        For each vertex $u$ which stays unmatched till its deadline, it matches to the offline neighbor who offers the most, i.e., the one with the smallest $y_v$.
    \item \textbf{Water-filling:~}
        For each vertex $u$, let $x_u \in [0, 1]$ denote its matched portion, a.k.a.\ its water level.
        Then, if $v$ is matched to vertex $u$ at $u$'s deadline by some infinitesimal amount $\epsilon$, they split the gain of $\epsilon$.
        Vertex $v$ keeps $f(x_v) = e^{x_v-1}$ times $\epsilon$ to itself, and offers $\big(1 - f(x_v)\big) \epsilon$ to $u$.
        For each vertex $u$ which is not fully matched by its deadline, it fractionally matches to the least matched offline neighbors to maximize the total offer.
\end{itemize}

For any nondecreasing $f$ and $g$, we define a hybrid algorithm called Balance Ranking as follows.

\begin{itemize}
    \item \textbf{Balanced Ranking:~}
        For each vertex $u$, draw a random number $y_u \in [0, 1]$.
        Let $x_u$ be the probability that $u$ is matched, a.k.a.\ its water level.
        Then, if $v$ is matched to another vertex $u$ at $u$'s deadline, they split one unit of gain.
        Vertex $v$ keeps $f(x_v) + g(y_v)$ to itself, and offers $1 - f(x_v) - g(y_v)$ to $u$, \emph{where $x_v$ is $v$'s water level after $u$'s deadline}.
        For each vertex $u$ which stays unmatched till its deadline, it matches to the offline neighbor who offers the most.
\end{itemize}

While the algorithm is a simple combination the alternative interpretations of Ranking and Water-filling, it is crucial to match each vertex based on the water levels of the neighbors \emph{after} the matching decision of the current vertex.
We show in Section~\ref{subsec:predicting} that it is a well-defined algorithm.

Then, we analyze Balanced Ranking under the online primal dual framework and design the functions $f$ and $g$ by solving a differential equation arose from the analysis.
See Section~\ref{subsec:analysis-balanced-ranking}.

\begin{theorem}
    \label{thm:balanced-ranking}
    Balanced Ranking is $0.569$-competitive for fully online matching on bipartite graphs.
\end{theorem}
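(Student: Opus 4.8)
The plan is to run the online primal--dual method on the natural matching LP and its dual. The primal has a variable $x_e$ per edge, maximizes $\sum_e x_e$, and imposes $\sum_{e \ni u} x_e \le 1$ at every vertex $u$; the dual has a variable $\alpha_u$ per vertex, minimizes $\sum_u \alpha_u$, and imposes $\alpha_u + \alpha_v \ge 1$ for every edge $(u,v)$. By weak LP duality, it suffices to produce numbers $\{\alpha_u\}$ (functions of the instance, averaged over the algorithm's coins) and a constant $\Gamma \ge 0.569$ such that (a) $\sum_u \alpha_u = \E[\alg]$ and (b) $\alpha_u + \alpha_v \ge \Gamma$ for every edge $(u,v)$: then $\E[\alg] = \sum_u \alpha_u \ge \Gamma\cdot\opt$, which is the theorem. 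I would fix the parameter $\Gamma=0.569$ (or rather derive the best $\Gamma$ and check it exceeds this) and choose the functions $f,g$ afterwards to make (b) hold.

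For (a) I would read the dual off the gain-splitting rule already baked into Balanced Ranking. In a single run (a fixed draw of the $y_w$'s), let the gain of a vertex $w$ be $f(x_w)+g(y_w)$ if $w$ is matched by a neighbor at that neighbor's deadline, $1 - f(x_{w'})-g(y_{w'})$ if $w$ matches a neighbor $w'$ at $w$'s own deadline, and $0$ if $w$ is never matched. Since every match is realized at exactly one endpoint's deadline and the two endpoints collect $\bigl(1-f(x)-g(y)\bigr)+\bigl(f(x)+g(y)\bigr)=1$, the gains sum to $|M|$ in every run, so setting $\alpha_w=\E[\text{gain of }w]$ gives $\sum_w \alpha_w = \E[\alg]$ --- provided the ``water level'' $x_w$ the algorithm plugs in is genuinely the probability that $w$ is matched by the relevant time, which is exactly the well-definedness established in Section~\ref{subsec:predicting} and which I would invoke here. (We also need $f,g\ge 0$ and $f+g\le 1$ so that all gains lie in $[0,1]$; this is a design constraint on $f,g$.) It then remains to prove (b): $\E[\text{gain of }u] + \E[\text{gain of }v]\ge\Gamma$ for each edge.

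To attack (b), fix an edge $(u,v)$ --- on opposite sides in the bipartite case --- condition on the random numbers of all vertices other than $v$, and compare the run on $G$ (with $y_v$ still random) against the run on the instance $G_{\text{-}v}$ with $v$ deleted. The workhorse is a monotonicity lemma in the spirit of the analyses of Ranking and Water-filling: deleting $v$, or raising $y_v$, can only weakly increase the water level of every other vertex, and in particular never hurts $u$. Using this, I would argue that whenever $u$ reaches its deadline unmatched, $v$ is available with some water level $x_v$ at that moment, so $u$'s offer-maximizing choice earns it at least $1-f(x_v)-g(y_v)$; together with the case in which $u$ is matched earlier (where it keeps $f(x_u)+g(y_u)$, or earns at least as much) and the symmetric statement for $v$ --- the symmetry of the gain-split collapsing the various orderings of arrivals and deadlines to one core argument --- this yields a lower bound on $\E[\text{gain}(u)]+\E[\text{gain}(v)]$ as a functional of $f$, $g$, and the conditional water-level trajectory. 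The subtle point, and the reason the algorithm evaluates $f$ at $v$'s water level \emph{after} $u$'s deadline rather than before, is that this is precisely the bookkeeping that makes the per-run combinatorial inequality survive the expectation over $y_v$. Getting this coupling between the randomness-based term $g\circ y$ and the fractional term $f\circ x$ to line up is where I expect the real work to be: unlike in pure Ranking, where fixing all but one rank makes the matched/unmatched pattern a clean deterministic monotone function of that rank, here the offers depend on water levels that are themselves averages over the coins, so the combinatorial and averaging arguments are entangled.

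Finally I would optimize the free functions $f$ and $g$ (nondecreasing, with $f+g$ valued in $[0,1]$, and with boundary conditions forced by the previous step). Taking the worst instance in that bound turns the requirement $\E[\text{gain}(u)]+\E[\text{gain}(v)]\ge\Gamma$ into an extremal problem whose optimality conditions form a differential equation for the pair $(f,g)$ --- a hybrid of the $e^{y-1}$ of Ranking and the $e^{x-1}$ of Water-filling --- and solving that ODE and substituting back pins down the largest admissible $\Gamma$. A final, largely numerical, verification confirms $\Gamma \ge 0.569$, completing the proof.
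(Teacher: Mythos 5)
Your high-level plan is correct and matches the paper's framework: set up the online primal--dual pair, read the dual off the gain-split, invoke the well-definedness of the lookahead water levels, and reduce the theorem to showing $\E[\alpha_u]+\E[\alpha_v]\ge\Gamma$ edgewise. You also correctly flag the central difficulty --- how to make the $g\circ y$ (rank) term and the $f\circ x$ (water-level) term cooperate rather than remain two decoupled bounds. But you stop exactly at that difficulty, saying it is ``where I expect the real work to be,'' and it is: without one further structural ingredient the argument does not get past Ranking's $\Omega\approx 0.567$, and so cannot yield $0.569$.

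Here is the gap concretely. Following your comparison of $G$ with $G_{\text{-}u}$ and $G_{\text{-}v}$, the natural outcome (and what the paper derives in its Lemma on approximate dual feasibility) is a lower bound of the form $\E[\alpha_u+\alpha_v]\ge\E[G(\tau,\gamma,\theta)]+F(x_u,x_v^{(u)})$, where $\tau,\gamma,\theta$ are marginal ranks and $F(x_u,x_v^{(u)})=\int_0^{x_u}f+\int_0^{x_v^{(u)}}f-(1-x_u)f(x_v^{(u)})$. If you treat $(\tau,\gamma,\theta)$ and $(x_u,x_v^{(u)})$ as independent adversarial quantities --- which is what a generic ``optimize $f,g$ over the worst case'' plan does --- then the $G$-part is maximized at exactly $\Omega$ (that is the tight Ranking bound), while the $F$-part can be driven to $0$ or below at $x_u=0$. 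So the decoupled optimization provably gives $\Gamma\le\Omega$, and your plan as stated cannot beat Ranking. The paper escapes this by \emph{binding} the two halves: from the marginal-rank structure, $u$ is passive for all $y_u<\tau$ regardless of $y_v$, hence $\E[\tau]\le x_u$. This single inequality ties the adversary's choice of $\tau$ in the $G$-bound to the water level $x_u$ appearing in the $F$-bound. The second (nonobvious) device needed to exploit this is an auxiliary \emph{convex} minorant $\ell(\tau)\le\min_{\gamma\le\theta}G(\tau,\gamma,\theta)$, so that $\E[G]\ge\E[\ell(\tau)]\ge\ell(\E[\tau])$ by Jensen, and then $F$'s monotonicity in its first argument gives $F(x_u,\cdot)\ge F(\E[\tau],\cdot)$. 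Only after these two moves does the design of $f,g,\ell$ become a well-posed optimization (in the paper a discretized factor-revealing LP rather than a closed-form ODE) whose value is checked to exceed $0.569$. Without identifying the binding $\E[\tau]\le x_u$ and the convexification trick, your proposal is an outline of the right scaffolding but is missing the load-bearing step that makes the combination strictly better than Ranking.
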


\paragraph{Beating Water-filling.}
We start with an observation that Water-filling works in an even harder model, where an edge is revealed to the algorithm only when an endpoint reaches the deadline.
In fact, the hardness result by \citet{soda/HuangPTTWZ19} implies that Water-filling is optimal in the harder model.
The observation suggests, however, Water-filling gives up the information about the edges among the vertices which have arrived but have not yet reached the deadlines.
Intuitively, we shall be able to improve the competitive ratio by taking such information into account.
Our algorithm utilizes the information implicitly by eagerly matching vertices partially on the arrivals.
Indeed, the eager matches are precisely among vertices that have arrived but have not reached the deadlines.
For any nondecreasing function $f$, define the Eager Water-filling algorithm as follows.

\begin{itemize}
    \item \textbf{Eager Water-filling:~}
        On the arrival of each vertex $u$, match it fractionally to the least matched offline neighbors $v$ as long as $v$'s offer is larger than what $u$ wants for itself at the current water level, i.e., $1-f(x_v) \ge f(x_u)$.
        If a vertex $u$ is not fully matched by its deadline, match it fractionally to the least matched offline neighbors to maximize the total offer.
\end{itemize}

We can naturally interpret the algorithm as having vertex $u$ make decisions \emph{assuming the graph stays as it is}.
Suppose some neighbor $v$ satisfies $1-f(x_v) \ge f(x_u)$ on $u$'s arrival.
On the one hand, an eager match with $v$ offers $1-f(x_v)$ to $u$ per unit of match.
On the other hand, if $u$ opts to wait, it risks getting matched at some other vertex's deadline in which case $u$ keeps only $f(x_u)$ per unit of match, inferior to an eager match with $v$.
Further, $v$ may be matched by some other vertex while $u$ waits. 
Finally, even if none of these happens, $u$ at best has the same options at its deadline compared to the eager matches on its arrival, assuming the graph stays the same.
In sum, $u$ shall fractionally match to $v$ on its arrival as in Eager Water-filling.

Again, we analyze Eager Water-filling under the online primal dual framework and design the function $f$ by solving a differential equation arose from the analysis.
See Section~\ref{sec:ewf}.

\begin{theorem}
    \label{thm:eager-wf}
    Eager Water-filling is $0.592$-competitive for fractional fully online matching.
\end{theorem}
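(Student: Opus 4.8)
The plan is to analyze Eager Water-filling with the online primal-dual method, extending the analyses of Water-filling in the fully online model. Write the (offline) fractional matching LP on the final graph $G=(V,E)$ as the primal: maximize $\sum_{(u,v)\in E}x_{uv}$ subject to $\sum_{v}x_{uv}\le 1$ for each $u$ and $x\ge 0$; its dual is to minimize $\sum_{u\in V}\alpha_u$ subject to $\alpha_u+\alpha_v\ge 1$ for every $(u,v)\in E$ and $\alpha\ge 0$ (for the fractional problem $\opt$ is the optimum of this LP, which on general graphs dominates the integral optimum). I would construct the dual variables online from the run of the algorithm using the gain-sharing rule behind the stated interpretation: every infinitesimal unit of matched mass $\dd{x_{uv}}$ is created at a well-defined instant — either an arrival (an eager match) or a deadline — at which one endpoint is \emph{active} (the vertex that just arrived or reached its deadline) and the other is \emph{passive}; credit $f(x_w)$ to the passive endpoint $w$ at its current water level $x_w$ and $1-f(x_w)$ to the active endpoint. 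By construction $\sum_{u}\alpha_u=\alg$, which is the reverse-weak-duality half; it then suffices to prove \emph{approximate dual feasibility}, $\alpha_u+\alpha_v\ge 0.592$ for every $(u,v)\in E$, since then $\alpha/0.592$ is dual feasible and weak LP duality gives $\alg=\sum_u\alpha_u\ge 0.592\cdot\opt$.

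Before the case analysis I would record the structural facts that drive it. First, each vertex's water level is nondecreasing in time, so an offer $1-f(x_w)$ collected against a passive endpoint $w$ is at least $1-f(\bar x_w)$ with $\bar x_w$ the final level of $w$. Second, at the earlier of the two deadlines of an edge $(u,v)$, say $d_u\le d_v$, the deadline water-filling at $d_u$ either fills $u$ to level $1$ or raises every available neighbour of $u$, in particular $v$, to level $1$; hence $x_u=1$ or $x_v=1$ at time $d_u$. Third, and most importantly, the eager rule constrains the reachable configurations: if a vertex arrives and does \emph{not} eager-match a neighbour that is present, then $f(x_u)+f(x_v)>1$ at that instant, whereas every bit that \emph{is} matched eagerly earns its active endpoint at least $f(\cdot)$ of that endpoint's own level. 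Combining this with the passive-share identity, each vertex $w$'s matched range $[0,\bar x_w]$ splits into a pre-deadline part $[0,s_w]$ on which every bit contributes at least $f(\text{level})$ to $\alpha_w$ (eager or passive) and a deadline part $[s_w,\bar x_w]$ on which the contribution is only the offer $1-f(\cdot)$; hence $\alpha_w\ge\int_0^{s_w}f+(\text{deadline offers})$.

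The heart of the proof is then the case analysis for a fixed edge $(u,v)$ with $d_u\le d_v$, organised by where the mass $x_{uv}$ is created (an eager match on the later arrival, or the water-filling at $d_u$), by whether $x_u$ or $x_v$ reaches $1$ first, and by whether $u$ (resp.\ $v$) completes via eager matches or only at its deadline. In each case I would combine the per-vertex lower bound with the reachability facts — crucially, a vertex completing at its deadline with small offers forces its available neighbours, hence $v$ (or $u$), to sit at a correspondingly high level, which in turn raises the partner's $\int_0^{s}f$ term, and one must check what levels the algorithm can actually reach rather than bounding naively — to lower-bound $\alpha_u+\alpha_v$ by an expression in a handful of level parameters. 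Taking the worst case and minimising over the parameters yields the competitive ratio as a functional of $f$; choosing $f$ to equalise the binding cases produces a (piecewise) differential equation for $f$, whose solution pins down $f$ and the constant, and a final numerical evaluation gives $0.592$.

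I expect the main obstacle to be exactly this interplay: unlike classical one-sided arrivals, both endpoints here are online with deadlines and may be matched eagerly, so there are many timelines, and in each one must identify precisely which level configurations the algorithm can reach and extract a tight bound — the naive per-vertex estimate $\alpha_w\ge\int_0^{\bar x_w}f$ is false for deadline matches, so one must pair a vertex's small deadline offers with its neighbour's forced-high level. A further difficulty is that the eager rule introduces an extra regime relative to plain Water-filling, making the resulting differential equation and the function $f$ more intricate; verifying that the constant it yields is at least $0.592$ — and in particular strictly above the $0.5914$ barrier of the general vertex-arrival model — is the concluding step.
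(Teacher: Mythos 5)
Your high-level plan is the right one and matches the paper's: online primal--dual with the passive/active gain-sharing rule, reverse weak duality holding with equality, reducing everything to approximate dual feasibility, observing that skipping an eager match forces $f(x_u)+f(x_v)>1$, and deriving a differential equation for $f$ from the binding cases. The paper also organizes the case analysis exactly as you anticipate, by which of $u,v$ arrives first.

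However, the per-vertex lower bound you state has a genuine gap. You bound the pre-deadline contribution by $\alpha_w\ge\int_0^{s_w}f$, on the grounds that every pre-deadline bit (eager or passive) earns at least $f(\text{current level of }w)$. This is true but too weak: it is precisely the plain Water-filling bound. If you set $s_u=p_u$ and $s_v=p_v$ (the levels at $u$'s deadline) your estimate becomes $\int_0^{p_u}f+\int_0^{p_v}f+(1-p_u)(1-f(p_v))$, whose worst case over $p_u,p_v$ is exactly $2-\sqrt{2}\approx 0.585$; the eager constraint $f(t_u)+f(p_v)>1$ cannot help because $t_u$ has disappeared from the expression. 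The improvement to $0.592$ comes from a sharper accounting of the eager-arrival part. During $w$'s eager phase, the process continues until $f(t_w)+f(m)>1$ for the minimum neighbor level $m$; since that minimum level is nondecreasing, at every instant the neighbor $z$ being eager-matched satisfies $f(x_z)\le 1-f(t_w)$, so each eager unit earns $w$ at least $f(t_w)$ (the \emph{final} eager level, not the current one). This gives $\alpha_w\ge t_w f(t_w)+\int_{t_w}^{p_w}f$, and $t_w f(t_w)-\int_0^{t_w}f>0$ is exactly the surplus over plain Water-filling. You would also need to close the loop on the eager constraint: if $v$ arrives before $u$, then at $u$'s arrival $f(t_u)+f(x_v)>1$ and, by monotonicity of water levels, $x_v\le p_v$, so $f(t_u)+f(p_v)>1$ -- this ties the new parameters $t_u,t_v$ to $p_u,p_v$ in the final minimization. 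With those two fixes (the $t_w f(t_w)$ term and the persisted constraint), your plan lines up with the paper's Lemmas on the gain of $u$ and $v$, the reformulation via $h=f^{-1}$, and the factor-revealing LP that certifies $\Gamma\ge 0.592$.
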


This result separates fully online matching with another model called general vertex arrival by \citet{icalp/WangW15} because of a $0.5914$ upper bound on the best possible competitive ratio in the latter model by \citet{esa/BuchbinderST17}.
Intriguingly, general vertex arrival is essentially fully online matching \emph{restricted to eager matches only}.
In other words, we obtain the separation by forfeiting part of the flexibility to defer decisions till the deadlines, and by incorporating eager matches which are allowed in the general vertex arrival model in the first place.

\subsection{Other Related Works}

The analysis of Ranking in online bipartite matching has been refined and simplified in a series of papers by \citet{soda/GoelM08}, \citet{sigact/BenjaminM08}, and \citet{soda/DevanurJK13}.
In particular, the online primal dual framework by \citet{soda/DevanurJK13} has been the backbone of the competitive analysis in fully online matching including those in this paper.

Many variants of online bipartite matching have been introduced.
\citet{focs/MehtaSVV05} proposed the first generalization called AdWords motivated by online advertising, which has been simplified and generalized under online primal dual~\cite{esa/BuchbinderJN07, stoc/DevanurJ12}.
\citet{soda/AggarwalGKM11} considered the vertex-weighted problem and extended Ranking to this model.
\citet{wine/FeldmanKMMP09} investigated the fractional edge-weighted case.
Their algorithm can be seen as an adaptation of Water-filling, and the analysis was simplified by \citet{teac/DevanurHKMY16}.
\citet{focs/MehtaP12} introduced a model with stochastic rewards and the results were later improved by \citet{soda/MehtaWZ14} and \citet{stoc/HuangZ20}.
Some of the models have also been studied under the assumption of a random arrival order~\cite{stoc/KarandeMT11, stoc/MahdianY11, talg/HuangTWZ19}.

Besides fully online matching and general vertex arrival, there is an even harder edge arrival model.
Recently, \citet{focs/GamlathKMSW19} proved that the trivial $0.5$-competitive greedy algorithm is the best possible.
They also obtained the first integral algorithm that breaks the $0.5$ barrier in the general vertex arrival model.
Prior to that, there were some positive results for special cases of edge arrival, e.g., when the graph is a forest~\cite{esa/BuchbinderST17}.
The fully online matching problem is closely related to the online windowed matching problem by \citet{ec/AshlagiBDJSS19}, which can be viewed as an edge-weighted version of fully online matching under the first-in-first-out assumption.

\section{Preliminaries}
\label{sec:preliminaries}

\paragraph{Model.}
Consider an undirected graph $G = (V, E)$.
Initially, the algorithm has no information about $G$.
Then, we proceed in $2|V|$ steps, each of which is one of the following two kinds:
\begin{itemize}
    \item \emph{Arrival of a vertex $u$:~}
        The algorithm observes the edges between $u$ and the previously arrived vertices.
        This is the earliest step when $u$ can be matched.
    \item \emph{Deadline of a vertex $u$:~}
        This is the last step when $u$ can be matched.
        We guarantee that all neighbors of $u$ arrive before $u$'s deadline.%
        \footnote{Consider the ride-hailing example.
        The guarantee effectively means that, for instance, a driver on a day shift cannot be matched with a passenger in the evening.}
\end{itemize}

The goal is to maximize the size of the matching.
Following the standard competitive analysis, an algorithm is $\Gamma$-competitive for some $0 \le \Gamma \le 1$, if for any fully online matching instance, the expected size of its matching is at least $\Gamma$ times the optimal matching in hindsight.

Observe that fully online matching generalizes the model of \citet{stoc/KarpVV90}, because the latter can be seen as having the offline vertices arrive at the beginning and leave at the end, and letting the deadline of each online vertex be right after its arrival.

\paragraph{Integral vs.\ Fractional Algorithms.}
An integral algorithm must match each vertex $u$ in whole to another vertex, although the matching decisions could be randomized.
A fractional algorithm, however, may match a vertex $u$ fractionally to multiple vertices, e.g., $\frac{1}{2}$ to $v_1$, $\frac{1}{4}$ to $v_2$, and another $\frac{1}{4}$ to $v_3$, as long as the total amount is at most $1$.

\paragraph{Matching LP.}
For any edge $(u, v)\in E$, let $x_{uv}$ be the probability/fraction that edge $(u, v)$ is matched by the algorithm.
Consider the following standard matching LP and its dual:
\begin{align*}
\max: \quad & \textstyle \sum_{(u,v)\in E} x_{uv} && \qquad\qquad & \min: \quad & \textstyle\sum_{u \in V} \alpha_u\\
\text{s.t.} \quad & \textstyle \sum_{v:(u,v)\in E} x_{uv} \leq 1 && \forall u\in V & \text{s.t.} \quad & \alpha_u + \alpha_v \geq 1 && \forall (u,v)\in E \\
& x_{uv} \geq 0 && \forall (u,v)\in E & & \alpha_u \geq 0 && \forall u \in V
\end{align*}

Let $P$ and $D$ denote the primal and dual objectives respectively.
Observe that by the above choice of $x_{uv}$'s, $P$ also equals the expected size of the algorithm's matching.

\paragraph{Randomized Online Primal Dual Framework.}
An online primal dual algorithm maintains not only a matching but also a dual assignment online.

\begin{lemma}[\citet{soda/DevanurJK13}]
    \label{lem:primal-dual}
    An online primal dual algorithm is $\Gamma$-competitive if we have:
    \begin{itemize}
        \item \emph{Approximate dual feasibility in expectation:~}
            $\forall (u, v) \in E,~\E \big[ \alpha_u \big] + \E \big[ \alpha_v \big] \ge \Gamma$;
        \item \emph{Reverse weak duality in expectation:~}
            $P \ge \E \big[ D \big]$.
    \end{itemize}
\end{lemma}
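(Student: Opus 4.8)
The plan is to recover the guarantee from the two hypotheses by a standard weak-duality computation, being careful about where expectations are taken. Recall from the preliminaries that, with $x_{uv}$ set to the probability that edge $(u,v)$ is selected, the primal objective $P = \sum_{(u,v)\in E} x_{uv}$ equals $\E[\alg]$, the expected size of the matching produced, while the dual objective $D = \sum_{u\in V}\alpha_u$ is a random variable depending on the algorithm's internal randomness. By the reverse-weak-duality hypothesis and linearity of expectation, $\E[\alg] = P \ge \E[D] = \sum_{u\in V}\E[\alpha_u]$.

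Next I would turn the \emph{expected} dual variables into a single deterministic feasible dual solution. Set $\beta_u \eqdef \E[\alpha_u]/\Gamma$ for each $u\in V$, the case $\Gamma = 0$ being trivial. Since an online primal dual algorithm maintains a nonnegative dual assignment at all times, $\alpha_u \ge 0$ pointwise, hence $\beta_u \ge 0$; and for every edge $(u,v)\in E$ the approximate-dual-feasibility hypothesis gives $\beta_u + \beta_v = \tfrac{1}{\Gamma}\big(\E[\alpha_u] + \E[\alpha_v]\big) \ge 1$. Thus $(\beta_u)_{u\in V}$ is feasible for the dual LP displayed above.

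Finally I would invoke weak LP duality together with the standard fact that the primal LP relaxes maximum matching: letting $x^\star$ be the indicator vector of a maximum matching of $G$, $x^\star$ is primal-feasible, so weak duality yields $\opt = \sum_{(u,v)\in E} x^\star_{uv} \le \sum_{u\in V}\beta_u$. Multiplying by $\Gamma$ gives $\sum_{u\in V}\E[\alpha_u] = \Gamma\sum_{u\in V}\beta_u \ge \Gamma\cdot\opt$, and chaining with the inequality from the first step yields $\E[\alg] \ge \Gamma\cdot\opt$, i.e.\ $\Gamma$-competitiveness.

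I do not expect a real obstacle: the argument is essentially ``take expectations, rescale, apply weak duality.'' The only points needing care are that expectation commutes with the finite sum defining $D$ (immediate), that the rescaled expected duals are genuinely feasible — which relies on $\alpha_u \ge 0$ holding pointwise, part of what it means to maintain a dual assignment, rather than merely in expectation — and that only weak duality against the integral optimum is used, so strong LP duality is never invoked.
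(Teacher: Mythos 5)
Your argument is correct and is exactly the standard online primal--dual argument: scale the expected duals by $1/\Gamma$, observe that the scaled vector $(\E[\alpha_u]/\Gamma)_u$ is feasible for the dual LP (using that each $\alpha_u$ is maintained nonnegative pointwise), invoke weak LP duality against the indicator of an optimal matching, and chain with reverse weak duality. Note that the paper does not reprove Lemma~\ref{lem:primal-dual} but simply cites \citet{soda/DevanurJK13}, so there is no paper-internal proof to contrast with; your write-up matches the cited reference's approach.
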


The algorithms in this paper will satisfy reverse weak duality in expectation with equality.
This is because whenever an edge $(u, v)$ is matched by our algorithms, the increment in matching size is split between the dual variables $\alpha_u$ and $\alpha_v$ of the two endpoints.

\section{Balanced Ranking} \label{sec:balanced-ranking}

This section presents the \br algorithm, which is a hybrid algorithm building on both \ranking and \wtf for \fomp on bipartite graphs, and prove Theorem~\ref{thm:balanced-ranking}.

\subsection{Matching with Ranks and Lookahead Water Levels}
\label{subsec:predicting}

Recall the primal dual interpretations of \ranking and \wtf as follows.
\ranking fixes a nondecreasing function $g$, and draws a random rank $y_u \in [0, 1]$ for each vertex $u$. At the deadline of each vertex $u$, if $u$ is not matched yet the algorithm matches it to its neighbor $v$ with the largest offer $1 - g(y_v)$.
\wtf maintains the matched fraction $x_u \in [0, 1]$ of each vertex $u$, a.k.a.\ its water level, and matches $u$ fractionally to the neighbors with the largest offers $1 - f(x_v)$ per unit of match for some nondecreasing function $f$.
Hence, a natural hybrid algorithm is to define the offer of each vertex $v$ to be $1 - g(y_v) - f(x_v)$, and to match the neighbor with the largest offer.
Here, the water level $x_v$ in a randomized integral algorithm is the probability that $v$ is matched;
this is equivalent to the probability that it is \emph{passive} since any relevant $v$ has not reached its deadline.

\paragraph{Lookahead Water Levels.}
Observe, however, the water levels change over time.
Therefore, we need to further elaborate at what time we evaluate the water levels $x_v$'s in the hybrid algorithm.
Suppose we are to match a vertex $u$ which stays unmatched by its deadline.
The first instinct may be to use the current water levels right \emph{before} the deadline of $u$.
Surprisingly, the attempt fails according to our analysis.
We instead consider the water levels right \emph{after} the deadline of $u$, which we call the \emph{lookahead water levels}.
Intuitively, balancing the lookahead water levels keeps as many options available as possible to hedge against all future possibilities.

To avoid confusion, we use $x_v^{(u)}$ to denote $v$'s water level right after $u$'s deadline. It exactly equals to the probability that $v$ is passive after $u$'s deadline. 




\paragraph{Computing Lookahead Water Levels.}
The algorithm is still incomplete as it involves circular definitions.
The lookahead water levels $x_v^{(u)}$'s depend on the matching decision at $u$'s deadline, which is made based on the lookahead water levels $x_v^{(u)}$'s.
Next we argue this is not only well defined but further efficiently computable up to high accuracy. Assuming:
\begin{itemize}
    \item $f$ is $1$-Lipschitz i.e., $f(x) - f(y) \le x-y$ for any $x \ge y \in [0, 1]$;
    \item $g$ is $\frac{1}{100}$-reverse Lipschitz, i.e., $g(x) - g(y) \ge \frac{x-y}{100}$ for any $x \ge y \in [0, 1]$.
\end{itemize}

The constants $1$ and $\frac{1}{100}$ are arbitrary so long as the former is not too small and the latter is not too large.
They are only for the convenience in the definition of the algorithm and are not binding constraints in our analysis.
Even if not stated explicitly, there is some optimal choice of $f$ and $g$ in our competitive analysis with the above Lipschitz and reverse Lipschitz properties.

\begin{lemma} \label{lem:lookahead-water level}
    Suppose the algorithm is well defined before $u$'s deadline.
    In $\poly(|V|, \frac{1}{\epsilon})$ time, we can compute $\widehat{f}_v \in [0, 1]$ for all $v \in V$, such that whenever $u$ stays unmatched by its deadline, matching $u$ to vertex $v$ with the largest $1 - g(y_v) - \widehat{f}_v$ leads to water levels $x_v^{(u)}$'s with $\widehat{f}_v - \epsilon \le f(x_v^{(u)}) \le \widehat{f}_v$.
\end{lemma}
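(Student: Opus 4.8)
The plan is to recast the lookahead water levels as the solution of a self-consistency condition that involves only $u$'s neighbors $N(u)$, to observe that this condition has a monotone ``gross-substitutes'' structure, and to compute an approximate solution by a one-sided coordinate descent that provably halts after polynomially many rounds. The key reduction is that $u$'s matching decision at its deadline affects only which neighbor in $N(u)$ is matched at that instant; hence only the values $\{x_v^{(u)} : v \in N(u)\}$ depend on it, while every other lookahead water level, and the events ``$u$ is still unmatched at its deadline'' and ``$v$ is available at $u$'s deadline'' for $v \in N(u)$, are already pinned down by the decisions at earlier deadlines, which the hypothesis supplies. Writing $\widehat{f} = (\widehat{f}_v)_{v \in N(u)}$ for a candidate vector and $x_v^{(u)}(\widehat{f})$ for the lookahead water level produced when $u$'s rule at its deadline is ``if unmatched, match the available neighbor maximizing $1 - g(y_v) - \widehat{f}_v$'', we seek $\widehat{f}$ with $\widehat{f}_v - \epsilon \le f\bigl(x_v^{(u)}(\widehat{f})\bigr) \le \widehat{f}_v$ for all $v \in N(u)$; the remaining coordinates are unaffected and read off directly. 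Since $x_v^{(u)}(\widehat{f})$ is merely a probability over the random ranks, it can be estimated to any desired accuracy in polynomial time by Monte-Carlo simulation of the already-well-defined process up to $u$'s deadline.

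I would then establish two structural facts. First, a \emph{substitutes} property: lowering a single coordinate $\widehat{f}_v$ makes $v$ weakly more attractive to $u$, hence weakly increases $x_v^{(u)}$ and weakly decreases $x_w^{(u)}$ for every $w \ne v$. Second, a \emph{bounded-sensitivity} property: because $g$ is $\tfrac{1}{100}$-reverse-Lipschitz, each $g(y_w)$ has bounded density, so for any fixed $\widehat{f}$ the probability that two neighbors' adjusted offers land within $\delta$ of each other is $O\bigl(|N(u)|\,\delta\bigr)$; together with the $1$-Lipschitz property of $f$, perturbing $\widehat{f}_v$ by $\eta$ changes $x_v^{(u)}$ by $O\bigl(|N(u)|\,\eta\bigr)$ and changes each $x_w^{(u)}$ with $w \ne v$ by $O(\eta)$. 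A byproduct is that exact argmax ties have probability zero, which is exactly why $u$'s decision is well defined to begin with.

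The algorithm is coordinate descent on the gaps $G_v(\widehat{f}) := \widehat{f}_v - f\bigl(x_v^{(u)}(\widehat{f})\bigr)$: start from $\widehat{f}_v = 1$ for all $v$, so that every $G_v \ge 0$ (using $f \le 1$); while some $G_v > \epsilon$, decrease that $\widehat{f}_v$ by a step $\eta = \Theta\bigl(\epsilon/|N(u)|\bigr)$. By the substitutes property such a step can only lower $G_v$ and can only raise the remaining $G_w$, so the invariant $G_v \ge 0$ survives; by bounded sensitivity and the choice of $\eta$, a single step moves any $G_w$ by less than $\epsilon$, so a coordinate that has just been brought below $\epsilon$ is still nonnegative, and upon termination every gap lies in $[0,\epsilon]$ --- which is precisely the stated guarantee. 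The procedure halts within $\poly(|V|,\tfrac{1}{\epsilon})$ rounds because the potential $\sum_{v \in N(u)} \widehat{f}_v$ lies in $[0,|N(u)|]$ and strictly decreases by $\eta$ at every step; each round costs one polynomial-time re-estimation of the water levels, performed to accuracy $\epsilon/\poly(|V|)$ so that the estimation error is swallowed by the slack, with the errors accumulated over all $2|V|$ deadlines handled the same way.

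The main obstacle is exactly this interplay between bounded sensitivity and termination: the naive fixed-point map $\widehat{f} \mapsto f\bigl(x^{(u)}(\widehat{f})\bigr)$ has a Lipschitz constant that grows with $|N(u)|$, hence is not a contraction and a plain iteration need not converge --- one must forgo iterating to a fixed point and instead lean on the substitutes monotonicity to run the one-sided descent, whose progress is certified by the monotone potential $\sum_v \widehat{f}_v$ rather than by distance to a fixed point. The $\tfrac{1}{100}$-reverse-Lipschitz hypothesis on $g$ is what rules out the pathological ties that would otherwise break both the sensitivity estimate and the well-definedness of $u$'s matching decision.
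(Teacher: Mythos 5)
Your proposal is correct and follows the same overall strategy as the paper's proof: start from the all-ones overestimate, use the substitutes (monotone) structure to maintain the overestimate invariant under one-sided coordinate decreases, and bound the iteration count by the monotone potential $\sum_v \widehat{f}_v$. The one place you diverge is the sensitivity estimate: you derive a factor-$|N(u)|$ bound on how much $x_v^{(u)}$ can jump by union-bounding over near-ties among the neighbors' offers, which forces the step size $\eta = \Theta(\epsilon/|N(u)|)$. The paper gets a dimension-free bound by the following cleaner argument: condition on $\yminus{v}$, so the best competing offer is a fixed number and $v$ is chosen iff $y_v$ falls below a single deterministic threshold; decreasing $\widehat{f}_v$ by $\frac{\epsilon}{101}$ raises that threshold in the $g$-scale by the same amount, hence in the $y_v$-scale by at most $\frac{100\epsilon}{101}$ by reverse-Lipschitzness of $g$, and since $y_v$ is uniform this bounds the change in $\Pr[v\text{ chosen}]$ by $\frac{100\epsilon}{101}$ for every fixed $\yminus{v}$. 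This lets the paper take a constant step $\frac{\epsilon}{101}$ and terminate in $O(|V|/\epsilon)$ rounds rather than $\mathrm{poly}(|V|)/\epsilon$; both are polynomial, so your argument suffices for the lemma as stated, but the conditioning trick is worth internalizing as it avoids the union bound and the extraneous $|N(u)|$ dependence entirely.
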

\begin{proof}
    We start with the trivial overestimates $\widehat{f}_v = 1$ for all $v \in V$.
    Then, we iteratively refine them while keeping the invariant that they are overestimates.
    That is, whenever $u$ stays unmatched by its deadline, matching $u$ to the vertex $v$ with the largest $1 - g(y_v) - \widehat{f}_v$ leads to water levels $x_v^{(u)}$'s such that $f(x_v^{(u)}) \le \widehat{f}_v$.
    Finally, we bound the time complexity by proving that the sum of the estimates, i.e., $\sum_{v \in V} \widehat{f}_v$, decreases at least linearly.

    Concretely, whenever there is a vertex $v$ with $\widehat{f}_v - f(x_v^{(u)}) > \epsilon$, decrease the estimate $\widehat{f}_v$ by $\frac{\epsilon}{101}$.
    Here, we can compute $x_v^{(u)}$ up to high enough accuracy from sample runs of the algorithm by standard concentration bounds.
    In doing so, the water level $x_v^{(u)}$ increases and the water levels $x_w^{(u)}$ for all $w \notin \{u, v\}$ weakly decreases.
    
    We first argue that the invariant still holds.
    It suffices to consider $v$ because for any other vertex the water level weakly decreases and the estimate stays the same.
    Next we show that the water level of $v$ increases by at most $\frac{100 \epsilon}{101}$ and thus, maintains the invariant.
    Equivalently, we claim that the probability $v$ is matched at $u$'s deadline increases by at most $\frac{100 \epsilon}{101}$, which is true even conditioned on the ranks $\yminus{v}$ of the other vertices.
    By decreasing $\widehat{f}_v$ by $\frac{\epsilon}{101}$, the threshold $g(y_v)$ above which $v$ is picked by $u$, increases by the same amount.
    This in turn increases the threshold rank $y_v$ by at most $\frac{100\epsilon}{101}$ because $g$ is $\frac{1}{100}$-reverse Lipschitz.
    Since $y_v$ is uniform from $[0, 1]$, we conclude that the probability that $v$ is matched at $u$'s deadline increases by at most $\frac{100\epsilon}{101}$, conditioned on \emph{any} $\yminus{v}$.
    
    Finally, the algorithm terminates in $O(\frac{|V|}{\epsilon})$ iterations, because the sum of the estimates, i.e., $\sum_{v \in V} \widehat{f}_v$, decreases by at at least $\Omega(\epsilon)$ per iteration and it is between $0$ and $|V|$.
\end{proof}

Our analysis degrades gracefully in the error term $\epsilon$ in the above lemma.
For simplicity, the rest of the section assumes the limit case when $f(x_v^{(u)}) = \widehat{f}_v$.
See Algorithm~\ref{alg:balanced_ranking}.



\begin{algorithm}[t]
    \caption{\br, with Dual Assignments}
    \label{alg:balanced_ranking}
    \begin{algorithmic}
        \STATE \textbf{at $u$'s arrival:}
        \STATE\hspace{\algorithmicindent}
            draw rank $y_u \in [0, 1]$ uniformly at random\\[1ex]
        \STATE \textbf{at $u$'s deadline, if it is unmatched:}
        \STATE\hspace{\algorithmicindent}
            compute the lookahead water levels $x_v^{(u)}$ for all $v \in V$ \hspace*{\fill} (Lemma~\ref{lem:lookahead-water level})
        \STATE\hspace{\algorithmicindent}
            match $u$ to the unmatched neighbor $v$ with the largest $1 - g(y_v) - f(x_v^{(u)})$
        \STATE\hspace{\algorithmicindent}
            let $\alpha_u = 1 - g(y_v) - f(x_v^{(u)})$ and $\alpha_v = g(y_v) + f(x_v^{(u)})$
    \end{algorithmic}
\end{algorithm}

\subsection{Notations and Basic Properties} \label{sec:br_property}

Our analysis of \br builds on the approach of \citet{stoc/HuangKTWZZ18, soda/HuangPTTWZ19}.
This section adopts some notations from their analysis, and establishes several basic properties of \ranking that continue to hold for \br.


In the following, for any instance $G$ and any realization of ranks $\vec{y}$, let $M_G(\vec{y})$ be the matching produced by \br.
Let $\Gminus{u}$ be the instance with $u$ removed from $G$.
If $G$ is clear in the context, we omit the subscript to write $M_G(\vec{y})$ as $M(\vec{y})$, and $M_{\Gminus{u}}(\yminus{u})$ as $M(\yminus{u})$.
We remark that when running \br on instance $\Gminus{u}$, the lookahead water levels remain defined by instance $G$.
In other words, in the thought experiment that removes $u$, we assume that the ranks and lookahead water levels of vertices other than $u$ remain unchanged.

\begin{definition}[Active and Passive]
    If an edge $(u, v)$ is matched in $M(\vec{y})$ at $u$'s deadline, we say that $u$ is \emph{active} and $v$ is \emph{passive}.
\end{definition}

The roles of active and passive vertices in the analysis are similar to the online and offline vertices respectively in the model of \citet{stoc/KarpVV90}.

A main structural property of \ranking is the alternating path property that characterizes how the matching changes when the rank of a vertex changes.
It also holds to \br.

\begin{lemma}[Alternating Path] \label{lem:alternating-path}
	In a bipartite instance $G$, if $u$ is matched in $M(\vec{y})$, no neighbor of $u$ gets better from $M(\vec{y})$ to $M(\yminus{u})$.
	Here, passive is better than active, and active is better than unmatched.
	Conditioned on being passive, it is better to match a vertex with an earlier deadline.
	Conditioned on being active, it is better to match a vertex $v$ with larger $1- g(y_v)-f(x_v)$.
\end{lemma}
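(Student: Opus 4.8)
The plan is to prove the Alternating Path Lemma by induction on the number of events (arrivals and deadlines) after $u$'s deadline, tracking how the matching and the lookahead water levels evolve when we pass from $G$ to $\Gminus{u}$. The key observation, as in the analysis of \ranking by \citet{stoc/HuangKTWZZ18, soda/HuangPTTWZ19}, is that removing a matched vertex $u$ triggers a single alternating path: the vertex $w$ that $u$ was matched to in $M(\vec{y})$ becomes free, may ``steal'' some other vertex's partner, which in turn displaces someone else, and so on. Since the graph is bipartite, this chain alternates between the two sides, and one side consists of active vertices while the other consists of passive vertices. I would first set up the right notion of ``better'': for a fixed vertex, being passive dominates being active dominates being unmatched; among passive outcomes, an earlier deadline is better; among active outcomes, a larger residual offer $1 - g(y_v) - f(x_v)$ is better. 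The goal is to show no neighbor of $u$ improves along this path.

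First I would pin down that the lookahead water levels are treated as exogenous in the thought experiment — the excerpt explicitly says that when running \br on $\Gminus{u}$ we keep the ranks and lookahead water levels of the other vertices as defined by $G$. This is crucial: it means the ``offer'' $1 - g(y_v) - f(x_v^{(\cdot)})$ attached to each vertex $v$ at each deadline is a fixed quantity independent of the removal of $u$, so the only thing that changes is \emph{which} vertices are still available at each deadline. With that in hand, the induction becomes a clean combinatorial argument about a greedy matching process with fixed priorities (ranks plus lookahead corrections), which is exactly the setting where the classical \ranking alternating-path argument applies. I would process the events in $G$ and $\Gminus{u}$ in lockstep starting from the step right after $u$'s deadline, maintaining the invariant that the two runs agree everywhere except along a current alternating path hanging off the displaced vertex, and that every vertex on the path is weakly worse in $\Gminus{u}$ than in $G$.

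The inductive step splits by the type of event. For an arrival, nothing changes in the matching, so the invariant is trivially preserved. For the deadline of a vertex $t$ that is unmatched in both runs at that point: if $t$'s set of available neighbors is the same in both runs, it makes the same choice; if the two available sets differ, they differ precisely by the head of the current alternating path, and one then checks that $t$ either makes the same choice (path unchanged) or picks up the displaced vertex, extending the path by one active/passive pair while keeping everyone on it weakly worse. The statements about being passive with an earlier deadline, and active with a larger offer, fall out because the displaced vertex, when it becomes active instead of passive, is matched later (at the deadline of whoever steals it) or, when re-matched as passive, is matched at a later deadline; and when a vertex stays active but switches partner, its new partner has a smaller offer because the greedy rule picks the largest available offer and the previously-chosen one is no longer available. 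One also needs the degenerate cases where the alternating path ``dies'' — the displaced vertex finds no new partner and ends up unmatched — which only makes that vertex worse, consistent with the claim.

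I expect the main obstacle to be the bookkeeping around the lookahead water levels in the \emph{honest} execution of \br on $G$ (as opposed to the thought experiment). Within a single application of the lemma the water levels are frozen, but the lemma is stated for the matching $M(\vec{y})$ produced by \br, whose definition involves $x_v^{(u)}$, the water level after $u$'s own deadline — a quantity computed via the fixed-point procedure of Lemma~\ref{lem:lookahead-water level}. I would need to argue that the alternating-path structure is compatible with this self-referential definition, i.e., that the comparison ``larger $1 - g(y_v) - f(x_v)$ is better'' uses the \emph{same} snapshot of water levels on both sides of the comparison, so that the greedy-priority argument is legitimate. The cleanest way is to observe that at the moment of $u$'s deadline in $G$, all relevant lookahead water levels $x_v^{(\cdot)}$ for the vertices involved are already determined by the prefix of the instance up to and including $u$'s deadline, hence are genuinely fixed constants for the purposes of the path argument; removing $u$ in the thought experiment changes the realized matching but, by fiat, not these constants. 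Once that is firmly established, the remainder is the standard \ranking alternating-path induction adapted verbatim, and I would cite the corresponding argument in \citet{stoc/HuangKTWZZ18, soda/HuangPTTWZ19} for the parts that carry over unchanged.
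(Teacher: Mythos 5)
Your argument is sound and takes a genuinely different route from the paper's. You propose a lockstep simulation of the two runs, on $G$ and $\Gminus{u}$, inducting on the event sequence and maintaining an alternating-path invariant as arrivals and deadlines unfold; the paper instead inducts on the number of vertices, observing that $M(\vec{y}) = M(\vec{y'})\cup\{(u,u_1)\}$, where $\vec{y'}$ drops both $u$ and its partner $u_1$, and then applying the inductive hypothesis to $u_1$ in $\Gminus{u}$ and stitching the two paths together. Both arguments hinge on the same crucial observation you correctly isolate: the lookahead water levels $x_v^{(w)}$ are frozen constants when passing to $\Gminus{u}$, so the priority ordering at every deadline is unchanged and the standard Ranking-style greedy comparison carries over verbatim. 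Your lockstep approach is more operational, while the paper's vertex-removal recursion is more compact and avoids having to track the evolving head of the path through arbitrarily interleaved arrivals and deadlines. One imprecision to correct: you say the lockstep starts ``right after $u$'s deadline,'' but if $u$ is matched passively (at some $w$'s deadline that strictly precedes $u$'s own deadline), the two runs already diverge at $w$'s deadline; the lockstep must begin at the step where $u$ is first matched in $M(\vec{y})$, otherwise the invariant is not established at the starting point.
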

\begin{proof}
	Recall that at the deadline of a vertex $u$, it chooses the available neighbor with the largest $1-g(y_v)-f(x_v^{(u)})$ by Balanced Ranking. Within the proof, we only use the property that $f(x_v^{(u)})$ is a globally fixed quantity that does not depends on the realization of the ranks $\vec{y}$. In order words, Balanced Ranking has the property that at the deadline of any vertex $w$, if $z_1$ has higher priority than $z_2$ in $G$, i.e. $1-g(y_{z_1})-f(x_{z_1}^{(w)}) > 1-g(y_{z_2})-f(x_{z_2}^{(w)})$, then $z_1$ remains having higher priority than $z_2$ in $G_{-u}$. This is the crucial property of Ranking for Lemma 2.5 of \cite{stoc/HuangKTWZZ18} to hold. The remaining of the proof is almost verbatim to that of \cite{stoc/HuangKTWZZ18}.
	
	We prove that the symmetric difference between the matchings $M(\vec{y})$ and $M(\vecmv[u])$ is an alternating path $(u_0=u,u_1,\cdots,u_l)$ such that
	\begin{enumerate}
		\item for all even $i<l$, $(u_i,u_{i+1}) \in M(\vec{y})$; for all odd $i<l$, $(u_i,u_{i+1}) \in M(\vecmv[u])$;
		\item from $M(\vec{y})$ to $M(\vecmv[u])$, vertices $\{u_1,u_3,\cdots\}$ get worse, vertices $\{u_2,u_4,\cdots\}$ get better.
	\end{enumerate}
	We prove the statement by mathematical induction on $n$, the total number of vertices. For the base case when $n=2$, the symmetric difference is a single edge $(u,u_1)$ and the second statement holds since $u_!$ is matched in $M(\vec{y})$ but unmatched with $u$ removed.
	
	Suppose the lemma holds for $1,2,\cdots,n-1$. We consider the case when there are $n$ vertices. Let $u_1$ be matched to $u$ in $M(\vec{y})$. If we remove both $u,u_1$ from $G$ (let $\vec{y'}=[0,1]^{V\backslash \{u,u_1\}}$ be the resulting vector), then we have $M(\vec{y})=M(\vec{y'}) \cup \{(u,u_1)\}$.
	
	If $u_1$ is unmatched in $M(\vecmv[u])$, we have $M(\vecmv[u]) = M(\vec{y'})$ and the lemma holds. Now suppose $u_1$ is matched in $M(\vecmv[u])$.
	
	By definition $\vect{y'}$ is obtained by removing $u_1$ (which is matched in $\vecmv[u]$) from $\vecmv[u]$.
	By induction hypothesis, the symmetric difference between $M(\vecmv[u])$ and $M(\vect{y'})$ is an alternating path $(u_1,\ldots,u_l)$ such that
	(a) for all odd $i<l$, we have $(u_i,u_{i+1})\in M(\vecmv[u])$; for all even $i<l$, we have $(u_i,u_{i+1})\in M(\vect{y'})$;
	(b) from $M(\vecmv[u])$ to $M(\vect{y'})$, vertices $\{u_2,u_4,\ldots\}$ get worse, vertices $\{u_3,u_5,\ldots\}$ get better.
	
	Hence the symmetric difference between $M(\vecy)$ and $M(\vecmv[u])$ is the alternating path $(u,u_1,\ldots,u_l)$ (recall that $M(\vecy) = M(\vec{y'})\cup \{ (u,u_1) \}$).
	Statement (a) holds, and statement (b) holds for vertices $\{u_2,\ldots,u_l\}$.
	
	Now consider vertex $u_1$, which is matched to $u$ in $M(\vecy)$, and matched to $u_2$ in $M(\vecmv[u])$.
	
	If $u_1$ is passively matched (by $u$) in $M(\vecy)$, then we know that $u$ has an earlier deadline than $u_1$.
	Hence in $M(\vecmv[u])$, either $u_1$ is active, or passively matched by some $u_2$ with a deadline later than $u$.
	In other words, $u_1$ gets worse from $M(\vecy)$ to $M(\vecmv[u])$.
	
	If $u_1$ matches $u$ actively in $M(\vecy)$, then we know that $u_1$ has an earlier deadline than $u$.
	Hence when $u_1$ is considered in $\vecmv[u]$, the set of unmatched vertices (except for $u$) is identical as in $M(\vecy)$.
	Consequently, $u_1$ actively matches some vertex $u_2$ with $1-g(y_{u_2})-f(x_{u_2}^{(u_1)}) \le 1-g(y_u)-f(x_u^{(u_1)})$ (otherwise $u_1$ will not match $u$ in $M(\vecy)$).
	In other words, $u_1$ gets worse from $M(\vecy)$ to $M(\vecmv[u])$.	
\end{proof}

Next, we define the an important set of concepts called marginal ranks.

\begin{definition}[Marginal Rank]
    For any instance $G$, any vertex $u$, and any ranks $\yminus{u}$ of other vertices, the marginal rank of $u$ w.r.t.\ $G$ and $\yminus{u}$, denoted by $\marginal_u(G, \yminus{u})$, is the largest rank of $u$ such that it is passive, i.e., $\marginal_u(G, \yminus{u}) = \sup \big\{ y_u : \text{$u$ is passive in $M(y_u, \yminus{u})$} \big\}$.
\end{definition}

For any pair of neighbors $(u,v)$ where $u$'s deadline is earlier than $v$'s, we focus on the instance up to the deadline of $u$. For simplicity, we assume $u$'s deadline to be the end of the instance and define the following marginal ranks with respect to the instance right after $u$'s deadline.

\begin{definition}[Marginal Ranks $\tau$ and $\gamma$]
    Fix any instance $G$, any edge $(u, v)$, and any ranks $\yminus{uv}$ of the vertices other than $u$ and $v$.
    Let $\tau = \marginal_u(\Gminus{v}, \yminus{uv})$ be the marginal rank of $u$ w.r.t.\ instance $\Gminus{v}$ with $v$ removed, and ranks $\yminus{uv}$.
    Similarly, let $\gamma = \marginal_v(\Gminus{u}, \yminus{uv})$.
\end{definition}

\begin{definition}[Marginal Rank $\theta$]
    Fix any instance $G$, any edge $(u, v)$ in which $u$ has an earlier deadline, any rank $y_u$ of $u$, and any ranks $\yminus{uv}$ of the vertices other than $u$ and $v$.
    Let $\theta(y_u) = \marginal_v(G, (y_u, \yminus{uv}))$ be the marginal rank of $v$ w.r.t.\ $G$ and ranks $(y_u, \yminus{uv})$.
\end{definition}

In fact we are only interested in $\theta(y_u)$ for $y_u > \tau$.
The next lemma states that it suffices to consider a single value $\theta$.

\begin{lemma} \label{lem:theta}
    There exists $\theta \ge \gamma$ such that $\theta(y_u) = \theta$ for any $y_u > \tau$.
\end{lemma}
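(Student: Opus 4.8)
The plan is to prove the statement in two parts: first, that $\theta(y_u)$ does not depend on the particular value of $y_u$ as long as $y_u > \tau$; second, that this common value $\theta$ is at least $\gamma$. Throughout I would fix the ranks $\yminus{uv}$ of all vertices other than $u$ and $v$, so that the only moving parts are the ranks $y_u$ and $y_v$. The key conceptual point is that $\theta(y_u) = \marginal_v(G, (y_u, \yminus{uv}))$ is the threshold on $y_v$ below which $v$ is passive in $M(y_u, y_v, \yminus{uv})$; since $u$ has an earlier deadline than $v$, everything that happens up to and including $u$'s deadline is a function of $y_u$ alone (not $y_v$, since $v$'s rank only matters at $v$'s deadline or at a deadline of a common-priority competitor, but here we track whether $v$ becomes passive, i.e. matched at its own deadline — wait, passive means matched at the neighbor's deadline, so $v$ passive means $v$ matched at some $w$'s deadline with $w$ a neighbor). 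Let me restate: $v$ being passive means $v$ is matched at the deadline of one of its neighbors $w \neq v$. So $\theta(y_u)$ is the supremum of $y_v$ for which this occurs.

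For the first part, I would argue that for $y_u > \tau$, the vertex $u$ is \emph{not passive} in $\Gminus{v}$ (by definition of $\tau$ as the marginal rank, and using the reverse-Lipschitz/monotonicity structure so the passive region is an interval $[0,\tau]$ — this is exactly the "single marginal rank" phenomenon already used for $u$). Intuitively, when $y_u > \tau$, vertex $u$ does not "consume" $v$ at $u$'s deadline in the relevant thought experiments, so the state of the graph at the moment just after $u$'s deadline — in particular which vertices are still available as potential matches for $v$ and with what priorities — is the same for all $y_u \in (\tau, 1]$. More precisely, I would invoke the alternating-path property (Lemma~\ref{lem:alternating-path}) applied to $u$: changing $y_u$ within the range where $u$ is active only reroutes an alternating path that does not improve any neighbor of $u$, and since we are above $\tau$, increasing $y_u$ further keeps $u$ active and keeps $v$'s situation monotone. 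The cleanest route is probably: run \br on $G$ with rank $y_u$, and on $\Gminus{v}$; since $y_u > \tau$, $u$ is active in $\Gminus{v}$, hence at $u$'s deadline in $G$ the vertex $u$ matches whatever it would match in $\Gminus{v}$ unless $v$ outranks that choice — but then $u$ matching $v$ would make $v$ passive, which is consistent. I would show that for any two values $y_u, y_u' \in (\tau, 1]$, the set of available neighbors of $v$ together with their priorities, at every deadline of a neighbor $w$ of $v$ with $w$'s deadline $\le v$'s deadline, is identical; hence the threshold on $y_v$ is identical, giving $\theta(y_u) = \theta(y_u') =: \theta$.

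For the second part, $\theta \ge \gamma$, I would compare the instance $\Gminus{u}$ (which defines $\gamma$) with $G$ at rank $y_u > \tau$ (which defines $\theta$). Removing $u$ can only \emph{help} $v$ and its neighbors (by the alternating-path lemma, since $u$ is matched, no neighbor of $u$ — in particular none of $v$'s relevant competitors — gets better when $u$ is added back; equivalently adding $u$ can only hurt them and hurt $v$). So the set of neighbors available to $v$, and their priorities, in $\Gminus{u}$ weakly dominates that in $G$; hence the threshold on $y_v$ for $v$ to be passive is weakly larger in $\Gminus{u}$, i.e. $\gamma \ge \theta$ — which is the wrong direction, so I need to be careful about the sign convention here. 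Actually: passive is the \emph{good} outcome, so if $v$ has \emph{better} options in $\Gminus{u}$, the passive-threshold should be larger, giving $\gamma \ge \theta$; but the lemma claims $\theta \ge \gamma$. This means the correct reading is that adding $u$ (going from $\Gminus{u}$ to $G$ with $y_u > \tau$) and then \emph{also} having $u$ active-but-not-consuming-$v$ actually gives $v$ weakly more room — because when $u > \tau$, $u$ does not take $v$, yet $u$'s presence may remove a \emph{competitor} of $v$ at an earlier deadline. This asymmetry is the crux, and working it out via a careful alternating-path argument comparing the two runs is where the real content lies.

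The main obstacle I anticipate is exactly this last sign/monotonicity subtlety: showing $\theta \ge \gamma$ rather than the naive $\theta \le \gamma$. It hinges on the fact that $y_u > \tau$ means $u$ is \emph{not} passive in $\Gminus{v}$, so in $G$ the vertex $u$ "absorbs" some competition among $v$'s neighbors (at deadlines before $v$'s) without itself taking $v$, and one must track an alternating path rooted at $u$ and show it never worsens $v$'s threshold. I would handle this by a direct induction on the number of vertices mirroring the proof of Lemma~\ref{lem:alternating-path}, carrying the invariant that the prefix of the run up to each neighbor's deadline evolves monotonically in the right direction; the bipartiteness and the "globally fixed $f(x^{(w)}_\cdot)$" property are what keep the priorities consistent across the two instances, just as in that earlier proof. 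The remaining steps (the passive region being an interval, translating "availability dominance" into "threshold inequality") are routine given the monotonicity of $g$ and the structure already established.
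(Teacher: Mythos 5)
Your plan for the first half (showing $\theta(y_u)$ does not depend on $y_u$) is aiming in roughly the right direction, but the second half contains a genuine misreading of Lemma~\ref{lem:alternating-path}, and this misreading is what creates the ``main obstacle'' you anticipate --- that obstacle does not actually exist.

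The lemma says that if $u$ is matched in $M(\vecy)$, then no neighbor of $u$ gets \emph{better} from $M(\vecy)$ to $M(\vecmv[u])$, i.e., \emph{removing} a matched vertex makes its neighbors weakly \emph{worse}. You invert this (``Removing $u$ can only help $v$ \dots equivalently adding $u$ can only hurt them and hurt $v$''), derive the wrong inequality $\gamma \ge \theta$, correctly notice that it contradicts the lemma statement, and then attribute the discrepancy to a deep subtlety about $u$ ``absorbing competition.'' In fact, applying the lemma in the correct direction gives $\theta \ge \gamma$ immediately: for any $y_v < \gamma$, $v$ is passive in $\Gminus{u}$ by definition of $\gamma$; if $u$ is matched in $G$, the lemma says $v$ is weakly better in $G$ than in $\Gminus{u}$, hence still passive; if $u$ is unmatched in $G$, the two runs coincide and $v$ is again passive. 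Either way $v$ is passive in $G$ for all $y_v < \gamma$, so $\marginal_v(G,(y_u,\yminus{uv})) \ge \gamma$. No alternating-path induction is needed for this part, and no ``sign asymmetry'' is being exploited.

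For the first half, your intuition (``when $y_u > \tau$, the state visible to $v$ at $u$'s deadline is independent of $y_u$'') is essentially the paper's, but the proof you sketch glosses over the joint role of $\gamma$. The paper's route is more direct: it shows that for $y_u \in (\tau,1)$ \emph{and} $y_v \in (\gamma,1)$, \emph{both} $u$ and $v$ stay unmatched up to $u$'s deadline. The point is that if one of them were matched, say at the deadline of $w$, then at that step $w$ would still prefer it after deleting the other of the two (since deleting a candidate cannot change $w$'s top choice among the remaining), contradicting the definition of $\tau$ or $\gamma$. Once neither $u$ nor $v$ is ever picked before $u$'s deadline, the run up to that point is literally the run on $\Gminus{u,v}$, so the pool of available neighbors and their offers at $u$'s deadline is a fixed set independent of $(y_u,y_v)$ in this range; $u$ then picks $v$ iff $y_v$ is below a fixed threshold, and combining with the $y_v<\gamma$ case gives the common $\theta \ge \gamma$. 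Your version, phrased as ``changing $y_u$ only reroutes an alternating path,'' conflates rank perturbation (which Lemma~\ref{lem:alternating-path} does not directly address) with vertex removal (which it does), and also treats the availability of $v$'s competitors as something to track with an induction, when it is already pinned down once you know $u$ and $v$ are both unmatched before $u$'s deadline. So the high-level plan is salvageable, but as written, both halves need repair: the second half because of the reversed lemma, the first half because the independence of $y_u$ is not established unless you also condition on $y_v > \gamma$ and argue both vertices stay out of the matching simultaneously.
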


\begin{proof}
Consider the graph with $v$ removed and $y_u=\tau^+$. By the definition of $\tau$, $u$ remains unmatched before its deadline. Consider inserting $v$ with $y_v \in (\gamma,1)$. According to the definition of $\gamma$, $v$ must also be unmatched before $u$'s deadline. That is, for any $y_u \in (\tau, 1)$ and $y_v \in (\gamma,1)$ , both $u,v$ are unmatched before $u$'s deadline. Note that at this moment, the rank of $u$ does not play any role for its decision. Hence, there exists a common $\theta$ such that $v$ would matched by $u$ iff $y_v < \theta$.
\end{proof}

We remark that $\theta$ may be $1$, in which case $v$ is passive regardless of its rank $y_v$.
We will treat as a degenerate case and will handle it separately in the analysis (see Lemma~\ref{lem:bal_ranking_structure} and Figure~\ref{fig:bip2}).

The marginal ranks $\tau$, $\gamma$, and $\theta$ provide a characterization of the matching results of $u$ and $v$ as their ranks change.
This is summarized in the following lemma, whose counterpart for \ranking were shown as Lemma 4.1, 4.2, and 4.3 in \citet{soda/HuangPTTWZ19}.
See also Figure~\ref{fig:balandced_ranking} for a more visualized illustration.

\begin{figure}[t]
	\centering
	\begin{subfigure}{.45\textwidth}
		\centering
		\includegraphics[width=0.7\linewidth]{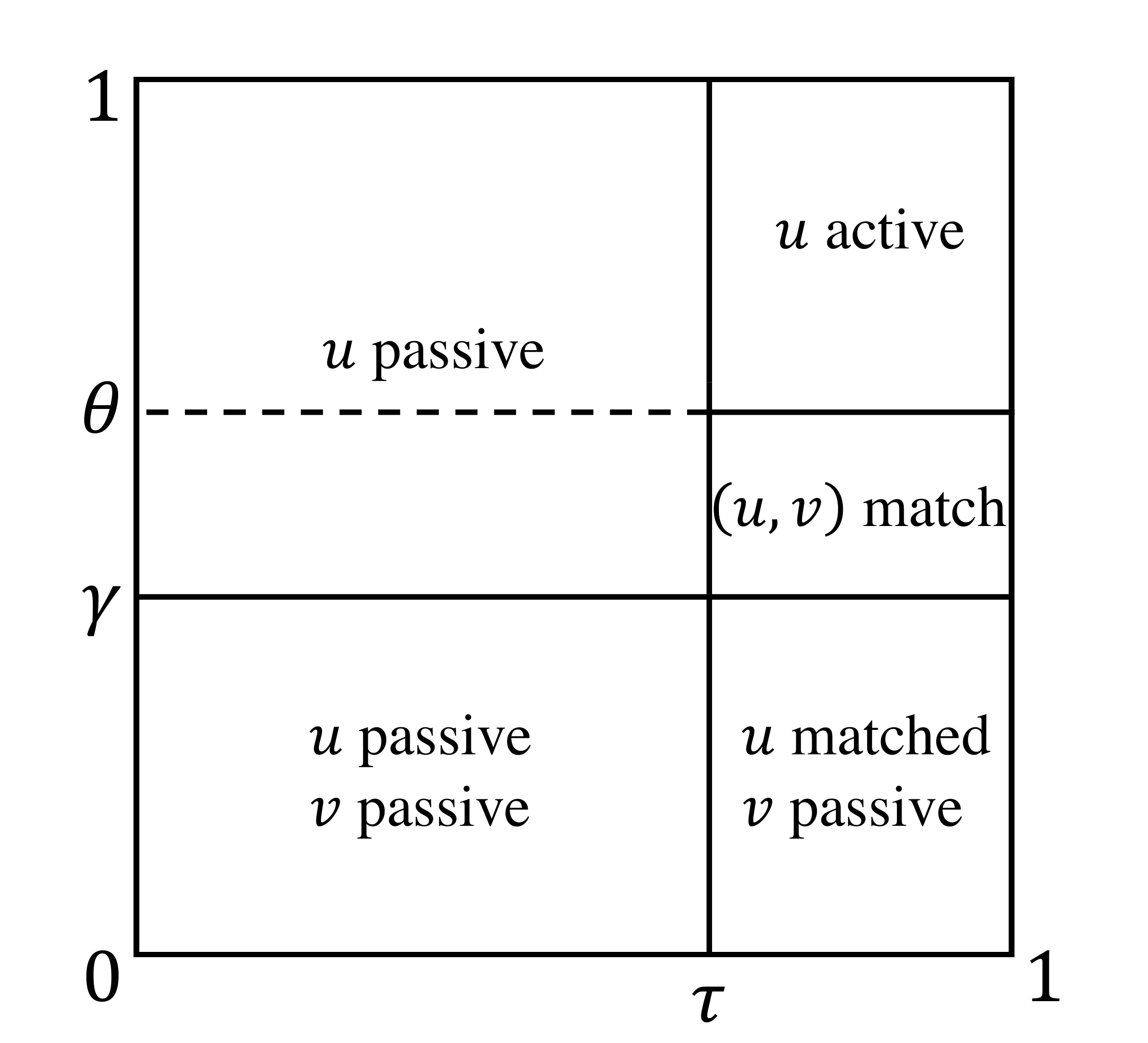}
		\vspace*{-5pt}
		\caption{General case when $\theta < 1$.}
		\label{fig:bip1}
	\end{subfigure}
	\begin{subfigure}{.45\textwidth}
		\centering
		\includegraphics[width=0.7\linewidth]{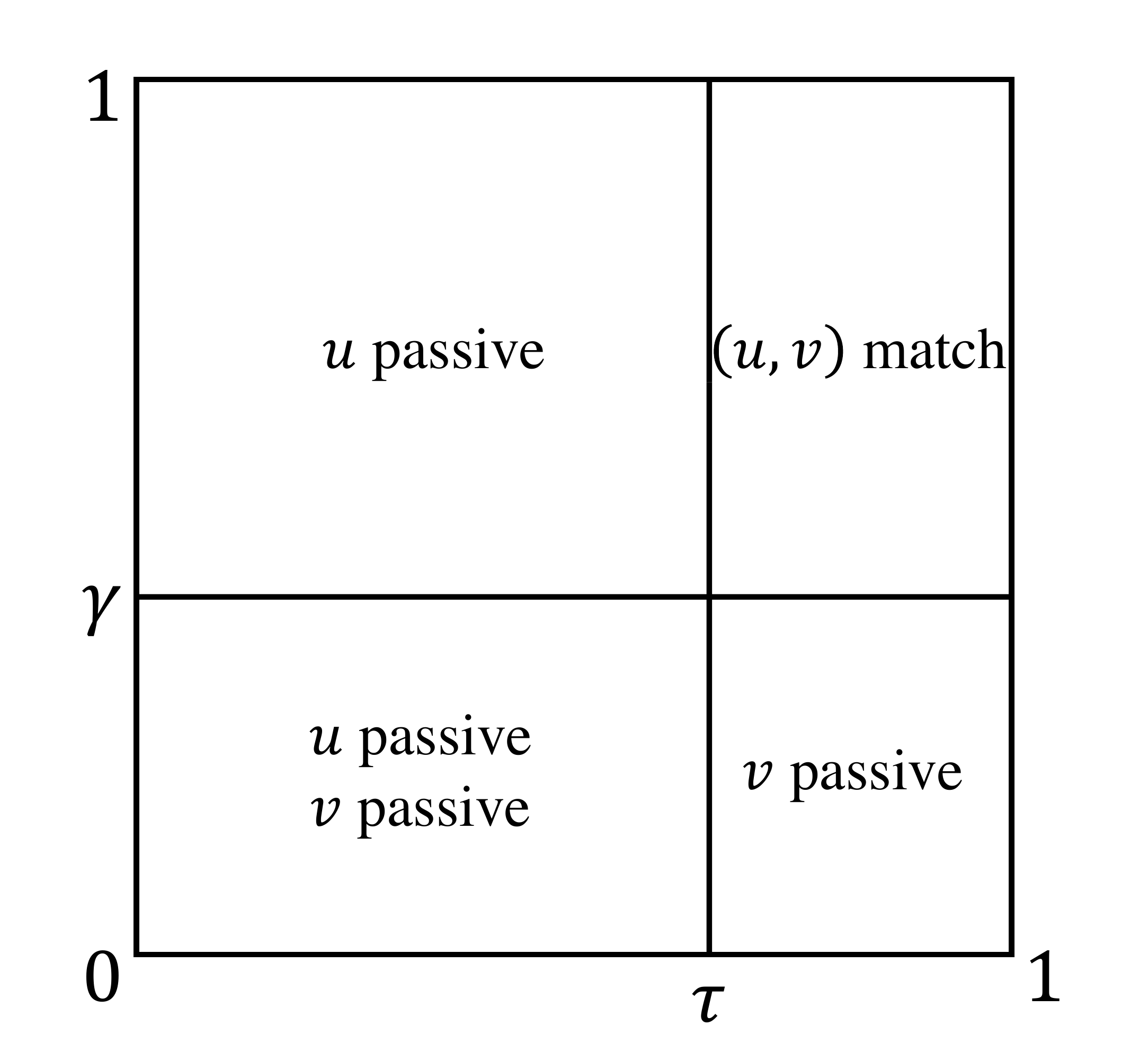}
		\vspace*{-5pt}
		\caption{Degenerate case when $\theta = 1$.}
		\label{fig:bip2}
	\end{subfigure}
	\caption{The horizontal and vertical axes correspond to $y_u,y_v$ respectively.}
	\label{fig:balandced_ranking}
\end{figure}


\begin{lemma}
    \label{lem:bal_ranking_structure}
    For any instance $G$, any edge $(u, v)$ where $u$ has an earlier deadline than $v$, any ranks $\vec{y}_{\text{-}uv}$ of other vertices, and the corresponding marginal ranks $\tau$, $\gamma$, and $\theta$, we have:
    \begin{compactitem}
        \item $u$ is passive when $y_u \in (0,\tau)$ and $y_v \in(0,1)$;
        \item $v$ is passive when $y_v \in (0,\gamma)$ and $y_u\in (0,1)$;
        \item for any $y_u \in (\tau,1)$, $v$ is matched if and only if $y_v \in (0,\theta)$;
        \item for any $y_u \in (\tau,1)$ and $y_v \in (\gamma, \theta)$, $u$ actively matches $v$;
        \item for any $y_u \in (\tau,1)$ and $y_v \in (\theta,1)$, $\alpha_u \ge 1-g(\theta)-f(x_v^{(u)})$, i.e., $u$'s gain is at least what $v$ offers at its marginal rank $y_v = \theta$;
        \item when $\theta < 1$, $u$ is matched when $y_u \in (\tau,1)$ and $y_v \in (0,\gamma)$; if we further have $u$ is active, then $\alpha_u \ge 1-g(\theta)-f(x_v^{(u)})$.
    \end{compactitem}
\end{lemma}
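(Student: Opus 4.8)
The plan is to establish the six bullet points essentially in order, leaning on the alternating-path lemma (Lemma~\ref{lem:alternating-path}), the definitions of the marginal ranks $\tau,\gamma,\theta$, and Lemma~\ref{lem:theta}. The conceptual heart is that the marginal ranks decompose the $(y_u,y_v)$-square into regions where the fate of $u$ and $v$ is determined, and within each region the alternating-path monotonicity pins down who is active/passive and lower-bounds the dual gains.

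First I would handle the two easy bullets. For ``$u$ is passive when $y_u\in(0,\tau)$'': by definition $\tau=\marginal_u(\Gminus v,\yminus{uv})$, so with $v$ removed and $y_u<\tau$, $u$ is passive; adding $v$ back can only help $u$'s neighbors by Lemma~\ref{lem:alternating-path} applied with the matched vertex being $v$ (note $v$ is matched whenever it is inserted into a graph where all its neighbors have arrived — one must check $v$ is matched, or else $M$ is unchanged and $u$ stays passive), hence $u$ remains passive (passive is the best state, so ``no neighbor gets better'' forces $u$ to stay passive). Symmetrically for ``$v$ is passive when $y_v\in(0,\gamma)$''. Here a small care is needed: the alternating-path lemma is stated for ``if $u$ is matched, no neighbor of $u$ gets better from $M(\vec y)$ to $M(\yminus u)$'' — so I would invoke it in the direction of \emph{inserting} $v$, i.e. comparing $M(\Gminus v,\cdot)$ with $M(G,\cdot)$, and read it as: inserting a vertex that ends up matched cannot make any of its neighbors worse.

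Next the ``$v$ is matched iff $y_v\in(0,\theta)$ for $y_u>\tau$'' bullet: this is exactly Lemma~\ref{lem:theta}, whose proof already shows that for $y_u>\tau$ and $y_v>\gamma$ both $u,v$ stay unmatched up to $u$'s deadline, so at $u$'s deadline $u$ picks $v$ iff $1-g(y_v)-f(x_v^{(u)})$ beats the best alternative, i.e. iff $y_v<\theta$; and for $y_v<\gamma$, $v$ is matched by the earlier bullet. Then ``$u$ actively matches $v$ when $y_u\in(\tau,1)$, $y_v\in(\gamma,\theta)$'': in this region, by the previous sentence both are unmatched at $u$'s deadline and $u$ selects $v$, so by definition $u$ is active and $v$ passive. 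For the two ``$\alpha_u\ge 1-g(\theta)-f(x_v^{(u)})$'' bullets (the region $y_v\in(\theta,1)$, and the region $y_v\in(0,\gamma)$ when $\theta<1$): I would argue that in each case, \emph{if $u$ is active}, then at $u$'s deadline the set of still-unmatched neighbors of $u$ in instance $G$ is at least as good as in the instance where $v$ had rank $\theta^-$ (using alternating-path monotonicity to compare across the change of $y_v$, since raising $y_v$ past $\theta$ only removes $v$ from $u$'s available set or leaves the rest weakly better), and in that comparison instance $u$ would have matched $v$ with gain $1-g(\theta)-f(x_v^{(u)})$; since $u$ greedily maximizes its offer, $\alpha_u$ in $G$ is at least this value. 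For $y_v\in(0,\gamma)$ with $\theta<1$, one must also first argue $u$ is matched at all, which follows because removing $v$ (which is matched there) cannot make $u$ worse, and in $\Gminus v$ with $y_u>\tau$... — here I would instead compare to the $y_v=\gamma^+$ boundary and use that $u$ is matched for $y_v\in(\gamma,\theta)$, then push $y_v$ down.

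The main obstacle I anticipate is the bookkeeping in the last two bullets: carefully justifying the inequality $\alpha_u\ge 1-g(\theta)-f(x_v^{(u)})$ requires comparing the run of the algorithm on $G$ with $y_v$ in different ranges, and invoking the alternating-path lemma in the right direction to ensure $u$'s menu of available neighbors at its deadline only degrades monotonically in $y_v$ — together with the subtlety that the lookahead water levels $f(x_\cdot^{(u)})$ are treated as globally fixed constants (as emphasized in the proof of Lemma~\ref{lem:alternating-path}), so that ``larger $1-g(y_v)-f(x_v^{(u)})$'' is a clean total order that the induction in Lemma~\ref{lem:alternating-path} respects. The degenerate case $\theta=1$ must be quarantined throughout, since then the region $y_v\in(\theta,1)$ is empty and $v$ is passive for all its ranks; Figure~\ref{fig:bip2} indicates this is handled separately.
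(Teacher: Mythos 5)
Your treatment of the first four bullets matches the paper's. The fifth bullet is also essentially right, though a simpler justification suffices: for $y_v\in(\gamma,1)$ the algorithm's run up to $u$'s deadline is literally identical (nobody ever matches $v$ there, so $y_v$ never influences a decision), hence $u$'s menu at its deadline is the same as at $y_v=\theta^+$, where $v$ is on the menu but $u$ picks some $z$; that immediately gives $\alpha_u=1-g(y_z)-f(x_z^{(u)})\ge 1-g(\theta)-f(x_v^{(u)})$, without appealing to any alternating-path statement about changing $y_v$.

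The sixth bullet is where there is a genuine gap. You first assert ``removing $v$ (which is matched there) cannot make $u$ worse,'' but Lemma~\ref{lem:alternating-path} says the opposite: removing a matched vertex cannot make a neighbor \emph{better}. That reversed direction is in fact the one you want --- it shows inserting $v$ can only help $u$ --- yet you abandon the removal comparison in favor of ``compare to $y_v=\gamma^+$ and push $y_v$ down.'' This has no supporting lemma: the alternating-path lemma compares $M_G(\vec y)$ to $M_{G_{\text{-}v}}(\vec{y}_{\text{-}v})$, and says nothing about how the matching evolves as $y_v$ varies. Moreover, crossing $y_v=\gamma$ is discontinuous --- for $y_v>\gamma$ the vertex $v$ is still on $u$'s menu at $u$'s deadline, while for $y_v<\gamma$ it is already taken by an earlier vertex --- so no monotonicity in $y_v$ is available, and your ``available set at least as good as at $y_v=\theta^-$'' claim breaks precisely here. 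The paper instead observes that the run on $G_{\text{-}v}$ with $y_u>\tau$ coincides (up to $u$'s deadline, for $u$'s purposes) with the run on $G$ with $y_v\in(\theta,1)$; under $\theta<1$ that makes $u$ active in $G_{\text{-}v}$ with $\alpha_u\ge 1-g(\theta)-f(x_v^{(u)})$; then inserting $v$ with $y_v\in(0,\gamma)$ makes $v$ passive (hence matched), and the correctly oriented Lemma~\ref{lem:alternating-path} says $u$ cannot become worse, so $u$ is matched and, if still active, its gain does not decrease.
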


\begin{proof}
We prove the statements sequentially. By the definition of $\tau$, $u$ is passively matched when $y_u \le \tau$ and $v$ is removed from the graph. By Lemma~\ref{lem:alternating-path}, inserting $v$ (with any rank) to the graph cannot make $u$ worse. Hence, $u$ must be passive. Similarly, $v$ is passive when $y_v \le \gamma$. This finishes the proof of the first and the second statements.

The third and the fourth statements hold by the definition of $\theta$. Furthermore, consider when $y_u \in (\tau, 1)$ and $y_v=\theta^+$, $u$ has $v$ as a candidate but decides to choose another vertex $z$. Note that $v$ offers $g(\theta)+f(x_v^{(u)})$. We have $\alpha_u = 1-g(y_z)-f(x_z^{(u)}) \ge 1-g(\theta)-f(x_v^{(u)})$. When we further increase the rank $y_v$, $u$'s matching status shall not change. This concludes the fifth statement.

Finally, when $\theta < 1$, consider the graph with $v$ removed and when $y_u \in (\tau,1)$. This is equivalent to the case when $y_v > \theta$ and according to the previous discussion, $u$ matches a vertex $z$ and $\alpha_u \ge 1-g(\theta)-f(x_v^{(u)})$. By Lemma~\ref{lem:alternating-path}, after inserting vertex $v$ with rank $y_v \in (0,\gamma)$, $u$'s matching status becomes no worse than actively choosing $z$. In other words, $u$ must be matched when $y_u \in (\tau, 1)$ and $y_v \in (0,\gamma)$. Furthermore, if $u$ is active, $\alpha_u$ does not decrease, i.e. $\alpha_u \ge 1-g(y_z)-f(x_z^{(u)}) \ge 1- g(\theta)-f(x_v^{(u)})$.   
\end{proof}

\subsection{Analysis of \br}
\label{subsec:analysis-balanced-ranking}

Recall the randomized online primal dual framework as in Lemma~\ref{lem:primal-dual}.
Further recall that reverse weak duality in expectation holds trivially with equality by our definition of the dual variables.
It remains to show approximate dual feasibility in expectation, i.e., to lower bound $\E \big[ \alpha_u + \alpha_v \big]$.
Since the dual variables depend on functions $f$ and $g$, the lower bound will also be expressed in terms of these functions.
It shall not be surprising that the contribution from $g$ is identical to the bound by \citet{soda/HuangPTTWZ19}.
After all, the algorithm degenerates to \ranking if we let $f(x) \equiv 0$.
For brevity, we denote the lower bound by \citet{soda/HuangPTTWZ19} as a function $G:[0,1]^3 \to [0,1]$ as
\begin{equation}
    \label{eqn:G}
    G(\tau, \gamma, \theta) \eqdef
    \begin{cases}
        \displaystyle
        \int_0^{\tau} g(y_u) \dd y_u + \int_0^{\gamma} g(y_v) \dd y_v + \big(1-\tau \big)\cdot \big(1-\gamma-(1-\theta)g(\theta) \big) \\[1ex]
        \qquad\qquad\quad\displaystyle
        +~ \gamma \cdot \int_{\tau}^1 \min \big\{ (1-g(\theta)), g(y_u) \big\} \dd y_u, & \theta < 1 ~; \\[2ex]
        \displaystyle
        \int_0^{\tau} g(y_u) \dd y_u + \int_0^{\gamma} g(y_v) \dd y_v + (1-\tau) \cdot (1-\gamma) & \theta = 1 ~.
    \end{cases}
\end{equation}


We lower bound the approximate dual feasibility in the following main technical lemma.

Observe that the bound $G(\tau, \gamma, \theta)$ in Eqn.~\eqref{eqn:G} is \emph{local}, in the sense that it is achieved by taking expectation over $y_u$ and $y_v$ only, for an arbitrarily fixed $\yminus{u,v}$.
In contrast, our lower bound in Eqn.~\eqref{eqn:bal_ranking_gain} is \emph{global}, in the sense that we need to take expectation of $G(\tau,\gamma,\theta)$ over $\yminus{u,v}$. 
Additionally the bound due to function $f$ is also global, as it takes as input the lookahead water levels.

\begin{lemma}
\label{lem:bal_ranking_gain}
For any edge $(u, v)$ in which $u$ has an earlier deadline, we have:
\begin{equation} \label{eqn:bal_ranking_gain}
\E \big[\alpha_u + \alpha_v\big] \ge \E \big[ G(\tau,\gamma,\theta) \big] + F(x_u, x_v^{(u)}) ~,
\end{equation}
where $F$ is defined as:
\[
    F(x_u, x_v^{(u)}) \eqdef \int_0^{x_u} f(x) \dd x + \int_0^{x_v^{(u)}} f(x) \dd x - (1-x_u) \cdot f(x_v^{(u)})
    ~.
\]
Recall that $x_u = \Pr \big[ \text{$u$ passive} \big]$ and $x_v^{(u)} = \Pr \big[\text{$v$ passive after $u$'s deadline} \big]$.
\end{lemma}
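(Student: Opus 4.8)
The plan is to condition on the ranks $\yminus{uv}$ of all vertices other than $u$ and $v$, prove the bound on $\E_{y_u,y_v}\big[\alpha_u+\alpha_v\big]$ pointwise in $\yminus{uv}$ against a suitable per-instance expression, and then take expectation over $\yminus{uv}$. The subtlety flagged right before the lemma is that the $f$-contribution is genuinely global: the lookahead water levels $x_u$ and $x_v^{(u)}$ are probabilities over the whole rank vector, so the $F(x_u,x_v^{(u)})$ term cannot be produced by a fixed $\yminus{uv}$. So the real structure is: (i) produce a conditional lower bound of the form $\E_{y_u,y_v}[\alpha_u+\alpha_v \mid \yminus{uv}] \ge G(\tau,\gamma,\theta) + (\text{linear functional of the conditional passivity indicators and of } f(x_v^{(u)}))$, where $\tau,\gamma,\theta$ are the marginal ranks determined by $\yminus{uv}$; then (ii) take $\E_{\yminus{uv}}$ of both sides, recognizing that the extra linear term, after integration, assembles into exactly $F(x_u,x_v^{(u)})$ because $x_u=\Pr[u\text{ passive}]$ and $x_v^{(u)}=\Pr[v\text{ passive after }u\text{'s deadline}]$ are expectations of the corresponding indicators.

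First I would fix $\yminus{uv}$ and the associated $\tau,\gamma,\theta$ from Lemma~\ref{lem:theta}, and split the unit square $[0,1]^2$ of $(y_u,y_v)$ into the regions described by Lemma~\ref{lem:bal_ranking_structure}: the strip $y_u\in(0,\tau)$ where $u$ is passive (so $\alpha_u = g(y_u) + f(x_\bullet^{(u)})$ for whichever vertex $u$ is matched to — here I need to be careful that when $u$ is passive, $\alpha_u$ is the "keep" part $g(y_u)+f(x_u^{(u)})$, i.e.\ involves $u$'s own lookahead level), the strip $y_v\in(0,\gamma)$ where $v$ is passive, and the remaining region $y_u>\tau$, which subdivides at $y_v=\theta$ and $y_v=\gamma$. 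In each region I would lower bound $\alpha_u+\alpha_v$ using the structural facts: where $v$ is passive, $\alpha_v \ge g(y_v)+f(x_v^{(u)})$ (using monotonicity of $f$ and that $v$'s water level at its own deadline is at least its lookahead level after $u$'s deadline — or, more precisely, tracking that $\alpha_v$ on the passive side contributes the $f(x_v^{(u)})$ I want); where $u$ actively matches $v$ (region $y_u>\tau$, $y_v\in(\gamma,\theta)$), $\alpha_u+\alpha_v = 1$; and where $y_v>\theta$ (or $y_v<\gamma$ with $\theta<1$), $\alpha_u \ge 1-g(\theta)-f(x_v^{(u)})$. The $g$-only portion of these bounds integrates to $G(\tau,\gamma,\theta)$ by construction (this is the part that matches \citet{soda/HuangPTTWZ19}); the additional $f$-terms I would collect separately.

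The key bookkeeping step is then to show the collected $f$-terms, after integrating over $(y_u,y_v)$ and then over $\yminus{uv}$, equal $\int_0^{x_u} f + \int_0^{x_v^{(u)}} f - (1-x_u)f(x_v^{(u)})$. The term $\int_0^{x_u} f(x)\dd x$ should come from the region where $u$ is passive: $u$ passive with $u$'s own lookahead level distributed so that $\Pr[u\text{ passive}]=x_u$, and a standard "integrate the quantile" identity $\E[f(\text{level})\cdot \onev(u\text{ passive})]$-type manipulation — actually more carefully, summing $f$ of $u$'s water level over the nested events, via the layer-cake/Abel-summation argument that already appears implicitly in the $g$-analysis. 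Likewise $\int_0^{x_v^{(u)}} f(x)\dd x$ comes from the $v$-passive region, and $-(1-x_u)f(x_v^{(u)})$ comes from the region $y_u>\tau$, $y_v>\theta$ where $u$'s gain is charged down by $f(x_v^{(u)})$: the measure of $\{y_u>\tau\}$ integrated over $\yminus{uv}$ is exactly $\Pr[u\text{ not passive}] = 1-x_u$, and $f(x_v^{(u)})$ is a global constant, giving the product. I expect the main obstacle to be precisely this reconciliation: matching the sign and magnitude of every $f$-term across the region decomposition, and in particular justifying that the "loss" term is $-(1-x_u)f(x_v^{(u)})$ rather than something involving $\tau$ or $\theta$ — this requires using that $\Pr[y_u>\tau]$ averaged over $\yminus{uv}$ collapses to $1-x_u$ (since $u$ passive $\iff y_u\le\tau$ under Lemma~\ref{lem:bal_ranking_structure}), together with the fact that on the complementary "$u$ passive" and "$v$ passive" regions the $f$-contributions telescope into the two integrals. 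A secondary obstacle is handling the degenerate case $\theta=1$ (Figure~\ref{fig:bip2}) consistently, where the $y_v\in(\theta,1)$ region is empty and the loss term must be absorbed differently; I would dispatch it by a direct computation paralleling the second branch of $G$.
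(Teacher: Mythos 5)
Your plan tracks the paper's proof closely: condition on $\yminus{uv}$, use the region decomposition of Lemma~\ref{lem:bal_ranking_structure} (with a separate computation for the degenerate $\theta=1$ case) to establish a conditional bound of the form $\E_{y_u,y_v}[\alpha_u+\alpha_v] \ge G(\tau,\gamma,\theta) + \E_{y_u,y_v}\bigl[\mathbbm{1}[u\text{ passive}]\,f(x_u') + \mathbbm{1}[v\text{ passive}]\,f(x_v') - \mathbbm{1}[u\text{ active}]\,f(x_v^{(u)})\bigr]$, then take expectation over $\yminus{uv}$ and use the Abel-summation/layer-cake argument $\E[\mathbbm{1}[u\text{ passive}]\,f(x_u')]\ge\int_0^{x_u}f$ (and its analogue for $v$). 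That is exactly the paper's route.

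One concrete slip worth fixing: you assert that the measure of $\{y_u>\tau\}$ integrated over $\yminus{uv}$ equals $\Pr[u\text{ not passive}]=1-x_u$. This is false: that integral is $1-\E[\tau]$, which by Lemma~\ref{lem:binding} satisfies $1-\E[\tau]\ge 1-x_u$ and is typically strictly larger, because by the last statement of Lemma~\ref{lem:bal_ranking_structure} the vertex $u$ can still be passive when $y_u>\tau$ and $y_v<\gamma$. The penalty $-f(x_v^{(u)})$ should be charged precisely on the event that $u$ is \emph{active} (a proper subset of $\{y_u>\tau\}$), and then the bound $\Pr[u\text{ active}]\le 1-\Pr[u\text{ passive}]=1-x_u$ closes the argument; charging it on all of $\{y_u>\tau\}$ would overstate the loss and leave you with the weaker $-(1-\E[\tau])f(x_v^{(u)})$, which does not match the lemma's statement.
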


\begin{proof}
    We first fix arbitrary ranks $\yminus{uv}$ of all vertices but $u,v$ and define $\tau, \gamma, \theta$ w.r.t. $\yminus{uv}$. We prove that 
    \begin{equation}
    \label{eqn:fix_y-uv}
    \underset{y_u,y_v}{\E}[\alpha_u+\alpha_v] \ge G(\tau, \gamma, \theta) + \underset{y_u,y_v}{\E} \Big[ \mathbbm{1}[u \text{ passive}] \cdot f(x_u') + \mathbbm{1}[v \text{ passive}] \cdot f(x_v') - \mathbbm{1}[u \text{ active}] \cdot f(x_v^{(u)}) \Big].
    \end{equation}
    
    Remark that we use $x_u' = x_u^{(w)}$ to denote $u$'s water level right after it is passively matched to $w$ and similarly $x_v' = x_v^{(w')}$ to denote $v$'s water level right after it is passively matched to $w'$. Noticed that $x_u', x_v'$ depend on the ranks $\vec{y}$. 
    
    Then, we consider the following two cases depending on whether $\theta = 1$.
    \paragraph{Case 1: $\theta < 1$.}
    We first study the non degenerate case. Referring to Figure~\ref{fig:bip1}, we have
    \begin{align*}
    & \E_{y_u,y_v}[\alpha_u+\alpha_v] \ge \int_0^1 \int_0^1 \Big( \mathbbm{1}[u \text{ passive}] \cdot f(x_u') + \mathbbm{1}[v \text{ passive}] \cdot f(x_v') \Big) \dd y_u \dd y_v \\
    & + \int_0^{\tau} g(y_u) \dd y_u + \int_0^{\gamma} g(y_v) \dd y_v + (1-\tau) \Big( (\theta-\gamma) (1-f(p_v))+(1-\theta)(1-g(\theta)-f(x_v^{(u)})) \Big) \\
    & + \int_0^{\gamma} \int_{\tau}^{1} \Big( \mathbbm{1}[u \text{ passive}] \cdot g(y_u) + \mathbbm{1}[u \text{ active}] \cdot (1-g(\theta)-f(x_v^{(u)})) \Big) \dd y_u \dd y_v.
    \end{align*}
    \begin{itemize}
    	\item The terms in the first line corresponds to all $f(\cdot)$ terms when $u,v$ are passively matched. 
    	\item The terms $\int_0^\tau g(y_u) \dd y_u$ and $\int_0^\gamma g(y_v) \dd y_v$ correspond to the gain of $\alpha_u$ and $\alpha_v$ when $y_u < \tau$ and $y_v < \gamma$ respectively. By the first and the second statements of Lemma~\ref{lem:bal_ranking_structure}, $u$ is passive when $y_u < \tau$ and $v$ is passive when $y_v < \gamma$. We only write the $g(\cdot)$ terms since we have counted the $f(\cdot)$ terms in the first line of the equation.
    	\item The term $(1-\tau)\cdot(\theta-\tau) \cdot (1-f(x_v^{(u)}))$ corresponds to the gain of $\alpha_u+\alpha_v$ when $y_u \in (\tau,1)$ and $y_v \in (\gamma,\theta)$. Note that $u,v$ matches each other in this region by the fourth statement of Lemma~\ref{lem:bal_ranking_structure}. However, we subtract $f(x_v^{(u)})$ from the gain since we have counted it in the first line of the equation.
    	\item The term $(1-\tau)\cdot (1-\theta) \cdot (1-g(\theta)-f(x_v^{(u)}))$ corresponds to the gain of $\alpha_u$ when $y_u \in (\tau, 1)$ and $y_v \in (\theta,1)$ by the fifth statement of Lemma~\ref{lem:bal_ranking_structure}.
    	\item The term in the last line corresponds to the gain of $\alpha_u$ when $y_u \in (\tau,1)$ and $y_v \in(0,\gamma)$. By the last statement of Lemma~\ref{lem:bal_ranking_structure}, $u$ is either passive ($\alpha_u = g(y_u)+f(x_u')$) or active ($\alpha_u \ge 1-g(\theta)-f(x_v^{(u)})$). When $u$ is passive, we subtract the $f(x_u')$ term since we have counted it in the first line of the equation.
    \end{itemize}
    
    Observe that $(-f(x_v^{(u)}))$ appears only when $u$ is active. Therefore,
    \begin{align*}
    \E_{y_u,y_v}[\alpha_u + \alpha_v] \ge & \int_0^1 \int_0^1 \Big( \mathbbm{1}[u \text{ passive}] \cdot f(x_u') + \mathbbm{1}[v \text{ passive}] \cdot f(x_v') - \mathbbm{1}[u \text{ active}] \cdot f(x_v^{(u)}) \Big) \dd y_u \dd y_v\\
    & + \int_0^{\tau} g(y_u) \dd y_u + \int_0^{\gamma} g(y_v) \dd y_v + (1-\tau)\cdot(\theta-\gamma+(1-\theta)(1-g(\theta))) \\
    & + \int_0^{\gamma} \int_{\tau}^{1} \Big( \mathbbm{1}[u \text{ passive}] \cdot g(y_u) + \mathbbm{1}[u \text{ active}] \cdot (1-g(\theta)) \Big) \dd y_u \dd y_v \\
    \ge & \int_0^1 \int_0^1 \Big( \mathbbm{1}[u \text{ passive}] \cdot f(x_u') + \mathbbm{1}[v \text{ passive}] \cdot f(x_v') - \mathbbm{1}[u \text{ active}] \cdot f(x_v^{(u)}) \Big) \dd y_u \dd y_v\\
    & + \int_0^{\tau} g(y_u) \dd y_u + \int_0^{\gamma} g(y_v) \dd y_v + (1-\tau)\cdot(\theta-\gamma+(1-\theta)(1-g(\theta))) \\
    & + \int_0^{\gamma} \int_{\tau}^{1} \min\{ g(y_u), 1-g(\theta)\} \dd y_u \dd y_v \\
    = & \underset{y_u,y_v}{\E} \Big[ \mathbbm{1}[u \text{ passive}] \cdot f(x_u') + \mathbbm{1}[v \text{ passive}] \cdot f(x_v') - \mathbbm{1}[u \text{ active}] \cdot f(x_v^{(u)}) \Big] \\
    & + G(\tau, \gamma, \theta).
    \end{align*}
    
    \paragraph{Case 2: $\theta = 1$.}
    The only difference between the two cases is that we no longer have the gain of $\alpha_u$ when $y_u \in (\tau,)]$ and $y_v \in (0,\gamma)$. Referring to Figure~\ref{fig:bip2}, we have
    \begin{align*}
    \E_{y_u,y_v}[\alpha_u+\alpha_v] \ge & \int_0^1 \int_0^1 \Big( \mathbbm{1}[u \text{ passive}] \cdot f(x_u') + \mathbbm{1}[v \text{ passive}] \cdot f(x_v') \Big) \dd y_u \dd y_v \\
    & + \int_0^{\tau} g(y_u) \dd y_u + \int_0^{\gamma} g(y_v) \dd y_v + (1-\tau)\cdot (1-\gamma) \cdot (1-f(x_v^{(u)})) \\
    \ge & \int_0^1 \int_0^1 \Big( \mathbbm{1}[u \text{ passive}] \cdot f(x_u') + \mathbbm{1}[v \text{ passive}] \cdot f(x_v') - \mathbbm{1}[u \text{ active}] \cdot f(x_v^{(u)}) \Big) \dd y_u \dd y_v \\
    & + \int_0^{\tau} g(y_u) \dd y_u + \int_0^{\gamma} g(y_v) \dd y_v + (1-\tau)\cdot (1-\gamma)\\
    = & \underset{y_u,y_v}{\E} \Big[ \mathbbm{1}[u \text{ passive}] \cdot f(x_u') + \mathbbm{1}[v \text{ passive}] \cdot f(x_v') - \mathbbm{1}[u \text{ active}] \cdot f(x_v^{(u)}) \Big] \\
    & +G(\tau, \gamma, \theta).
    \end{align*}
    
    Next, taking expectations over the ranks $\vec{y}_{\text{-}uv}$, we have
    \begin{align}
    & \underset{\vec{y}_{\text{-}uv}}{\E}\left[\underset{y_u,y_v}{\E} \Big[ \mathbbm{1}[u \text{ passive}] \cdot f(x_u') + \mathbbm{1}[v \text{ passive}] \cdot f(x_v') - \mathbbm{1}[u \text{ active}] \cdot f(x_v^{(u)}) \Big] \right] \notag\\
    = & \underset{\vecy}{\E} \Big[ \mathbbm{1}[u \text{ passive}] \cdot f(x_u') \Big] + \underset{\vecy}{\E} \Big[ \mathbbm{1}[v \text{ passive}] \cdot f(x_v') \Big] - \underset{\vec{y}}{\E}  \Big[ \mathbbm{1}[v \text{ active}] \cdot f(x_v^{(u)}) \Big] \notag\\
    \ge & \underset{\vecy}{\E} \Big[ \mathbbm{1}[u \text{ passive}] \cdot f(x_u') \Big] + \underset{\vecy}{\E} \Big[ \mathbbm{1}[u \text{ passive}] \cdot f(x_v') \Big] - (1-x_u) \cdot f(x_v^{(u)}). \label{eqn:f-bound}
    \end{align}
    Let $v_1,\dots, v_k$ be all neighbors of $u$ whose deadlines are before $u$'s and let them be enumerated according to the order of deadlines. 
    Recall that $x_u^{(v_i)}$ is the water level of $u$ after $v_i$'s deadline. By definition, we have $x_u^{(v_i)} = \Pr[u \text{ passive at } v_i]$.
    Moreover $x_u^{(v_i)} - x_u^{(v_{i-1})} = \Pr[u \text{ is matched by } v_i]$. (For notation simplicity, let $x_u^{(v_0)}=0$.)
    Thus,
    \begin{align*}
    \underset{\vecy}{\E} \Big[ \mathbbm{1}[u \text{ passive}] \cdot f(x_u') \Big] = & \sum_{i=1}^{k} \Pr[u \text{ is mathced by } v_i] \cdot f(x_u^{(v_i)}) \\
    = & \sum_{i=1}^{k} (x_u^{(v_i)} - x_u^{(v_{i-1})}) \cdot f(x_u^{(v_i)}) \ge \int_0^{x_u} f(x_u') \dd x_u',
    \end{align*}
    where the inequality comes from the monotonicity of $f$. Similarly, 
    \[
    \underset{\vecy}{\E} \Big[ \mathbbm{1}[v \text{ passive}] \cdot f(x_v') \Big] \ge \int_0^{x_v^{(u)}} f(x_v') \dd x_v'.
    \]
    We conclude the proof by combining Equation~\eqref{eqn:fix_y-uv} and \eqref{eqn:f-bound}:
    \begin{align*}
    \underset{\vec{y}}{\E}[\alpha_u + \alpha_v] \ge & \underset{\vec{y}_{\text{-}uv}}{\E}\left[ G(\tau,\gamma,\theta) + \underset{y_u,y_v}{\E} \Big[ \mathbbm{1}[u \text{ passive}] \cdot f(x_u') + \mathbbm{1}[v \text{ passive}] \cdot f(x_v') - \mathbbm{1}[u \text{ active}] \cdot f(x_v^{(u)}) \Big] \right] \\
    \ge & \underset{\vec{y}_{\text{-}uv}}{\E} [ G(\tau,\gamma,\theta)] + \int_0^{x_u} f(x) \dd x + \int_0^{x_v^{(u)}} f(x) \dd x - (1-x_u) \cdot f(x_v^{(u)}).
    \end{align*}
\end{proof}

\paragraph{Failed Attempt: Handling $f$ and $g$ Separately.}
It remains to design functions $f$ and $g$ so that the RHS of Eqn.~\eqref{eqn:bal_ranking_gain} is at least the competitive ratio $\Gamma$.
Suppose we do not have any control for the marginal ranks $\tau$, $\gamma$, $\theta$ and water levels $x_u$, $x_v^{(u)}$, i.e., they can take any arbitrary combination of values in $[0,1]$.
Then, the designs of $f$ and $g$ become two separate problems.
\citet{soda/HuangPTTWZ19} found the optimal $g$ such that $\min_{\tau, \gamma, \theta} \{G(\tau,\gamma,\theta)\} = \Omega \approx 0.567$ to show that \ranking is $\Omega$-competitive.
Unfortunately, the bound $F(x_u, x_v^{(u)})$ for any nondecreasing function $f$ is at most $0$ at $x_u = 0$.

In order to beat the $\Omega$ competitive ratio, which is proved tight for \ranking~\cite{soda/HuangPTTWZ19}, it is crucial to establish a connection between the marginal ranks and the water levels.

\paragraph{Binding Marginal Ranks and Waterlevels.}
Fortunately, the marginal ranks threshold ranks $\tau$, $\gamma$, $\theta$ and water levels $x_u$, $x_v^{(u)}$ are \emph{not} arbitrary.
Recall from the first conclusion of Lemma~\ref{lem:bal_ranking_structure} that $u$ is passive for all $y_u\in [\tau,1]$ and $y_v\in[0,1]$.
Hence conditioned on any $\yminus{u}$, the probability that $u$ is passive is at least $\tau$.
Taking the expectation over $\yminus{u}$ yields the following lemma.

\begin{lemma}   \label{lem:binding}
    For any edge $(u, v)$ in which $u$ has an earlier deadline, we have $\E \big[ \tau \big] \le x_u$.
\end{lemma}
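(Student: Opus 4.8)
The plan is to combine the first structural conclusion of Lemma~\ref{lem:bal_ranking_structure} with a straightforward conditioning argument over the ranks. The crucial point to keep in mind is that $\tau = \marginal_u(\Gminus{v}, \yminus{uv})$ is a deterministic function of $\yminus{uv}$ alone: it depends neither on $y_u$ nor on $y_v$ (and here the lookahead water levels are still those of $G$, as per the remark preceding the Active/Passive definition, so $\tau$ is well defined and Lemma~\ref{lem:bal_ranking_structure} applies). Consequently, once we condition on the ranks $\yminus{u}$ of all vertices other than $u$ --- which in particular pins down both $\yminus{uv}$ and $y_v$ --- the value of $\tau$ is fixed.

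First I would invoke the first item of Lemma~\ref{lem:bal_ranking_structure}: since $u$ has the earlier deadline on the edge $(u,v)$, $u$ is passive in $M(\vec{y})$ whenever $y_u \in (0,\tau)$, no matter the value of $y_v \in (0,1)$. Hence, conditioned on any realization of $\yminus{u}$, the set of $y_u \in [0,1]$ for which $u$ is passive contains the interval $(0,\tau)$, which has length $\tau$. Since $y_u$ is uniform on $[0,1]$ and independent of $\yminus{u}$, this yields $\Pr_{y_u}\!\big[\text{$u$ passive} \mid \yminus{u}\big] \ge \tau$.

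Then I would take expectations over $\yminus{u}$ and use the tower rule: $x_u = \Pr_{\vec{y}}[\text{$u$ passive}] = \E_{\yminus{u}}\big[\Pr_{y_u}[\text{$u$ passive} \mid \yminus{u}]\big] \ge \E_{\yminus{u}}[\tau] = \E[\tau]$, where the last equality holds because $\tau$ is a function of $\yminus{uv}$, a sub-vector of $\yminus{u}$, so averaging it over $\yminus{u}$ is the same as averaging it over $\yminus{uv}$. This is essentially the whole argument. There is no serious obstacle here; the only things to be careful about are (i) genuinely exploiting that $\tau$ is independent of $y_u$ and $y_v$ --- which is exactly why it is cleaner to condition on all of $\yminus{u}$ at once rather than to split off $y_v$ separately --- and (ii) observing that the measure-zero boundary issues arising from defining the marginal rank through a supremum are irrelevant to the probability bound.
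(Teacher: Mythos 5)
Your argument is correct and follows the same route as the paper: invoke the first item of Lemma~\ref{lem:bal_ranking_structure}, note that $\tau$ is determined by $\yminus{uv}$ so it is fixed once $\yminus{u}$ is fixed, deduce $\Pr_{y_u}[u\text{ passive}\mid \yminus{u}]\ge\tau$, and finish by the tower rule. Your write-up is a bit more explicit about why conditioning on all of $\yminus{u}$ is the convenient choice, but there is no substantive difference from the paper's proof.
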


\paragraph{Our Final Plan.}
To utilize the above relation between $\tau$ and $x_u$, we introduce an auxiliary \emph{convex} function $\ell : [0,1] \to [0,1]$ such that $\ell(\tau)$ lower bounds $\min_{\gamma \le \theta} \{G(\tau,\gamma,\theta)\}$.
Then, we can lower bound the first term on the RHS of Eqn.~\eqref{eqn:bal_ranking_gain} as:
\[
    \E \big[ G(\tau,\gamma,\theta) \big] \ge \E \big[ \ell(\tau) \big] \ge \ell \big( \E \big[ \tau \big] \big)~. \tag{convexity of $\ell$}
\]
Further observe that $F(x_u, x_v^{(u)})$ is nondecreasing in $x_u$.
We have $F \big( x_u, x_v^{(u)} \big) \ge F \big( \E \big[ \tau \big], x_v^{(u)} \big)$ by Lemma~\ref{lem:binding}.
It remains to lower bound $\ell \big( \E \big[ \tau \big] \big) + F \big( \E \big[ \tau \big], x_v^{(u)} \big)$, for any $\E \big[ \tau \big]$ and $x_v^{(u)}$.

\medskip

A set of sufficient conditions for $\Gamma$-competitiveness w.r.t.\ functions $f$, $g$, and the competitive ratio $\Gamma$ is summarized as the next lemma. The proof of Lemma~\ref{lem:opt_f_g} is deferred to Appendix~\ref{sec:factor_revealing_lp}.

\begin{lemma}
    \label{lem:opt_f_g}
    There are increasing function $g:[0,1]\to[0,1]$, non-decreasing function $f:[0,1]\to[0,1]$ and a convex function $\ell:[0,1] \to [0,1]$ such that for $\Gamma = 0.5690$:
    \begin{align*}
        \forall \tau, \gamma \in [0, 1], \forall \theta \in [\gamma, 1) : \quad & \textstyle
\int_0^\tau g(y_u) \dd y_u + \int_0^\gamma g(y_v) \dd y_v + \big(1-\tau\big) \big( 1-\gamma -(1-\theta) g(\theta) \big)  \\
        & \textstyle
        \quad + \int_\tau^1 \gamma \cdot \min \big\{ g(y_u), 1-g(\theta) \big\} \dd y_u \ge \ell(\tau)~; \\
        \forall \tau, \gamma \in [0, 1] : \quad & \textstyle
        \int_0^\tau g(y_u) \dd y_u + \int_0^\gamma g(y_v) \dd y_v + \big(1-\tau\big) \big( 1-\gamma \big) \ge \ell(\tau)~; \\
        \forall \E[\tau], x_v^{(u)} \in [0, 1] : \quad & \textstyle
        \ell(\E[\tau]) + \int_0^{\E[\tau]} f(x) \dd x + \int_0^{x_v^{(u)}} f(x) \dd x - \big(1-\E[\tau]\big) f(x_v^{(u)}) \ge \Gamma~;\\
        \forall x \in [0,1], \forall y \in [0,x]: \quad  & f(x)-f(y) \le x-y~; \\ 
        \forall x \in [0,1], \forall y \in [0,x]: \quad  & g(x)-g(y) \ge \frac{x-y}{100}~; \\ 
        & g(1) + f(1) \le 1~.
    \end{align*}
\end{lemma}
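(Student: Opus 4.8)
The statement of Lemma~\ref{lem:opt_f_g} is essentially a feasibility claim: there exist functions $f$, $g$, $\ell$ and a constant $\Gamma = 0.5690$ satisfying a finite list of pointwise inequalities. So the proof is a construction plus verification, and since it is deferred to an appendix entitled "factor-revealing LP", the natural route is to reduce the design of $f$, $g$, $\ell$ to a (continuous or discretized) optimization problem and solve it.

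\medskip

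\noindent\textbf{Proposed approach.} The plan is as follows. First I would guess the right functional forms by looking at where the inequalities are tight. The function $g$ should be essentially the one found by \citet{soda/HuangPTTWZ19} that makes $\min_{\tau,\gamma,\theta} G(\tau,\gamma,\theta) = \Omega$, but now perturbed slightly to trade a bit of the $g$-gain for room in the $f$-gain; I expect $g$ to be a solution of an ODE coming from setting the first (the $\theta<1$) constraint to equality along the ``worst'' curve $\gamma = \theta = $ some function of $\tau$, together with the boundary behavior forced by the second ($\theta=1$) constraint. Symmetrically, $f$ should satisfy an ODE obtained by differentiating the third constraint $\ell(z) + \int_0^z f + \int_0^{w} f - (1-z) f(w) \ge \Gamma$ in both $z = \E[\tau]$ and $w = x_v^{(u)}$ at equality: differentiating in $w$ gives $f(w) - (1-z) f'(w) = 0$ at the worst $w$, which pins down the relation between $w$ and $z$ on the tight locus, and differentiating in $z$ gives $\ell'(z) + f(z) + f(w) = 0$ there, which together with convexity of $\ell$ and $g(1)+f(1)\le 1$ determines the split. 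I would then define $\ell$ explicitly as the pointwise minimum over $\gamma,\theta$ of the $G$-expression (which one checks is convex, or replace it by its convex envelope and verify the envelope still dominates $\Gamma$-worth after adding $F$), and verify the five families of inequalities one by one. The first two are the ``Ranking part'' and reduce, after the substitution of the chosen $g$, to the same one-variable calculus that \citet{soda/HuangPTTWZ19} carried out, plus monotonicity checks; the Lipschitz bound $f(x)-f(y)\le x-y$ and reverse-Lipschitz $g(x)-g(y)\ge (x-y)/100$ are immediate from the closed forms (choosing constants generously), and $g(1)+f(1)\le 1$ is a single numerical check.

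\medskip

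\noindent\textbf{Where the real work is.} The delicate step is the third constraint, $\ell(z) + \int_0^z f(x)\,\dd x + \int_0^{w} f(x)\,\dd x - (1-z) f(w) \ge \Gamma$ for all $z, w\in[0,1]$, because this is where $f$ and $\ell$ interact and where the improvement over $\Omega$ must materialize. Here I would: (i) for fixed $z$, minimize the left side over $w$ — since the $w$-dependent part is $h(w) := \int_0^w f - (1-z)f(w)$ with $h'(w) = f(w) - (1-z) f'(w)$, and $f$ is nondecreasing and $1$-Lipschitz, $h$ is unimodal, so the minimum is at an interior stationary point or at $w\in\{0,1\}$; (ii) substitute the worst $w = w^*(z)$ back and obtain a one-variable inequality $\Phi(z) := \ell(z) + \int_0^z f + h(w^*(z)) \ge \Gamma$; (iii) choose $f$ and $\ell$ so that $\Phi$ is constant, equal to $\Gamma$, on the relevant range of $z$ — this is the ODE system mentioned above — and is $\ge\Gamma$ elsewhere (e.g.\ $z$ small, where $\ell(z)$ is large because few passive matches force $\tau$ small, and $z$ near $1$). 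The main obstacle I anticipate is twofold: first, ensuring the \emph{convexity} of $\ell$ simultaneously with $\ell(\tau) \le \min_{\gamma\le\theta} G(\tau,\gamma,\theta)$ — these pull in opposite directions, and one may need to replace the pointwise minimum by a carefully chosen convex lower bound and then re-verify constraint three with the weaker $\ell$; second, the numerics — showing the constant is genuinely $\ge 0.5690$ and not, say, $0.5689$, which is why a factor-revealing LP (a fine discretization of $\tau,\gamma,\theta,z,w$ with interval-arithmetic or exact rational bounds on $f$, $g$, $\ell$) is the clean way to close the argument rigorously rather than by hand.

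\medskip

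\noindent\textbf{Fallback.} If a clean closed form for $f$ is elusive, I would instead define all three functions as piecewise-linear interpolants of the optimal solution to a discretized LP over a grid, bound the discretization error in each of the five constraint families by the Lipschitz constants ($1$ for $f$, and an a~priori bound for $g$, $\ell$), and pick the grid fine enough that the LP optimum minus the error still exceeds $0.5690$. This trades elegance for a mechanical but fully rigorous verification, and it is exactly the style the phrase ``factor-revealing LP'' in the appendix title suggests the authors adopt.
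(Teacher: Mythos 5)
Your ``fallback'' is in fact exactly the paper's proof, and you correctly anticipated this. The paper restricts $f$, $g$, $\ell$ to piecewise-linear interpolants of their values on the grid $[0,1]_n = \{i/n\}$, rewrites the six constraint families as a finite LP in the grid values (with the continuous constraints strengthened by explicit $O(1/n^2)$ slack terms to absorb interpolation error, and with the integrals replaced by trapezoidal sums), verifies numerically via Gurobi that the LP is feasible with $r \ge 0.569$ at $n=100$, and then lifts from grid points to all of $[0,1]$ by proving a linear-interpolation error bound (their Claim A.2: for $1$-Lipschitz $f$, $\int_0^t f$ differs from the interpolated value of $\int_0^{\bar t}f$, $\int_0^{\hat t}f$ by at most $1/(8n^2)$) and combining it term by term with the monotonicity, Lipschitz, and convexity constraints. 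Your ``primary'' ODE/closed-form route is not what the paper attempts, and I would not bet on it closing: as you yourself flag, the convexity requirement on $\ell$ fights against $\ell \le \min_{\gamma\le\theta} G$, and the factor $(1-\theta)g(\theta)$ together with the $\min\{g(y),1-g(\theta)\}$ term makes the tight locus in $(\tau,\gamma,\theta)$ awkward to parametrize cleanly. But since you identified the correct mechanism (discretize, strengthen by a quantified slack, solve, interpolate), the proposal matches the paper's method in substance.

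Two small things worth noting against the paper's details: the slack in the strengthened grid constraints must be chosen to dominate the worst-case interpolation error over all terms, and the paper works this out to be $\frac{1}{2n^2}$ for the $\theta<1$ constraint (three interpolations, one of which loses $\frac{1}{4n^2}$ through the $(1-\theta)g(\theta)$ term) and $\frac{1}{4n^2}$ for the $\theta=1$ and $f$-constraints (two interpolations each). Also, the $\theta$-interpolation requires replacing $g(\theta)$ by $g(\hat\theta)$ inside the $\min$ term before interpolating, since $\min\{g(y),1-g(\theta)\}$ is not linear in $\theta$; your sketch did not surface these, but they are exactly the kind of bookkeeping your phrase ``bound the discretization error by Lipschitz constants'' gestures at.
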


\begin{theorem}[Theorem~\ref{thm:balanced-ranking} Restated]
    \br with the functions $f$ and $g$ chosen in Lemma~\ref{lem:opt_f_g} is $0.569$-competitive for fully online matching on bipartite graphs.
\end{theorem}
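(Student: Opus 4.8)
The plan is to assemble the final theorem from the pieces already developed in the section. By Lemma~\ref{lem:primal-dual}, it suffices to verify the two conditions of the randomized online primal dual framework. Reverse weak duality in expectation holds with equality: whenever \br matches an edge $(u,v)$ at $u$'s deadline, it sets $\alpha_u = 1 - g(y_v) - f(x_v^{(u)})$ and $\alpha_v = g(y_v) + f(x_v^{(u)})$, so $\alpha_u + \alpha_v$ equals the unit increment in matching size, and summing over all matched edges gives $P = \E[D]$. The nontrivial part is approximate dual feasibility in expectation, namely $\E[\alpha_u] + \E[\alpha_v] \ge \Gamma$ for every edge $(u,v) \in E$, which I would establish by combining Lemma~\ref{lem:bal_ranking_gain}, Lemma~\ref{lem:binding}, and Lemma~\ref{lem:opt_f_g}.

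Concretely, fix an edge $(u,v)$ and WLOG assume $u$ has the earlier deadline. Lemma~\ref{lem:bal_ranking_gain} gives
\[
    \E[\alpha_u + \alpha_v] \ge \E\big[ G(\tau,\gamma,\theta) \big] + F\big(x_u, x_v^{(u)}\big).
\]
Next I would introduce the convex auxiliary function $\ell$ from Lemma~\ref{lem:opt_f_g}: its first two inequalities say precisely that $G(\tau,\gamma,\theta) \ge \ell(\tau)$ for all valid $(\tau,\gamma,\theta)$ in both the $\theta < 1$ and $\theta = 1$ cases (matching the two branches of the definition~\eqref{eqn:G} of $G$, after noting $G$ only depends on $\gamma,\theta$ through a term minimized over $\gamma \le \theta$). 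Hence $\E[G(\tau,\gamma,\theta)] \ge \E[\ell(\tau)] \ge \ell(\E[\tau])$ by Jensen and convexity of $\ell$. Since $F(x_u, x_v^{(u)}) = \int_0^{x_u} f + \int_0^{x_v^{(u)}} f - (1-x_u) f(x_v^{(u)})$ is nondecreasing in $x_u$ (its derivative in $x_u$ is $f(x_u) + f(x_v^{(u)}) \ge 0$), and $\ell$ is nondecreasing as well, Lemma~\ref{lem:binding} ($\E[\tau] \le x_u$) lets me replace $x_u$ by $\E[\tau]$ throughout:
\[
    \E[\alpha_u + \alpha_v] \ge \ell\big(\E[\tau]\big) + \int_0^{\E[\tau]} f(x)\,\dd x + \int_0^{x_v^{(u)}} f(x)\,\dd x - \big(1 - \E[\tau]\big) f\big(x_v^{(u)}\big).
\]
The third inequality of Lemma~\ref{lem:opt_f_g} says exactly that this expression is at least $\Gamma = 0.5690$ for all values of $\E[\tau]$ and $x_v^{(u)}$ in $[0,1]$, which closes the argument.

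A few bookkeeping points I would address along the way. First, the existence of suitable $f,g,\ell$ is precisely the content of Lemma~\ref{lem:opt_f_g} (proved in Appendix~\ref{sec:factor_revealing_lp} via a factor-revealing LP), and its last three conditions — $f$ being $1$-Lipschitz, $g$ being $\frac{1}{100}$-reverse Lipschitz, and $g(1) + f(1) \le 1$ — guarantee that \br is well defined and efficiently computable by Lemma~\ref{lem:lookahead-water level}, and that all offers $1 - g(y_v) - f(x_v^{(u)})$ stay nonnegative so the dual variables are legitimate. Second, one must handle edges $(u,v)$ where the two deadlines could be compared either way: the statement of Lemma~\ref{lem:bal_ranking_gain} is asymmetric (it fixes $u$ as the earlier vertex), so for a given edge we simply apply it with the correctly oriented roles. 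Third, I would note that the analysis uses the limit case $f(x_v^{(u)}) = \widehat{f}_v$ from Lemma~\ref{lem:lookahead-water level}; since the lemma's guarantee $\widehat{f}_v - \epsilon \le f(x_v^{(u)}) \le \widehat{f}_v$ can be made arbitrarily tight and all bounds above are continuous in the water levels, the competitive ratio degrades by at most $O(\epsilon)$, so a ratio of $0.569$ (slightly below the $0.5690$ of the lemma) is attained for small enough $\epsilon$.

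The main obstacle is not in this final assembly — which is essentially a chain of monotonicity and convexity substitutions — but in the inputs it relies on: Lemma~\ref{lem:bal_ranking_gain}, whose proof requires the careful case analysis on $\theta$ and the structural characterization of Lemma~\ref{lem:bal_ranking_structure} (itself resting on the alternating path property, Lemma~\ref{lem:alternating-path}), and Lemma~\ref{lem:opt_f_g}, whose proof must actually exhibit $f$, $g$, and the convex envelope $\ell$ solving the coupled system of inequalities with $\Gamma = 0.5690$. Granting those, the proof of the theorem itself is short and I would present it in the compressed form above.
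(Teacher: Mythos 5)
Your proposal is correct and follows essentially the same chain as the paper: Lemma~\ref{lem:bal_ranking_gain}, then $G\ge\ell(\tau)$ from the first two inequalities of Lemma~\ref{lem:opt_f_g}, then Jensen with convexity of $\ell$, then Lemma~\ref{lem:binding} together with monotonicity of $F$ in $x_u$, and finally the third inequality of Lemma~\ref{lem:opt_f_g}, along with the same well-definedness and dual-nonnegativity remarks. The only slip is the parenthetical claim that $\ell$ is nondecreasing, which Lemma~\ref{lem:opt_f_g} neither asserts nor requires; the Jensen step already produces $\ell(\E[\tau])$, so no monotone substitution into $\ell$ is needed.
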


\begin{proof}
    We have discussed all the ingredients in this section.
    It remains to put them together.
    Let $f$, $g$, and $\ell$ be the functions constructed in Lemma~\ref{lem:opt_f_g}.
    Observe that $f,g$ satisfy the Lipschitzness and reverse Lipschitzness assumed in Section~\ref{subsec:predicting}. Our algorithm is well-defined. 
    Since the function $f$ and $g$ are nonnegative and $g(1) + f(1) \le 1$, the dual variables $\alpha_u$'s are nonnegative.

    Recall that $x_u = \Pr \big[ \text{$u$ passive} \big]$ and $x_v^{(u)} = \Pr \big[ \text{$v$ passive after $u$'s deadline} \big]$.
    Approximate dual feasibility in expectation follows by Lemma~\ref{lem:bal_ranking_gain} and Lemma~\ref{lem:opt_f_g} as
    \begin{align*}
        \E \big[ \alpha_u + \alpha_v \big] \ge & \E \big[ G(\tau,\gamma,\theta) \big] + F\big(x_u, x_v^{(u)}\big) 
        \ge \E \big[ \ell(\tau) \big] + F\big(x_u, x_v^{(u)}\big) \\
        \ge & \ell\big(\E[\tau]\big) + F\big(x_u, x_v^{(u)}\big) 
        \ge \ell\big(\E[\tau]\big) + F\big(\E[\tau], x_v^{(u)}\big)
        \ge \Gamma = 0.5690~.
    \end{align*}

    Finally, recall that reverse weak duality in expectation follows trivially with equality by our definition of the dual variables.
\end{proof}

\section{Eager Water-Filling Algorithm} \label{sec:ewf}

In this section we present the \ewf algorithm for fractional fully online matching and prove Theorem~\ref{thm:eager-wf}.
We first briefly summarize the competitive analysis of \wtf algorithm~\cite{soda/HuangPTTWZ19} to build intuition.
Recall that \wtf is a lazy algorithm that each vertex sits back and waits until its deadline. At the deadline of a vertex $u$, \wtf continuously matches $u$ to the unmatched neighbor with the smallest matched portion (a.k.a. water level). The algorithm simultaneously updates the dual variables. Whenever $\dd x$ fraction of edge $(u,v)$ is matched at $u$'s deadline, we increase $\alpha_u, \alpha_v$ by $(1-f(x_v))\dd x$ and $f(x_v) \dd x$ respectively, where $x_v$ is the current water level of $v$.
Huang et al.~\cite{soda/HuangPTTWZ19} conclude the competitive ratio of \wtf by showing approximate dual feasibility with an appropriate choice of $f$.

The primal-dual analysis gives an intuitive economic interpretation of the \wtf algorithm. At any moment, each vertex $v$ prices itself at $f(x_v)$ according to the current water level and offers a share of $1-f(x_v)$ to its neighbor. At the deadline of a vertex $u$, it chooses the unmatched neighbor that is willing to give $u$ the largest share of gain. From this viewpoint, however, \wtf is unnatural in the following scenario. Suppose at $u$’s arrival, it has an existing
neighbor $v$ who is willing to offer a share of the gain that is larger than what $u$ can get from being passive matched later, i.e. $1-f(x_v) \ge f(x_u)$. Why would $u$ prefer to wait as in \wtf, instead of grabbing $v$ immediately? By waiting there is risk that 1) $v$ is taken by some other
vertex before $u$’s deadline and that 2) $u$ is passively matched before its own deadline which gives a lower portion of the gain to $u$.
To this end, we propose the following variant of \wtf.

\paragraph{\ewf.}
Fix an increasing function $f:[0,1] \to [0,1]$. Initialize all $x_{uv}$'s and $\alpha_u$'s to be zero. For convenience of analysis we also fix $f(0)=0$ and $f(1) = 1$.
\begin{enumerate}
	\item Upon the arrival of a vertex $u$, $u$ continuously matches the neighbor $v$ with lowest water level if $f(x_u) + f(x_v) \leq 1$.
	The process increases $x_u$ and the lowest water level of neighbors of $u$ until $f(x_u) + f(x_v) > 1$ for all neighbor $v$ of $u$.
	
	\item At the deadline of $u$, $u$ continuously matches the neighbor $v$ with lowest water level until $x_u = 1$, or $x_v = 1$ for all neighbor $v$ of $u$.
\end{enumerate}

Note that the second step of \ewf is the same as \wtf.

In both steps, when we match $u$ with its neighbor $v$, we consider $u$ as the active vertex and $v$ as the passive vertex.
When $x_{uv}$ increases by $\dd x$, we update the dual variables $\alpha_u$ and $\alpha_v$ as follows:
\begin{equation*}
\dd\alpha_u = (1-f(x_v))\dd x \quad \text{and} \quad \dd\alpha_v = f(x_v) \dd x.
\end{equation*}


\subsection{Analysis of Eager Water-filling}

By Lemma~\ref{lem:primal-dual}, it suffices to show that for any pair of neighbors $u$ and $v$ we have $\alpha_u+ \alpha_v \geq \Gamma$ in order to prove \ewf is $\Gamma$-competitive. Unlike Balanced Ranking/Ranking, \ewf is a deterministic algorithm and thus, no randomness is involved for the dual variables $\alpha_u$'s.
Fix any pair of neighbors $u$ and $v$, and assume $u$ has an earlier deadline than $v$.

Let $p_u$ be the water level of $u$ right \emph{before} $u$'s deadline. 
Let $p_v$ be the water level of $v$ right \emph{after} $u$'s deadline.
Let $t_u, t_v$ be the water levels of $u$ and $v$ right \emph{after} their arrivals, respectively.
We prove the following lower bound on the gain of $u$ and $v$.

\begin{lemma}\label{lemma:gain-of-u-and-v-ewf}
Right after $u$'s deadline, we have
\begin{equation}
\alpha_v + \alpha_u\ge t_v \cdot f(t_v) + \int_{t_v}^{p_v} f(x) \dd{x} + t_u \cdot f(t_u) + \int_{t_u}^{p_u} f(x) \dd{x}+ (1-p_u) \cdot (1-f(p_v)).\label{eq:gain-ewf}
\end{equation}
\end{lemma}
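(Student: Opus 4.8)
The plan is to account separately for every increment the dual variables $\alpha_u$ and $\alpha_v$ have received by the instant right after $u$'s deadline, and to match these increments to the five terms on the right-hand side of~\eqref{eq:gain-ewf}. Since a vertex can be the \emph{active} endpoint of a match only during its own arrival step or its own deadline step, and since $v$ (being a neighbor of $u$) arrives before $u$'s deadline while its deadline is strictly later than $u$'s, the accumulated values split cleanly as $\alpha_u = A_u + P_u + D_u$ and $\alpha_v = A_v + P_v$, where: $A_u$ (resp.\ $A_v$) is what $u$ (resp.\ $v$) collects from its eager matching at arrival, while its water level climbs from $0$ to $t_u$ (resp.\ to $t_v$); $P_u = \int_{t_u}^{p_u} f(x)\dd{x}$ and $P_v = \int_{t_v}^{p_v} f(x)\dd{x}$ are the passive gains collected while the water levels climb monotonically from $t_u$ to $p_u$ and from $t_v$ to $p_v$ (these intervals are meaningful because water levels never decrease, and between a vertex's arrival and the deadline under consideration its level changes only through passive matches, each contributing exactly $\dd{\alpha}=f(x)\dd{x}$ per $\dd{x}$ at level $x$); and $D_u$ is $u$'s active gain at its own deadline. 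Thus it suffices to prove $A_u \ge t_u f(t_u)$, $A_v \ge t_v f(t_v)$, and $D_u \ge (1-p_u)(1-f(p_v))$.

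The two bounds on $A_u, A_v$ are where the eager step earns its keep, so I would establish them first. When $u$ arrives it runs water-filling over its current neighbors: while $x_u$ grows from $0$ to $t_u$ it pours into the currently lowest neighbor(s), whose common level $\ell(s)$ at the moment $x_u=s$ is non-decreasing in $s$, and each such $\dd{x}$ yields $\dd{\alpha_u}=(1-f(\ell(s)))\dd{x}$. The arrival step halts exactly when the eager inequality first becomes tight for the lowest neighbor, i.e.\ $f(t_u)+f(\ell(t_u))=1$ (stopping at $x_u=1$ instead cannot occur once any water has been poured, since then $\ell>0$ and $f(x_u)+f(\ell)>1$ strictly before $x_u$ reaches $1$; and if $u$ matches nothing at arrival then $t_u=0$ and there is nothing to prove). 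Hence $1-f(\ell(s)) \ge 1-f(\ell(t_u)) = f(t_u)$ for every $s\in[0,t_u]$, so $A_u = \int_0^{t_u}(1-f(\ell(s)))\dd{s} \ge t_u f(t_u)$; the identical argument at $v$'s arrival gives $A_v \ge t_v f(t_v)$.

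For $D_u$ I would reuse the water-filling picture at $u$'s deadline: the pool level $\ell(s)$ at $x_u=s$ is non-decreasing and ends at some value $L$, and each $\dd{x}$ yields $\dd{\alpha_u}=(1-f(\ell(s)))\dd{x}$. First one checks $L \le p_v$: if $v$ is ever poured into it joins the pool and finishes at level $L$, so $p_v=L$; otherwise $v$'s level stayed at or above the pool throughout, so $p_v \ge L$. Therefore $1-f(\ell(s)) \ge 1-f(L) \ge 1-f(p_v)$ throughout, giving $D_u \ge (q_u-p_u)(1-f(p_v))$ where $q_u\le 1$ is $u$'s water level right after its deadline. To upgrade $q_u-p_u$ to $1-p_u$ I split cases: if $p_v=1$ then $1-f(p_v)=0$ (recall $f(1)=1$) and $D_u\ge 0=(1-p_u)(1-f(p_v))$; if $p_v<1$ then $v$ is a neighbor of $u$ left below saturation after $u$'s deadline, so the deadline step cannot have stopped because every neighbor reached level $1$ — it stopped with $x_u=1$, i.e.\ $q_u=1$, and $D_u\ge(1-p_u)(1-f(p_v))$.

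Summing the five bounds, $\alpha_u+\alpha_v = A_u+A_v+P_u+P_v+D_u$ is at least the right-hand side of~\eqref{eq:gain-ewf}. The decomposition itself and the two passive integrals are routine bookkeeping; the real work sits in the two places flagged above — the eager-bonus inequalities $A_u\ge t_u f(t_u)$ and $A_v\ge t_v f(t_v)$, which hinge on the precise stopping rule $f(x_u)+f(x_v)=1$ of the arrival step together with monotonicity of the pool level, and the observation that $p_v<1$ forces $q_u=1$, without which only the weaker coefficient $q_u-p_u$ would survive in the last term. I expect the case where $v$ is never poured into at $u$'s deadline (so that $p_v$ is $v$'s pre-deadline level) to be the subtlest point to phrase carefully, since one must argue $p_v\ge L$ rather than $p_v = L$ there.
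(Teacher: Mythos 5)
Your proposal is correct and follows essentially the same route as the paper: bound $\alpha_u,\alpha_v$ piecewise over the arrival/passive/deadline phases, use the pool-level monotonicity plus the stopping rule $f(x_u)+f(x_v)\le 1$ to get the eager bonuses $t_u f(t_u)$ and $t_v f(t_v)$, and handle the deadline term by splitting on whether $p_v=1$. You merely make explicit a few steps the paper leaves implicit (that the arrival step halts at the tight constraint rather than at $x_u=1$, and that the deadline pool level $L$ satisfies $L\le p_v$ in both the ``$v$ joins the pool'' and ``$v$ stays above the pool'' subcases).
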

\begin{proof}
	Recall that when $v$ arrives, $v$ matches some neighbor actively until $x_v = t_v$.
	Moreover, when $x_v$ increases (actively) from $0$ to $t_v$, the neighbor $z$ it matches always satisfies $f(x_z) + f(t_v) \leq 1$.
	Thus when $x_v$ increases by $\dd x$ the gain of $v$ is $(1-f(x_z))\dd x \geq f(t_v)\dd x$.
	Hence right after $v$'s arrival we have $\alpha_v \geq t_v\cdot f(t_v)$.
	When $x_v$ further increases from $t_v$ to $p_v$ between $v$'s arrival and $u$'s deadline, $\alpha_v$ increases at the rate of $f(x_v)$.
	Thus after $u$'s deadline we have $\alpha_v \ge t_v \cdot f(t_v) + \int_{t_v}^{p_v} f(x) \dd{x}$.
	
	Similarly, right before $u$'s deadline we have $\alpha_u \ge t_u \cdot f(t_u) + \int_{t_u}^{p_u} f(x) \dd{x}$.
	If $p_v = 1$, then $(1-p_u)\cdot(1-f(p_v)) = 0$ and the statement is proved.
	Otherwise at $u$'s deadline, $x_u$ increases (actively) from $p_u$ to $1$, and $u$ always matches a neighbor with water level at most $p_v$.
	Thus after the deadline of $u$ we have $\alpha_u \ge t_u \cdot f(t_u) + \int_{t_u}^{p_u} f(x) \dd{x} + (1-p_u)\cdot (1-f(p_v))$.
	
	Putting the lower bounds of $\alpha_u$ and $\alpha_v$ together concludes the proof.	
\end{proof}

\paragraph{Comparison with \wtf.}
We make a comparison to the competitive analysis of \wtf by Huang et al.~\cite{soda/HuangPTTWZ19}. Let $p_u,p_v$ be defined in the same way as \ewf for \wtf and dual variables be also updated in the same way. Observe that \wtf is exactly the second step of our \ewf algorithm. Huang et al. proved that
\begin{equation}
\alpha_u+\alpha_v \geq \int_0^{\pw_u} f(x) \dd x + (1-{\pw_u})(1-f(p_v)) + \int_0^{p_v} f(x)\dd x.\label{eq:gain-wf1}
\end{equation}

Observe that Eqn.~\eqref{eq:gain-ewf} is at least as good as Eqn.~\eqref{eq:gain-wf1}, because $t \cdot f(t) \geq \int_0^{t} f(x) \dd x$ for all $t$.
On the other hand, we have not shown any constraint on the values of $t_u,t_v$.
In the case when $t_u = t_v = 0$, Eqn.~\eqref{eq:gain-ewf} degenerates to Eqn.~\eqref{eq:gain-wf1}.

We continue our analysis by observing that if $v$ arrives earlier than $u$ then right after $u$'s arrival we have $f(t_u) + f(x_v) > 1$, and $x_v \leq p_v$.
Thus we have the constraint that $f(t_u) + f(p_v) > 1$.
Similarly, if $u$ arrives earlier than $v$ then we have $f(t_v) + f(p_u) > 1$.

Combining the constraints on $t_u,p_u,t_v,p_v$ with Lemma~\ref{lemma:gain-of-u-and-v-ewf}, we show that there exists function $f$ such that the total gain of $u$ and $v$ combined is strictly larger than the ratio $2-\sqrt{2}$ that is proved tight for Water-filling.

\subsection{Reformulating the Lower Bound}

It remains to find an increasing function $f$ such that the minimum of RHS of Eqn.~\eqref{eq:gain-ewf}, over possible values of $t_u,t_v,p_u,p_v$, is maximized.
In this section we reformulate the lower bound and eliminate $t_u$ and $t_v$ from the lower bound.

Since $f$ is strictly increasing, it is easy to see that the RHS of Eqn.~\eqref{eq:gain-ewf} is increasing w.r.t. both $t_u$ and $t_v$. Indeed, the function $t \cdot f(t) - \int_0^t f(x) \dd x$ is monotonically increasing in $t$.
Thus the minimum is achieved when $t_u$ and $t_v$ are minimized, subject to the constraint
\begin{align*}
f(t_u) + f(p_v) & > 1 \text{ if $v$ arrives earlier than $u$, or}\\
f(t_v) + f(p_u) & > 1 \text{ if $u$ arrives earlier than $v$}. 
\end{align*}

Let $h(\cdot) = f^{-1}(\cdot)$ be the inverse function of $f$.
Note that $h$ is also an increasing function defined on $[0,1]$ such that $h(0)=0$ and $h(1) = 1$.
For any $p\in [0,1]$, we have
\begin{equation*}
\int_0^p f(x) \dd x = p\cdot f(p) - \int_0^{f(p)} h(y) \dd y.
\end{equation*}

\begin{lemma}\label{lemma:v-arrive-earlier}
	If $v$ arrives earlier than $u$, then we have
	\begin{equation*}
	\alpha_u + \alpha_v \ge \min_q\left\{ q \cdot h(q) - \int_0^q h(y) \dd{y} + 1 - q \right\}.
	\end{equation*}
\end{lemma}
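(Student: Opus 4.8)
The plan is to start from Lemma~\ref{lemma:gain-of-u-and-v-ewf}, first eliminate $t_u$ and $t_v$ from the bound using monotonicity plus the arrival-order constraint, and then rewrite the remainder through the conjugate identity $\int_0^p f(x)\,\dd x = p\,f(p) - \int_0^{f(p)} h(y)\,\dd y$. For the first step, observe that $t \mapsto t\,f(t) - \int_0^t f(x)\,\dd x$ has derivative $t\,f'(t) \ge 0$, so the right-hand side of Eqn.~\eqref{eq:gain-ewf} is nondecreasing in each of $t_u$ and $t_v$ on the admissible ranges $t_u \le p_u$, $t_v \le p_v$ (water levels only grow over time). Since $v$ arrives earlier than $u$, the constraint $f(t_u) + f(p_v) > 1$ is in force, while $t_v$ is otherwise unconstrained. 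Hence $\alpha_u + \alpha_v$ is at least the value of that right-hand side at $t_v = 0$ and $t_u = h\bigl(1 - f(p_v)\bigr)$ (the infimum of the admissible $t_u$, to which the inequality passes by continuity). Writing $q := f(p_v)$ and $r := f(p_u)$, both in $[0,1]$, I also record that $r \ge f(t_u) > 1-q$, hence $r \ge 1-q$.

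Next I collapse the $t$-terms. Setting $t_v=0$ turns the first two terms of Eqn.~\eqref{eq:gain-ewf} into $\int_0^{p_v} f = q\,h(q) - \int_0^q h(y)\,\dd y$. For the $u$-part, write $h(1-q)(1-q) + \int_{h(1-q)}^{p_u} f = \int_0^{p_u} f + \bigl(h(1-q)(1-q) - \int_0^{h(1-q)} f\bigr)$, and the conjugate identity applied at $p = h(1-q)$ shows the bracketed quantity equals $\int_0^{1-q} h(y)\,\dd y$, so all dependence on the chosen $t_u$-point collapses into this one clean integral. Using $p_u = h(r)$ and $\int_0^{p_u} f = r\,h(r) - \int_0^r h(y)\,\dd y$, this yields
\[
\alpha_u + \alpha_v \ \ge\ \Bigl(q\,h(q) - \int_0^q h(y)\,\dd y\Bigr) + \Bigl(r\,h(r) - \int_0^r h(y)\,\dd y\Bigr) + \int_0^{1-q} h(y)\,\dd y + (1 - h(r))(1-q).
\]

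To finish, it suffices to show the last three terms sum to at least $1-q$, i.e.\ $\bigl(r\,h(r) - \int_0^r h\bigr) + \int_0^{1-q} h + (1-h(r))(1-q) \ge 1-q$. Rearranging, this is exactly $\bigl(r - (1-q)\bigr)\,h(r) \ge \int_{1-q}^{r} h(y)\,\dd y$, which holds since $r \ge 1-q$ (from the first paragraph) and $h$ is nondecreasing. Therefore $\alpha_u+\alpha_v \ge q\,h(q) - \int_0^q h(y)\,\dd y + 1 - q$, and as $q = f(p_v)$ ranges over $[0,1]$ this is at least $\min_{q}\bigl\{ q\,h(q) - \int_0^q h(y)\,\dd y + 1 - q \bigr\}$, the claimed bound. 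The degenerate case $p_v = 1$ (so $q=1$) is subsumed: the constraint on $t_u$ becomes vacuous, $h(1-q)=0$, every term carrying a $1-q$ factor vanishes, and the last inequality degenerates to $r\,h(r) \ge \int_0^r h$.

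The delicate part is the first two paragraphs: correctly arguing that the infimum of the continuous right-hand side over the strictly-constrained $t_u$ is attained on the boundary $t_u = h(1-q)$, and then cancelling the awkward term $t_u f(t_u)$ against $\int_0^{t_u} f$ via the conjugate identity so that all $t_u$-dependence collapses into $\int_0^{1-q} h$. Once the bound is in the displayed form, the conclusion is a one-line monotonicity estimate for $h$, and an analogous argument (with the roles of the constraints swapped) will handle the companion case where $u$ arrives earlier than $v$.
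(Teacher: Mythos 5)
Your proof is correct, but it takes a genuinely different route from the paper's. Both start by sending $t_v \to 0$ and $t_u \to h(1-q)$ where $q = f(p_v)$. At that point the paper eliminates $p_u$ \emph{before} any rewriting: it notes the derivative of the bound in $p_u$ is $f(p_u) + f(p_v) - 1 \ge 0$ (since $p_u \ge t_u$ and $f(t_u) = 1-q$), so the minimum is at $p_u = t_u$, whereupon $t_u f(t_u) + (1-t_u)(1-f(p_v))$ collapses to $1-q$ immediately and the one-variable expression falls out in a single line. You instead keep $p_u$ alive (as $r = f(p_u)$), push the whole expression through the conjugate identity, and land on the two-parameter form
\[
q\,h(q) - \int_0^q h + r\,h(r) - \int_0^r h + \int_0^{1-q} h + (1-h(r))(1-q),
\]
which is the mirror image of the bound in Lemma~\ref{lemma:u-arrive-earlier} (the only asymmetry is $\int_0^{1-q}h$ versus $\int_0^{1-q_u}h$, reflecting which endpoint's arrival is constrained). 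You then show the $r$-dependent piece is at least $1-q$ by rearranging to $(r-(1-q))h(r) \ge \int_{1-q}^r h$, which follows from $r \ge 1-q$ and monotonicity of $h$. What the paper's route buys is brevity: the derivative trick avoids the two-variable detour. What yours buys is structural symmetry: it makes explicit that Lemmas~\ref{lemma:v-arrive-earlier} and \ref{lemma:u-arrive-earlier} arrive at twin two-variable bounds, with the earlier lemma happening to admit a further collapse because the constraint $r \ge 1-q$ forces the extra terms to bound out. One small stylistic note: when you write ``$r \ge f(t_u) > 1-q$'' you are reasoning from the algorithm's actual $t_u$, whereas a sentence earlier you substituted the infimum $t_u = h(1-q)$ for which $f(t_u) = 1-q$ exactly; it would be cleaner to derive $r \ge 1-q$ once from $p_u \ge t_u$ and the constraint, before passing to the infimum, so the two uses of the symbol $t_u$ are not conflated.
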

\begin{proof}
	Let $q_u=f(p_u), q_v=f(p_v)$.
	Recall that if $v$ arrives earlier than $u$ then the minimum of RHS of Eqn.\eqref{eq:gain-ewf} is achieved when $t_u = f^{-1}(1-f(p_v)) = h(1-q_v)$ and $t_v = 0$:
	\begin{equation*}
	\alpha_v + \alpha_u\ge \int_{0}^{p_v} f(x) \dd{x} + t_u \cdot f(t_u) + \int_{t_u}^{p_u} f(x) \dd{x}+ (1-p_u) \cdot (1-f(p_v)).
	\end{equation*}
	
	Observe that the derivative of RHS of the above equation over $p_u$ is $f(p_u) + f(p_v) - 1 \ge 0$, which implies that the minimum is achieved when $p_u$ is minimized, i.e., $p_u = t_u$. Thus we have
	\begin{equation*}
	\alpha_v + \alpha_u\ge \int_{0}^{p_v} f(x) \dd{x} + t_u \cdot f(t_u) + (1-t_u) \cdot (1-f(p_v)).
	\end{equation*}
	
	Using $t_u = h(1-q_v)$ we have $f(t_u) = 1-q_v = 1-f(p_v)$, which implies
	\begin{equation*}
	\alpha_v + \alpha_u\ge \int_{0}^{p_v} f(x) \dd{x} + 1-f(p_v) = h(q_v)\cdot q_v - \int_0^{q_v} h(y) \dd y + 1 - q_v.
	\end{equation*}
	
	Taking minimum of the RHS over $q_v$ yields the lemma. 
\end{proof}

\begin{lemma}\label{lemma:u-arrive-earlier}
	If $u$ arrives earlier than $v$, then we have
	\begin{equation*}
	\alpha_u + \alpha_v \ge
	\min_{q_u,q_v} \Big\{ q_u\cdot h(q_u) - \int_0^{q_u}h(y) \dd y
	+ q_v\cdot h(q_v) - \int_0^{q_v}h(y) \dd y 
	+ \int_0^{1-q_u} h(y) \dd y + (1-h(q_u)) \cdot (1-q_v) \Big\}.
	\end{equation*}
\end{lemma}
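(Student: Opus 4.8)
The plan is to run the same change-of-variables argument as in the proof of Lemma~\ref{lemma:v-arrive-earlier}, but now using the coupling constraint that is active when $u$ arrives before $v$. Two ingredients are already in hand: the gain bound Eqn.~\eqref{eq:gain-ewf} of Lemma~\ref{lemma:gain-of-u-and-v-ewf}, and the inequality $f(t_v) + f(p_u) \ge 1$ established above (at the end of $v$'s eager phase every neighbour $z$ of $v$ has $f(t_v) + f(x_z) > 1$, and $u$'s water level at that moment is at most $p_u$ since it precedes $u$'s deadline).

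First I would substitute $q_u = f(p_u)$ and $q_v = f(p_v)$, so $p_u = h(q_u)$ and $p_v = h(q_v)$, and rewrite the two ``arrival-to-deadline'' chunks of Eqn.~\eqref{eq:gain-ewf} with the identity $\int_0^p f(x)\,\dd x = p f(p) - \int_0^{f(p)} h(y)\,\dd y$ recorded above, which gives $t f(t) + \int_t^p f(x)\,\dd x = p f(p) - \int_0^{f(p)} h(y)\,\dd y + \int_0^{f(t)} h(y)\,\dd y$ for any $t$. Using also $(1-p_u)(1-f(p_v)) = (1-h(q_u))(1-q_v)$, Eqn.~\eqref{eq:gain-ewf} becomes
$$\alpha_u + \alpha_v \ge q_u h(q_u) - \int_0^{q_u} h(y)\,\dd y + q_v h(q_v) - \int_0^{q_v} h(y)\,\dd y + \int_0^{f(t_u)} h(y)\,\dd y + \int_0^{f(t_v)} h(y)\,\dd y + (1-h(q_u))(1-q_v).$$

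Second I would eliminate $t_u$ and $t_v$. Since $h \ge 0$, the right-hand side is non-decreasing in $f(t_u)$ and in $f(t_v)$, so I may lower each to its smallest admissible value: $f(t_u) \ge f(0) = 0$ kills the $\int_0^{f(t_u)} h$ term, and $f(t_v) \ge 1 - f(p_u) = 1 - q_u$ turns $\int_0^{f(t_v)} h$ into $\int_0^{1-q_u} h(y)\,\dd y$. What is left is exactly the expression being minimised in the statement, evaluated at $(q_u, q_v) \in [0,1]^2$, hence at least its minimum over all $q_u, q_v \in [0,1]$, which is the claim.

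I do not expect a real obstacle; it is the bookkeeping substitution of Lemma~\ref{lemma:v-arrive-earlier} re-run with the ``$u$ earlier'' constraint. The one point that needs care is to explain why one should \emph{not} also push $p_u$ down to $t_u$ as was done there: in Lemma~\ref{lemma:v-arrive-earlier} the binding constraint on $t_u$ involved $p_v$, so $p_u$ could be optimised away, whereas here the binding constraint $f(t_v) \ge 1 - q_u$ involves $q_u = f(p_u)$ itself, so $p_u$ must be kept and is absorbed into the outer minimisation over $q_u$. I would also record, for the later design of $f$, the extra relation $q_u + q_v \ge 1$ that every instance satisfies (from $q_v = f(p_v) \ge f(t_v) \ge 1 - q_u$), even though the lemma as stated does not need it.
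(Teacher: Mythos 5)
Your proposal is correct and follows essentially the same route as the paper: both exploit the monotonicity of $t\mapsto t f(t)-\int_0^t f$ to push $t_u$ down to $0$ and $t_v$ down to $h(1-q_u)$, and both apply the identity $\int_0^p f = p f(p) - \int_0^{f(p)} h$ to rewrite the bound in terms of $h$ and the variables $q_u = f(p_u)$, $q_v = f(p_v)$. The only cosmetic difference is that you perform the change of variables before minimizing over $t_u, t_v$ whereas the paper minimizes first; your added remark about why $p_u$ cannot be eliminated (its image $q_u$ enters the binding constraint on $t_v$) is a correct and useful clarification.
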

\begin{proof}
	Let $q_u=f(p_u), q_v=f(p_v)$.
	If $u$ arrives earlier than $v$ then the minimum of RHS of Eqn.\eqref{eq:gain-ewf} is achieved when $t_v = f^{-1}(1-f(p_u)) = h(1-q_u)$ and $t_u = 0$:
	\begin{align*}
	& \alpha_v + \alpha_u \ge t_v \cdot f(t_v) + \int_{t_v}^{p_v} f(x) \dd{x} + \int_{0}^{p_u} f(x) \dd{x} + (1-p_u) \cdot (1-f(p_v)) \\
	= & t_v\cdot f(t_v) + \int_{0}^{p_v} f(x) \dd{x} - \int_0^{t_v} f(x)\dd x + \int_{0}^{p_u} f(x) \dd{x} + (1-p_u) \cdot (1-f(p_v)) \\
	= & h(1-q_u)\cdot (1-q_u) + \left( h(q_v)\cdot q_v - \int_0^{q_v}h(y)\dd y \right) - \left( h(1-q_u)\cdot (1-q_u) - \int_0^{1-q_u} h(y) \dd y \right) \\
	& \qquad + \left( h(q_u)\cdot q_u - \int_0^{q_u}h(y)\dd y \right) + (1-h(q_u))\cdot (1-q_v) \\
	= & q_u\cdot h(q_u) - \int_0^{q_u}h(y) \dd y
	+ q_v\cdot h(q_v) - \int_0^{q_v}h(y) \dd y 
	+ \int_0^{1-q_u} h(y) \dd y + (1-h(q_u)) \cdot (1-q_v).
	\end{align*}
	
	Taking minimum of the RHS over $q_u$ and $q_v$ yields the lemma. 
\end{proof}

Finally, we use factor revealing lp techniques to find function $h$ with the following property. The proof of Lemma~\ref{lemma:lower-bound-ratio-ewf} is deferred to Appendix~\ref{sec:factor_revealing_lp}.

\begin{lemma}\label{lemma:lower-bound-ratio-ewf}
	There exists an increasing function $h:[0,1]\mapsto [0,1]$ such that for $\Gamma = 0.5926$: 
	\begin{align}
	\forall q\in[0,1], \qquad & q \cdot h(q) - \int_0^q h(y) \dd{y} + 1 - q \geq \Gamma, \label{eqn:v_earlier}\\
	\forall q_u,q_v\in[0,1],\qquad & q_u\cdot h(q_u) - \int_0^{q_u}h(y) \dd y
	+ q_v\cdot h(q_v) - \int_0^{q_v}h(y) \dd y \notag \\ 
	& \qquad + \int_0^{1-q_u} h(y) \dd y + (1-h(q_u)) \cdot (1-q_v) \geq \Gamma, \label{eqn:u_earlier}\\
	& h(0) = 0, h(1) = 1.
	\end{align}
\end{lemma}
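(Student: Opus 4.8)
The plan is to obtain $h$ as the (numerically computed) solution of the differential equation that arises from making the bounds in Lemmas~\ref{lemma:v-arrive-earlier} and~\ref{lemma:u-arrive-earlier} tight along their worst-case configurations, and then to certify $\Gamma = 0.5926$ by a factor-revealing linear program. First I would rewrite the two families more compactly. Using integration by parts, $q\cdot h(q) - \int_0^q h(y)\dd{y} = \int_0^q y\,h'(y)\dd{y} =: A(q)$, so \eqref{eqn:v_earlier} reads $\phi(q) := A(q) + 1 - q \ge \Gamma$, and, writing $H(q) := \int_0^q h(y)\dd{y}$, \eqref{eqn:u_earlier} reads $\Psi(q_u,q_v) := A(q_u) + A(q_v) + H(1-q_u) + (1-h(q_u))(1-q_v) \ge \Gamma$. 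Note $A'(q) = q\,h'(q)$.

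For \eqref{eqn:v_earlier} one computes $\phi'(q) = q\,h'(q) - 1$; since $\phi(0) = 1 > \Gamma$, the minimum of $\phi$ over $[0,1]$ is attained either at $q = 1$, where the constraint becomes $\int_0^1 h(y)\dd{y} \le 1 - \Gamma$, or at an interior point with $q\,h'(q) = 1$. For \eqref{eqn:u_earlier}, the first-order conditions are $q_v\,h'(q_v) = 1 - h(q_u)$ and $h'(q_u)(q_u + q_v - 1) = h(1-q_u)$, together with the boundary cases $q_u \in \{0,1\}$ and $q_v \in \{0,1\}$. Imposing equality in the active constraint along the worst-case locus turns these stationarity relations into an (integro-)differential equation for $h$ that couples the value of $h$ at $q$ with its value at $1-q$, as is typical for general-graph analyses. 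I would solve this equation, and expect the optimal $h$ to be piecewise-defined: identically $0$ on an initial segment $[0,q_0]$ and following the solution of the equation on $[q_0,1]$, pinned down by $h(0)=0$, $h(1)=1$, and the requirement that $h$ be increasing.

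With a concrete candidate $h$ in hand, the remaining task is to verify both inequality families for all $q,q_u,q_v \in [0,1]$. Rather than argue this analytically, I would discretize: partition $[0,1]$ into $N$ equal subintervals, represent $h$ by its values $0 = h_0 \le h_1 \le \cdots \le h_N = 1$ at the grid points, and approximate every integral ($H$, $A$, and $\int_0^{1-q_u} h$) by the corresponding Riemann sum. Because $h$ is monotone and bounded in $[0,1]$, each integral estimate and each evaluation $h(q)$ has error $O(1/N)$, and by monotonicity $A$, $H$, and the bilinear term above are Lipschitz, so $\phi$ and $\Psi$ vary by $O(1/N)$ between adjacent grid configurations; hence it suffices to enforce the finitely many discretized inequalities with a margin $c/N$ for an explicit constant $c$. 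This produces a linear program in $h_0,\dots,h_N$ (monotonicity, the boundary values, and the margin-adjusted discretizations of \eqref{eqn:v_earlier} and \eqref{eqn:u_earlier}); maximizing $\Gamma$ over this LP with $N$ large certifies a feasible $\Gamma \ge 0.5926$, and the slack above the claimed $0.592$ absorbs the $O(1/N)$ rounding when one reads off an honest increasing $h$ — e.g.\ the piecewise-linear interpolant of the LP optimizer — that satisfies the exact continuous inequalities.

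The main obstacle I anticipate is twofold. First, correctly identifying which configuration of $(q_u,q_v)$ realizes the minimum in \eqref{eqn:u_earlier}: the stationarity system couples $h$ at $q$ and at $1-q$, so the worst case need not be interior, and a careful case analysis (interior critical point versus the boundary edges $q_u,q_v \in \{0,1\}$) is needed to be sure the differential equation we solve is the right one. Second, making the discretization rigorous — writing down explicit Lipschitz constants for $\phi$ and $\Psi$ in terms of the grid spacing so that feasibility of the finite LP provably implies the continuous statement, and checking that the LP optimum leaves enough room ($0.5926$ versus $0.592$) to survive the interpolation step. The actual numerical solution of the LP and of the ODE is routine once these two points are settled.
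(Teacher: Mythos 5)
Your proposal is, at its core, the same strategy the paper uses: discretize the functional inequalities, turn them into a finite factor-revealing LP over grid values of $h$, build a small safety margin into each LP constraint, solve the LP numerically, and then transfer the discrete guarantee back to the continuous one via the piecewise-linear interpolant. Two aspects differ, both in the direction of making your version heavier than it needs to be. First, the ODE step is superfluous: the LP already has $\{h(y)\}_{y\in[0,1]_n}$ as free variables and $\Gamma$ as the objective, so the solver \emph{finds} the best discrete $h$ directly; there is no need to first guess the worst-case locus and solve the resulting stationarity system. (The paper does not derive or solve any ODE for this lemma.) Second, and more substantively, your error budget is cruder than necessary. You propose a naive Riemann-sum integral estimate and then a Lipschitz argument, concluding a margin of order $c/N$. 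But for a piecewise-linear $h$ the trapezoidal rule is \emph{exact}, and the paper instead proves an interpolation lemma (its Claim bounding $H(q)-[(1-z_q)H(\bar q)+z_q H(\hat q)]$) showing that the relevant quantities sit within $O(1/n^2)$ of the chord between adjacent grid values. That lets the LP use slacks of the form $3/(4n^2)$ and $7/(4n^2)$ rather than $c/n$, which is what makes $n=1000$ comfortably sufficient; with an $O(1/N)$ margin you would have to push $N$ considerably higher, since the total gain over $2-\sqrt2\approx 0.5858$ is only about $0.006$. Also note that an $O(1/N)$ Lipschitz bound on the constraint functions presupposes a slope bound on $h$; the paper enforces this explicitly via a $2$-Lipschitzness constraint in the LP, and your write-up should include the analogous constraint, otherwise the LP could return a near-step function and your Lipschitz constant would blow up with $N$. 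With those two fixes — drop the ODE, replace the Riemann/Lipschitz bound by the trapezoidal/interpolation bound and add the explicit slope constraint — your argument coincides with the paper's.
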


\begin{theorem}[Theorem~\ref{thm:eager-wf} Restated]
	Eager Water-filling with the function $f=h^{-1}$ where $h$ is chosen in Lemma~\ref{lemma:lower-bound-ratio-ewf} is $0.592$-competitive for fractional fully online matching on general graphs.
\end{theorem}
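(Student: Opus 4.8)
The plan is to invoke the randomized online primal dual framework of Lemma~\ref{lem:primal-dual}. Since \ewf is deterministic, the two conditions there become the deterministic statements (i) \emph{reverse weak duality}, $P \ge D$, and (ii) \emph{approximate dual feasibility}, $\alpha_u + \alpha_v \ge \Gamma$ for every edge $(u,v)\in E$, with $\Gamma = 0.592$. The bulk of the work has already been done in this section, so the proof amounts to assembling Lemmas~\ref{lemma:gain-of-u-and-v-ewf}, \ref{lemma:v-arrive-earlier}, \ref{lemma:u-arrive-earlier}, and \ref{lemma:lower-bound-ratio-ewf}.

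For (i), note that whenever $\dd x$ of an edge $(u,v)$ is matched the algorithm sets $\dd\alpha_u = (1-f(x_v))\dd x$ and $\dd\alpha_v = f(x_v)\dd x$, so $\dd\alpha_u + \dd\alpha_v = \dd x$ equals exactly the increment of the matching size $\sum_{(u,v)\in E} x_{uv} = P$; integrating over the whole run gives $D = \sum_{u\in V}\alpha_u = P$, so reverse weak duality holds with equality. Moreover $f = h^{-1}$ is a genuine increasing map from $[0,1]$ to $[0,1]$ with $f(0)=0$, $f(1)=1$ because $h$ has these properties by Lemma~\ref{lemma:lower-bound-ratio-ewf}; hence $1-f(x_v)$ and $f(x_v)$ are both nonnegative, so every $\alpha_u \ge 0$, as required by Lemma~\ref{lem:primal-dual}, and $f$ is a valid input for the definition of \ewf.

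For (ii), fix an edge $(u,v)$ and assume without loss of generality that $u$ has the earlier deadline; since all neighbors of $u$ arrive before $u$'s deadline, both $u$ and $v$ have arrived by that time, and the only remaining branching is on which of $u,v$ arrived first. Lemma~\ref{lemma:gain-of-u-and-v-ewf} supplies the uniform lower bound~\eqref{eq:gain-ewf} on $\alpha_u+\alpha_v$ in the four water levels $t_u,t_v,p_u,p_v$. If $v$ arrives before $u$, the stopping rule of the eager step together with monotonicity of water levels in time gives the constraint $f(t_u)+f(p_v) > 1$, and plugging in the minimizing choice ($t_v=0$, $t_u = h(1-f(p_v))$, $p_u=t_u$) collapses~\eqref{eq:gain-ewf} to the single-variable bound of Lemma~\ref{lemma:v-arrive-earlier}; symmetrically, if $u$ arrives before $v$ the constraint is $f(t_v)+f(p_u) > 1$ and we land on the two-variable bound of Lemma~\ref{lemma:u-arrive-earlier}. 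In either case Lemma~\ref{lemma:lower-bound-ratio-ewf}, applied to the chosen $h$, shows the right-hand side is at least $\Gamma = 0.5926 > 0.592$. Hence $\alpha_u + \alpha_v \ge 0.592$ for every edge, (ii) holds, and Lemma~\ref{lem:primal-dual} yields the claimed competitive ratio.

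Inside this argument the only delicate points are the two constraints $f(t_u)+f(p_v) > 1$ and $f(t_v)+f(p_u) > 1$ driving the case lemmas — which rely on the observation that if, say, $v$ arrived earlier then at $u$'s arrival we had $f(t_u)+f(x_v) > 1$ with $x_v \le p_v$ — and the degenerate sub-case $p_v = 1$, where the term $(1-p_u)(1-f(p_v))$ vanishes and the bound~\eqref{eq:gain-ewf} already proven in Lemma~\ref{lemma:gain-of-u-and-v-ewf} short-circuits. The genuinely heavy step, constructing $h$ so that inequalities~\eqref{eqn:v_earlier} and~\eqref{eqn:u_earlier} both hold with $\Gamma = 0.5926$, is isolated in Lemma~\ref{lemma:lower-bound-ratio-ewf} and handled by a factor-revealing LP in the appendix, so it need not be revisited here.
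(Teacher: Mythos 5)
Your proof follows essentially the same route as the paper's: it assembles the two case lemmas (Lemma~\ref{lemma:v-arrive-earlier} when $v$ arrives first, Lemma~\ref{lemma:u-arrive-earlier} when $u$ arrives first) with the LP-derived bounds of Lemma~\ref{lemma:lower-bound-ratio-ewf} to get approximate dual feasibility, and notes that reverse weak duality holds with equality by construction of the dual updates. Your version adds the minor but worthwhile bookkeeping that $f=h^{-1}$ is a valid increasing bijection with $f(0)=0$, $f(1)=1$ and that each $\alpha_u\ge 0$, which the paper leaves implicit.
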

\begin{proof}
	We conclude the competitive ratio of \ewf by putting the lemmas together. Approximate dual feasibility follows by the two cases.
	If $v$ arrives earlier than $u$, we have
	\begin{align*}
	\alpha_u+\alpha_v \ge & \min_q\left\{ q \cdot h(q) - \int_0^q h(y) \dd{y} + 1 - q \right\} \tag{Lemma~\ref{lemma:v-arrive-earlier}} \\
	\ge & \Gamma = 0.592~. \tag{Eqn.~\eqref{eqn:v_earlier}}
	\end{align*}
	If $u$ arrives earlier than $v$, we have
	\begin{align*}
	\alpha_u + \alpha_v \ge & \min_{q_u,q_v} \bigg\{ q_u\cdot h(q_u) - \int_0^{q_u}h(y) \dd y + q_v\cdot h(q_v) - \int_0^{q_v}h(y) \dd y \\
	& \phantom{\min_{q_u,q_v} \bigg\{} + \int_0^{1-q_u} h(y) \dd y + (1-h(q_u)) \cdot (1-q_v) \bigg\} \tag{Lemma~\ref{lemma:u-arrive-earlier}} \\
	\ge & \Gamma = 0.592~. \tag{Eqn.~\eqref{eqn:u_earlier}}
	\end{align*}
	Finally, recall that reverse weak duality follows trivially with equality by our definition of the dual variables.
\end{proof}

\section{Future Directions}

\paragraph{Balanced Ranking vs.\ Ranking on General Graphs.}
An immediate next question about Balanced Ranking is whether it is still better than Ranking on general graphs.
This is beyond the scope of the current paper since a tight analysis of Ranking remains elusive.
An easier task is to show that Balanced Ranking is strictly better than $0.5211$-competitive on general graphs.
We leave these questions for future research.

\paragraph{Balanced Ranking with Eager Matches.}
Another interesting direction is to explore the power of eager matches in integral fully online matching algorithms.
There is a natural definition of Eager Ranking where a vertex $v$ may be eagerly matched on its arrival to a neighbor $u$ if $1-g(y_u) \ge g(y_x)$.
However, it is at best $\Omega \approx 0.567$-competitive due to the same hard instance for Ranking by \citet{stoc/HuangKTWZZ18}.
There is also a natural definition of Eager Balanced Ranking but its analysis seems to require ideas beyond those in this paper.

{
	\bibliography{matching}

\begin{thebibliography}{23}
\providecommand{\natexlab}[1]{#1}
\providecommand{\url}[1]{\texttt{#1}}
\expandafter\ifx\csname urlstyle\endcsname\relax
  \providecommand{\doi}[1]{doi: #1}\else
  \providecommand{\doi}{doi: \begingroup \urlstyle{rm}\Url}\fi

\bibitem[Aggarwal et~al.(2011)Aggarwal, Goel, Karande, and
  Mehta]{soda/AggarwalGKM11}
Gagan Aggarwal, Gagan Goel, Chinmay Karande, and Aranyak Mehta.
\newblock Online vertex-weighted bipartite matching and single-bid budgeted
  allocations.
\newblock In \emph{SODA}, pages 1253--1264, 2011.

\bibitem[Ashlagi et~al.(2019)Ashlagi, Burq, Dutta, Jaillet, Saberi, and
  Sholley]{ec/AshlagiBDJSS19}
Itai Ashlagi, Maximilien Burq, Chinmoy Dutta, Patrick Jaillet, Amin Saberi, and
  Chris Sholley.
\newblock Edge weighted online windowed matching.
\newblock In \emph{EC}, pages 729--742, 2019.

\bibitem[Birnbaum and Mathieu(2008)]{sigact/BenjaminM08}
Benjamin Birnbaum and Claire Mathieu.
\newblock On-line bipartite matching made simple.
\newblock \emph{ACM SIGACT News}, 39\penalty0 (1):\penalty0 80--87, 2008.

\bibitem[Buchbinder et~al.(2007)Buchbinder, Jain, and Naor]{esa/BuchbinderJN07}
Niv Buchbinder, Kamal Jain, and Joseph Naor.
\newblock Online primal-dual algorithms for maximizing ad-auctions revenue.
\newblock In \emph{{ESA}}, volume 4698 of \emph{Lecture Notes in Computer
  Science}, pages 253--264. Springer, 2007.

\bibitem[Buchbinder et~al.(2017)Buchbinder, Segev, and
  Tkach]{esa/BuchbinderST17}
Niv Buchbinder, Danny Segev, and Yevgeny Tkach.
\newblock Online algorithms for maximum cardinality matching with edge
  arrivals.
\newblock In \emph{{ESA}}, volume~87 of \emph{LIPIcs}, pages 22:1--22:14.
  Schloss Dagstuhl - Leibniz-Zentrum fuer Informatik, 2017.

\bibitem[Devanur and Jain(2012)]{stoc/DevanurJ12}
Nikhil~R. Devanur and Kamal Jain.
\newblock Online matching with concave returns.
\newblock In \emph{{STOC}}, pages 137--144. {ACM}, 2012.

\bibitem[Devanur et~al.(2013)Devanur, Jain, and Kleinberg]{soda/DevanurJK13}
Nikhil~R. Devanur, Kamal Jain, and Robert~D. Kleinberg.
\newblock Randomized primal-dual analysis of {RANKING} for online bipartite
  matching.
\newblock In \emph{{SODA}}, pages 101--107. {SIAM}, 2013.

\bibitem[Devanur et~al.(2016)Devanur, Huang, Korula, Mirrokni, and
  Yan]{teac/DevanurHKMY16}
Nikhil~R Devanur, Zhiyi Huang, Nitish Korula, Vahab~S Mirrokni, and Qiqi Yan.
\newblock Whole-page optimization and submodular welfare maximization with
  online bidders.
\newblock \emph{ACM Transactions on Economics and Computation}, 4\penalty0
  (3):\penalty0 1--20, 2016.

\bibitem[Feldman et~al.(2009)Feldman, Korula, Mirrokni, Muthukrishnan, and
  P{\'a}l]{wine/FeldmanKMMP09}
Jon Feldman, Nitish Korula, Vahab~S. Mirrokni, S.~Muthukrishnan, and Martin
  P{\'a}l.
\newblock Online ad assignment with free disposal.
\newblock In \emph{WINE}, pages 374--385, 2009.

\bibitem[Gamlath et~al.(2019)Gamlath, Kapralov, Maggiori, Svensson, and
  Wajc]{focs/GamlathKMSW19}
Buddhima Gamlath, Michael Kapralov, Andreas Maggiori, Ola Svensson, and David
  Wajc.
\newblock Online matching with general arrivals.
\newblock In \emph{FOCS}, pages 26--37. IEEE, 2019.

\bibitem[Goel and Mehta(2008)]{soda/GoelM08}
Gagan Goel and Aranyak Mehta.
\newblock Online budgeted matching in random input models with applications to
  adwords.
\newblock In \emph{SODA}, pages 982--991, 2008.

\bibitem[Huang and Zhang(2020)]{stoc/HuangZ20}
Zhiyi Huang and Qiankun Zhang.
\newblock Online primal dual meets online matching with stochastic rewards:
  configuration {LP} to the rescue.
\newblock In \emph{STOC (to appear)}, 2020.

\bibitem[Huang et~al.(2018)Huang, Kang, Tang, Wu, Zhang, and
  Zhu]{stoc/HuangKTWZZ18}
Zhiyi Huang, Ning Kang, Zhihao~Gavin Tang, Xiaowei Wu, Yuhao Zhang, and Xue
  Zhu.
\newblock How to match when all vertices arrive online.
\newblock In \emph{STOC}, pages 17--29, 2018.

\bibitem[Huang et~al.(2019{\natexlab{a}})Huang, Peng, Tang, Tao, Wu, and
  Zhang]{soda/HuangPTTWZ19}
Zhiyi Huang, Binghui Peng, Zhihao~Gavin Tang, Runzhou Tao, Xiaowei Wu, and
  Yuhao Zhang.
\newblock Tight competitive ratios of classic matching algorithms in the fully
  online model.
\newblock In \emph{SODA}, pages 2875--2886. SIAM, 2019{\natexlab{a}}.

\bibitem[Huang et~al.(2019{\natexlab{b}})Huang, Tang, Wu, and
  Zhang]{talg/HuangTWZ19}
Zhiyi Huang, Zhihao~Gavin Tang, Xiaowei Wu, and Yuhao Zhang.
\newblock Online vertex-weighted bipartite matching: Beating 1-1/e with random
  arrivals.
\newblock \emph{ACM Transactions on Algorithms}, 15\penalty0 (3):\penalty0
  1--15, 2019{\natexlab{b}}.

\bibitem[Kalyanasundaram and Pruhs(2000)]{tcs/KalyanasundaramP00}
Bala Kalyanasundaram and Kirk Pruhs.
\newblock An optimal deterministic algorithm for online b-matching.
\newblock \emph{Theoretical Computer Science}, 233\penalty0 (1-2):\penalty0
  319--325, 2000.

\bibitem[Karande et~al.(2011)Karande, Mehta, and Tripathi]{stoc/KarandeMT11}
Chinmay Karande, Aranyak Mehta, and Pushkar Tripathi.
\newblock Online bipartite matching with unknown distributions.
\newblock In \emph{STOC}, pages 587--596, 2011.

\bibitem[Karp et~al.(1990)Karp, Vazirani, and Vazirani]{stoc/KarpVV90}
Richard~M. Karp, Umesh~V. Vazirani, and Vijay~V. Vazirani.
\newblock An optimal algorithm for on-line bipartite matching.
\newblock In \emph{STOC}, pages 352--358, 1990.

\bibitem[Mahdian and Yan(2011)]{stoc/MahdianY11}
Mohammad Mahdian and Qiqi Yan.
\newblock Online bipartite matching with random arrivals: an approach based on
  strongly factor-revealing {LP}s.
\newblock In \emph{STOC}, pages 597--606, 2011.

\bibitem[Mehta and Panigrahi(2012)]{focs/MehtaP12}
Aranyak Mehta and Debmalya Panigrahi.
\newblock Online matching with stochastic rewards.
\newblock In \emph{FOCS}, pages 728--737. IEEE, 2012.

\bibitem[Mehta et~al.(2005)Mehta, Saberi, Vazirani, and
  Vazirani]{focs/MehtaSVV05}
Aranyak Mehta, Amin Saberi, Umesh Vazirani, and Vijay Vazirani.
\newblock Adwords and generalized on-line matching.
\newblock In \emph{FOCS}, pages 264--273, 2005.

\bibitem[Mehta et~al.(2014)Mehta, Waggoner, and Zadimoghaddam]{soda/MehtaWZ14}
Aranyak Mehta, Bo~Waggoner, and Morteza Zadimoghaddam.
\newblock Online stochastic matching with unequal probabilities.
\newblock In \emph{SODA}, pages 1388--1404. SIAM, 2014.

\bibitem[Wang and Wong(2015)]{icalp/WangW15}
Yajun Wang and Sam~Chiu{-}wai Wong.
\newblock Two-sided online bipartite matching and vertex cover: Beating the
  greedy algorithm.
\newblock In \emph{ICALP}, pages 1070--1081, 2015.

\end{thebibliography}
	\bibliographystyle{plainnat}
}

\newpage

\appendix

\section{Approximate Solutions to the Differential Equations}
\label{sec:factor_revealing_lp}

In this section, we explain in detail how we use factor revealing LP techniques to construct functions $f,g,\ell$ in Lemma~\ref{lem:opt_f_g} and $h$ in Lemma~\ref{lemma:lower-bound-ratio-ewf}.

\subsection{Proof of Lemma~\ref{lem:opt_f_g}}
Recall that we need increasing function $g : [0, 1] \mapsto [0, 1]$, non-decreasing function $f : [0, 1] \mapsto [0, 1]$, and convex function $\ell : [0, 1] \mapsto [0, 1]$ such that:
\begin{align}
\forall \tau, \gamma \in [0, 1], \theta\in [\gamma,1) : \quad &
\int_0^\tau g(y) \dd y + \int_0^\gamma g(y) \dd y_v + \big(1-\tau\big) \big( 1-\gamma -(1-\theta) g(\theta) \big) \notag \\
& 
\quad + \gamma \cdot \int_\tau^1 \min \big\{ g(y), 1-g(\theta) \big\} \dd y \ge \ell(\tau)
\label{eqn:proof-g-constraint-1}
~; \\
\forall \tau, \gamma\in[0,1] : \quad &
\int_0^\tau g(y) \dd y + \int_0^\gamma g(y) \dd y + \big(1-\tau\big) \big( 1-\gamma \big) \ge \ell(\tau)
\label{eqn:proof-g-constraint-2}
~; \\
\forall \E[\tau], x_v^{(u)} \in [0,1]: \quad &
\ell(\E[\tau]) + \int_0^{\E[\tau]} f(x) \dd x + \int_0^{x_v^{(u)}} f(x) \dd x - \big(1-\E[\tau]\big) f(x_v^{(u)}) \ge \Gamma = 0.569
\label{eqn:proof-f-constraint}
~; \\[1ex]
\forall x \in [0,1], \forall y \in [0,x]: \quad  & f(x)-f(y) \le x-y~; \\ 
\forall x \in [0,1], \forall y \in [0,x]: \quad  & g(x)-g(y) \ge \frac{x-y}{100}~; \\ 
& g(1) + f(1) \le 1
\label{eqn:proof-boundary-constraint}
~.
\end{align}

In the following we construct functions $f,g$ and $l$.
For any positive integer $n$, let $[0, 1]_n$ denote the set of multiples of $\frac{1}{n}$ between $0$ and $1$:
\begin{equation*}
[0, 1]_n = \bigg\{ \frac{i}{n} : 0 \le i \le n \bigg\}.
\end{equation*}

Fix $0\leq f(0) \leq f(\frac{1}{n}) \leq \ldots \leq f(1) = 1$.
For each $x = \bar{x} + \frac{z_x}{n}$, where $\bar{x}\in [0,1]_n$ and $z_x\in [0,1)$, define $f(x) = (1-z_x)\cdot f(\bar{x}) + z_x\cdot f(\bar{x} + \frac{1}{n})$.
That is, function $f$ on points outside $[0, 1]_n$ is defined to be a linear interpolation of the function values on two nearest points in $[0, 1]_n$.

By the above definition, $f$ is uniquely defined by $\{f(x)\}_{x\in[0,1]_n}$.
In the following, we restrict our choice of function $f$ to be of this specific form.
Similarly, we strictly functions $g$ (resp. $l$) to be defined by $\{g(y)\}_{y\in[0,1]_n}$ (resp. $\{\ell(\tau)\}_{\tau\in[0,1]_n}$).

Note that for function $f$ defined this way and any $t \in [0,1]_n$, we have
\begin{equation*}
\int_0^t f(x) \dd x = \sum_{x\in[0,1]_n: x < t} \frac{f(x) + f(x+\frac{1}{n})}{2n}.
\end{equation*}

Similarly, we have $\int_0^{t} g(y) \dd y = \sum_{y\in [0,1]_n: y<t} \frac{g(y) + g(y+\frac{1}{n})}{2n}$ for all $t\in[0,1]_n$.

It remains to compute $\{ f(x),g(x),\ell(x) \}_{x\in[0,1]_n}$ that induce functions satisfying the above constraints.
Specifically, we have the following set of discretized linear constraints.

We formulate the following linear program $({LP_n})$, in which $\{ f(x),g(x),\ell(x) \}_{x\in[0,1]_n}$ are the variables.
The objective of ${LP_n}$ is to maximize variable $r$, subject to the following constraints.

\paragraph{Monotonicity.}
For any $x\in [0,1]_n, ~ x<1$:
\begin{align} 
f(x) & \le f(x+\frac{1}{n}) ~; \label{eqn:f-monotone}\\
g(x) & < g(x+\frac{1}{n}) ~. \label{eqn:g-monotone}
\end{align}

\paragraph{Boundary Condition.}
\begin{equation*} 
f(0)\ge 0,\quad g(0) \ge 0,\quad f(1) + g(1) \le 1~.
\end{equation*}

\paragraph{Lipschitzness.}
For any $x\in[0,1]_n,~ x< 1$:
\begin{align}
f(x+\frac{1}{n}) - f(x) & \le \frac{1}{n}~; \label{eqn:f-lipschitz} \\
g(x+\frac{1}{n}) - g(x) & \le \frac{1}{n}~. \label{eqn:g-lipschitz}
\end{align}

\paragraph{Reverse Lipschitzness.}
For any $x \in [0,1]_n,$ $x < 1$:
\begin{equation}
g(x+\frac{1}{n}) - g(x) \ge \frac{1}{100n}~. \label{eqn:g-reverse-lipschitz}
\end{equation}

\paragraph{Convexity.}
For any $x\in [0,1]_n,~ 0<x<1$:
\begin{equation}
\ell(x) \le \frac{1}{2}\cdot \left(\ell(x-\frac{1}{n}) + \ell(x+\frac{1}{n}) \right)~. \label{eqn:ell-convex}
\end{equation}

\paragraph{Strengthened Constraints.}
For any $\tau, \gamma, \theta \in [0, 1]_n$ such that $\gamma \le \theta + \frac{1}{n}\le 1$:
\begin{align}
\ell(\tau) + \frac{1}{2n^2} \le & \int_0^\tau g(y)\dd y + \int_0^\gamma g(y)\dd y
+ \big(1-\tau\big) \big( 1-\gamma -(1-\theta) g(\theta) \big) \notag \\[1ex]
+ & \gamma \cdot \sum_{y \in [0, 1]_n : \tau \le y < 1} \frac{\min \big\{ g(y), 1-g\big(\theta + \frac{1}{n}\big) \big\} + \min \big\{ g(y + \frac{1}{n}), 1-g\big(\theta + \frac{1}{n}\big) \big\}}{2n}~. \label{eqn:strengthen-g-constraint}
\end{align}

For any $\tau, \gamma \in [0,1]_n$:
\begin{equation}
\ell(\tau) + \frac{1}{4n^2} \le \int_0^\tau g(y) \dd y + \int_0^\gamma g(y) \dd y
+ \big(1-\tau\big) \big( 1-\gamma \big)~. \label{eqn:strengthen-g-constraint-2}
\end{equation}

For any $p, q \in [0, 1]_n$:
\begin{equation}
r + \frac{1}{4n^2} \le  \ell(p) + \int_0^{p} f(x)\dd x + \int_0^{q} f(x)\dd x - \big(1-p\big) f(q)~. \label{eqn:strengthen-f-constraint}
\end{equation}

The following claim is verified using the Gurobi LP solver\footnote{Our code is available at \url{https://github.com/denil1111/Fully-Online-Maching-Improved-Algorithms}.}.

\begin{claim}\label{claim:lpn-ratio}
	For $n=100$, the optimal objective of ${LP_n}$ is at least $\Gamma = 0.569$. 
\end{claim}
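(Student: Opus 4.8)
The statement concerns only the finite linear program $LP_n$ with $n=100$, and since $LP_n$ \emph{maximizes} $r$, it suffices to exhibit a single feasible point of $LP_n$ whose $r$-value equals $\Gamma = 0.569$; no LP duality certificate is needed for the ``at least'' direction. So the plan is: (i) solve $LP_{100}$ numerically (with Gurobi) to obtain candidate nodal values $\{f(i/n),g(i/n),\ell(i/n)\}_{i=0}^{n}$ — a few hundred numbers — together with an optimal objective that the solver reports as at least $0.569$; (ii) set $r = 0.569$ exactly, which is legitimate because $r$ appears only in \eqref{eqn:strengthen-f-constraint} and decreasing $r$ only relaxes that constraint; (iii) verify directly that the resulting tuple $(f,g,\ell,r)$ satisfies every constraint of $LP_{100}$.

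\textbf{The verification is a finite enumeration.}
Once the $3(n{+}1)$ nodal values are fixed, every constraint of $LP_n$ becomes an elementary inequality between finitely many rational expressions. Monotonicity \eqref{eqn:f-monotone}--\eqref{eqn:g-monotone}, Lipschitzness \eqref{eqn:f-lipschitz}--\eqref{eqn:g-lipschitz}, reverse Lipschitzness \eqref{eqn:g-reverse-lipschitz}, and convexity \eqref{eqn:ell-convex} are first- and second-difference inequalities on consecutive grid points (and \eqref{eqn:g-monotone} is in fact implied by \eqref{eqn:g-reverse-lipschitz}); the boundary condition is immediate. For the strengthened constraints, the integrals $\int_0^t f$ and $\int_0^t g$ at grid points $t\in[0,1]_n$ are precisely the trapezoidal sums written in the text, so \eqref{eqn:strengthen-g-constraint-2} and \eqref{eqn:strengthen-f-constraint} reduce to $O(n^2)$ inequalities, and \eqref{eqn:strengthen-g-constraint}, after evaluating the two pointwise minima $\min\{g(y),1-g(\theta+\tfrac1n)\}$ explicitly, reduces to $O(n^3)$ inequalities indexed by $(\tau,\gamma,\theta)\in[0,1]_n^3$. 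For $n=100$ this is on the order of $10^6$ checks, which are enumerated directly.

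\textbf{Main obstacle: making the numerical solution rigorous.}
The only genuine difficulty is that the solver returns floating-point values while a proof needs exact verification — and this is exactly what the additive slacks $\tfrac{1}{2n^2}$ and $\tfrac{1}{4n^2}$ in \eqref{eqn:strengthen-g-constraint}, \eqref{eqn:strengthen-g-constraint-2}, \eqref{eqn:strengthen-f-constraint} are designed to absorb. Concretely, one rounds each nodal value to a rational with small denominator and, if needed, applies an $O(1/n^3)$ perturbation (e.g.\ shaving the $f(i/n)$ down slightly, nudging $\ell$ toward convexity) so that the rounded tuple still satisfies all constraints; since each strengthened constraint carries $\Omega(1/n^2)$ of slack and the solver's solution violates nothing by more than its tolerance $\ll 1/n^2$, such a rounding exists, and the feasibility of the rounded tuple can then be certified by exact rational arithmetic. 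The rounded tuple is an explicit feasible point of $LP_{100}$ with $r=0.569$, which establishes the claim; this same construction then feeds into Lemma~\ref{lem:opt_f_g} via the discretization-with-slack arguments already set up in this section.
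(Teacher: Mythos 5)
Your overall strategy — find a numerical solution with Gurobi, exhibit it, and finitely verify the $O(n^3)$ linear constraints — is the same approach the paper takes; the paper's ``proof'' of this claim is literally the single sentence that Gurobi verified it, with a pointer to the code, and it offers no discussion of rounding at all, so you've gone further in that respect.

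There is, however, one conceptual slip in your rounding paragraph. The additive terms $\tfrac{1}{2n^2}$ and $\tfrac{1}{4n^2}$ in the strengthened constraints \eqref{eqn:strengthen-g-constraint}--\eqref{eqn:strengthen-f-constraint} are \emph{part of the definition of $LP_n$}: they are the margin the authors reserve to absorb the discretization error incurred when passing from the continuous constraints (Eqn.~\eqref{eqn:proof-g-constraint-1}--\eqref{eqn:proof-f-constraint}) to grid-indexed ones (this is exactly what the $\tfrac{1}{8n^2}$- and $\tfrac{1}{4n^2}$-bounds in Eqn.~\eqref{eqn:tau-interpolate}--\eqref{eqn:theta-interpolate} and Claim~\ref{claim:error-of-int} consume). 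They are not free slack that a feasible point of $LP_n$ can eat into: a point that satisfies the constraint only after deleting the $\tfrac{1}{2n^2}$ term is simply \emph{infeasible} for $LP_n$, and Claim~\ref{claim:lpn-ratio} asserts feasibility for $LP_n$ as written. So your argument that ``each strengthened constraint carries $\Omega(1/n^2)$ of slack, hence the rounding exists'' does not go through as stated. What you actually need is that the Gurobi solution has genuine \emph{strict} slack in every constraint of $LP_n$ (e.g.\ by solving the LP with $r$ fixed to $\Gamma - \eta$ for a small $\eta>0$ and looking at the minimum constraint slack, or by solving for $r^* > 0.569$ and observing that the $r$-constraints then have room); once you have a certified lower bound $s>0$ on the minimum slack, a rounding of the nodal values to rationals within $s / C$ of the solver's floats (with $C$ bounding the $\ell_1$-norm of any constraint row) is exactly verifiable and provably feasible. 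That is the standard route, and it is what a fully rigorous version of both your argument and the paper's one-liner would require — but it does not come for free from the $1/n^2$ terms already in $LP_n$.
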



We are left to prove that the optimal solution for $LP_n$, where $\{ f(x),g(x),\ell(x) \}_{x\in [0,1]_n}$ are the variables, defines the desired functions $f,g$ and $\ell$.

First, observe that the monotonicity of $g$ and $f$ follows from Eqn.~\eqref{eqn:g-monotone} and~\eqref{eqn:f-monotone} and the linear interpolations.
Similarly, the convexity of $\ell$ follows from Eqn.~\eqref{eqn:ell-convex} and the linear interpolation.
The Lipschitzness of $f$ and reverse Lipschitzness of $g$ follows from Eqn.~\eqref{eqn:f-lipschitz} and \eqref{eqn:g-reverse-lipschitz} and the linear interpolations.
The boundary condition of $g(1) + f(1) \le 1$ is explicitly stated.
It remains to verify Eqn.~\eqref{eqn:proof-g-constraint-1}, \eqref{eqn:proof-g-constraint-2} and \eqref{eqn:proof-f-constraint}.

We first prove some useful tools to ease the analysis.
In the following, for any $t\in[0,1]$, we define $\bar{t}, z_t$ and $\hat{t}$ such that $t = \bar{t} + \frac{z_t}{n}$ and $\hat{t} = \min\{\bar{t} + \frac{1}{n},1\}$.
In other words, $\bar{t}$ and $\hat{t}$ are the two points in $[0,1]_n$ nearest to $t$, where we define $\bar{t} = \hat{t} = 1$ for $t=1$.
Note that we have $t = (1-z_t)\cdot \bar{t} + z_t\cdot \hat{t}$ and
\begin{equation*}
f(t) = (1-z_t)\cdot f(\bar{t}) + z_t\cdot f(\hat{t}), \quad g(t) = (1-z_t)\cdot g(\bar{t}) + z_t\cdot g(\hat{t}).
\end{equation*}

\begin{claim}\label{claim:error-of-int}
	For any $t\in [0,1]$, we have
	\begin{equation*}
	\int_0^t f(x)\dd x - \left( (1-z_t)\cdot \int_0^{\bar{t}}f(x)\dd x + z_t\cdot \int_0^{\hat{t}}f(x)\dd x \right) \in [-\frac{1}{8n^2},0].
	\end{equation*}
	The same holds for function $g$.
\end{claim}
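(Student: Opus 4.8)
The plan is to reduce the claimed two-sided bound to an explicit one-variable computation on the single mesh interval containing $t$, on which $f$ is linear. First I would dispose of the degenerate case $t=1$: then $\bar t = \hat t = 1$ and $z_t = 0$, so the left-hand quantity is $\int_0^1 f(x)\,dx - \int_0^1 f(x)\,dx = 0 \in [-\tfrac{1}{8n^2},0]$. Henceforth assume $\bar t < 1$, so that $\hat t = \bar t + \tfrac1n$ and the interval $[\bar t,\hat t]$ has length exactly $\tfrac1n$.

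Next, using additivity of the integral and $t = \bar t + z_t/n$, I would rewrite the quantity to be bounded as
\[
\int_0^t f(x)\,dx - (1-z_t)\int_0^{\bar t} f(x)\,dx - z_t\int_0^{\hat t} f(x)\,dx \;=\; \int_{\bar t}^{t} f(x)\,dx \;-\; z_t \int_{\bar t}^{\hat t} f(x)\,dx,
\]
so that everything is expressed in terms of the values of $f$ on $[\bar t,\hat t]$ alone. On this interval $f$ is, by construction, the linear interpolation between $a := f(\bar t)$ and $b := f(\hat t)$; by monotonicity~\eqref{eqn:f-monotone} and Lipschitzness~\eqref{eqn:f-lipschitz} we have $0 \le b-a \le \tfrac1n$.

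Then I would compute the two integrals explicitly via the substitution $x = \bar t + s/n$. This gives $\int_{\bar t}^{t} f(x)\,dx = \tfrac1n\big(a z_t + (b-a)\tfrac{z_t^2}{2}\big)$ and $\int_{\bar t}^{\hat t} f(x)\,dx = \tfrac{a+b}{2n}$. Substituting and simplifying, the difference collapses cleanly to $\tfrac{b-a}{2n}\,z_t(z_t-1)$. Since $z_t \in [0,1)$ forces $z_t(z_t-1) \in [-\tfrac14, 0]$, and $b-a \in [0,\tfrac1n]$, the product lies in $[-\tfrac{1}{8n^2}, 0]$, which is exactly the claim. Finally, the very same argument applies verbatim to $g$, using monotonicity~\eqref{eqn:g-monotone} and Lipschitzness~\eqref{eqn:g-lipschitz} in place of the constraints for $f$.

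The computation is entirely routine and I do not anticipate a real obstacle; the only points requiring care are the degenerate endpoint $t=1$ and the fact that $z_t < 1$, so that $z_t(z_t-1) \le 0$ — this is precisely why the error term is one-sided (nonpositive) rather than symmetric.
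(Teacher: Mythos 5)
Your proposal is correct and follows essentially the same route as the paper: both reduce to $\int_{\bar t}^{t} f - z_t \int_{\bar t}^{\hat t} f$, exploit linearity of $f$ on $[\bar t,\hat t]$ to evaluate the integrals, and simplify to $\frac{f(\hat t)-f(\bar t)}{2n}\,z_t(z_t-1)$, which the monotonicity and Lipschitz constraints place in $[-\tfrac{1}{8n^2},0]$. The only cosmetic difference is that you compute via a change of variables while the paper applies the trapezoid formula directly, and you spell out the $t=1$ degenerate case which the paper leaves implicit.
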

\begin{proof}
	By the linear interpolation definition of $f$,
	\begin{align*}
	& \int_0^t f(x)\dd x - \left( (1-z_t)\cdot \int_0^{\bar{t}}f(x)\dd x + z_t\cdot \int_0^{\hat{t}}f(x)\dd x \right) = \int_{\bar{t}}^t f(x) \dd x - z_t\cdot \int_{\bar{t}}^{\hat{t}} f(x) \dd x \\
	= & \frac{z_t}{2n}\cdot \Big(f(\bar{t})+f(t)\Big) - \frac{z_t}{2n}\cdot \Big(f(\bar{t})+f(\hat{t})\Big)
	= \frac{z_t}{2n}\cdot \Big( f(t)-f(\hat{t}) \Big) \\
	= & \frac{z_t(1-z_t)}{2n}\cdot \Big( f(\bar{t})-f(\hat{t}) \Big) \in [-\frac{1}{8n^2},0], 
	\end{align*}
	where the last inequality follows by monotonicity and Lipschitzness of function $f$. 
	The proof for function $g$ is almost identical.
\end{proof}

\subsubsection{Feasibility of Eqn.~\eqref{eqn:proof-g-constraint-1}}


Recall that we need to prove for all $\tau, \gamma \in [0, 1]$ and $\theta \in [\gamma,1)$, $G(\tau,\gamma,\theta) \geq \ell(\tau)$, where
\begin{align*}
G(\tau,\gamma,\theta) \eqdef \int_0^\tau g(y) \dd y + \int_0^\gamma g(y) \dd y + \big(1-\tau\big) \big( 1-\gamma -(1-\theta) g(\theta) \big) + \gamma \int_\tau^1 \min \big\{ g(y), 1-g(\theta) \big\} \dd y.
\end{align*}

Let $\tau = \bar{\tau} + \frac{z_\tau}{n}$, where $\bar{\tau} \in [0, 1]_n$ and $z_\tau \in [0,1)$.
Let $\hat{\tau} = \min\{ \bar{\tau}+\frac{1}{n}, 1 \}$.
We define $\bar{\gamma},z_\gamma,\hat{\gamma}$ for $\gamma$, and $\bar{\theta},z_\theta,\hat{\theta}$ for $\gamma$ similarly.
By monotonicity of $g$, $G(\tau,\gamma,\theta)$ is at least
\begin{align*}
    \int_0^\tau g(y) \dd y + \int_0^\gamma g(y) \dd y + \big(1-\tau\big) \big( 1-\gamma -(1-\theta) g(\theta) \big) + \gamma \int_\tau^1 \min \big\{ g(y), 1-g\big(\hat{\theta}\big) \big\} \dd y~.
\end{align*}

We define the above equation as $G_1(\tau, \gamma, \theta)$.
Note that the only difference between $G$ and $G_1$ is that we relax $\theta$ to $\hat{\theta}$ in the last integration.

\medskip

We prove the following four inequalities, which will be building blocks of our later analysis.

For any $\tau, \gamma,\theta \in [0, 1]$, we have:
\begin{equation} \label{eqn:tau-interpolate}
    G_1(\tau, \gamma, \theta) \ge (1-z_{\tau}) \cdot G_1(\bar{\tau}, \gamma, \theta) + z_\tau \cdot G_1(\hat{\tau}, \gamma, \theta) - \frac{1}{8n^2} ~.
\end{equation}

\begin{equation}
    \label{eqn:gamma-interpolate}
    G_1(\tau, \gamma, \theta) \ge (1-z_\gamma) \cdot G_1(\tau, \bar{\gamma}, \theta) + z_\gamma \cdot G_1(\tau, \hat{\gamma}, \theta) - \frac{1}{8n^2}
    ~.
\end{equation}

\begin{equation}
    \label{eqn:theta-interpolate}
    G_1(\tau, \gamma, \theta) \ge (1-z_\theta) \cdot G_1(\tau, \gamma, \bar{\theta}) + z_\theta \cdot G_1(\tau, \gamma, \hat{\theta}) - \frac{1}{4n^2}
    ~.
\end{equation}

Further, we will show that for any $\tau, \theta \in [0, 1]_n$:

\begin{equation}
    \label{eqn:min-integration}
    \int_\tau^1 \min \big\{ g(y), 1-g\big(\hat{\theta}\big) \big\} \dd y \ge \sum_{y \in [0, 1]_n : \tau \le y < 1} \frac{\min \big\{ g(y), 1-g(\hat{\theta}) \big\} + \min \big\{ g(y + \frac{1}{n}), 1-g(\hat{\theta} ) \big\}}{2n}.
\end{equation}

Given the four inequalities, we prove that for all $\tau,\gamma\in[0,1]$ and $\theta\in[\gamma,1)$, $G(\tau,\gamma,\theta)\geq \ell(\tau)$ by:
\begin{align*}
    & G(\tau,\gamma,\theta) \ge G_1(\tau, \gamma, \theta) \\
    \ge & \big(1-z_\tau\big)
    \bigg(
        \big(1-z_\gamma\big)\big(1-z_\theta\big)\cdot G_1\big(\bar{\tau}, \bar{\gamma}, \bar{\theta}\big)
        + \big(1-z_\gamma\big)z_\theta\cdot G_1\big(\bar{\tau}, \bar{\gamma}, \hat{\theta}\big) \\
    & \phantom{\big(1-z_\tau\big) \bigg(} + z_\gamma\big(1-z_\theta\big)\cdot G_1\big(\bar{\tau}, \hat{\gamma}, \bar{\theta}\big)
        + z_\gamma z_\theta\cdot G_1\big(\bar{\tau}, \hat{\gamma}, \hat{\theta}\big)
    \bigg) \\
    & + z_\tau
    \bigg(
        \big(1-z_\gamma\big)\big(1-z_\theta\big)\cdot G_1\big(\hat{\tau}, \bar{\gamma}, \bar{\theta}\big)
        + \big(1-z_\gamma\big)z_\theta \cdot G_1\big(\hat{\tau}, \bar{\gamma}, \hat{\theta} \big) \\
    & \phantom{+z_\tau \bigg(}+ z_\gamma\big(1-z_\theta\big)\cdot G_1\big(\hat{\tau} , \hat{\gamma}, \bar{\theta}\big)
        + z_\gamma z_\theta\cdot G_1\big(\hat{\tau}, \hat{\gamma}, \hat{\theta} \big)
    \bigg) - \frac{1}{2n^2} \\
    \ge & \big(1-z_\tau\big)\cdot \ell\big(\bar{\tau}\big) + z_\tau\cdot \ell\big(\hat{\tau}\big) = \ell\big(\tau\big) \tag{by the linear interpolation of $\ell$}
    ~.
\end{align*}

Note that the second inequality follows from Eqn.~\eqref{eqn:strengthen-g-constraint} and~\eqref{eqn:min-integration}.

It remains to prove Eqn.~\eqref{eqn:tau-interpolate},~\eqref{eqn:gamma-interpolate},~\eqref{eqn:theta-interpolate} and~\eqref{eqn:min-integration}.

\paragraph{Proof of Eqn.~\eqref{eqn:tau-interpolate}.} We prove by a sequence of inequalities as follows.
\begin{align*}
& G_1(\tau, \gamma, \theta) - \bigg( (1-z_\tau) \cdot G_1(\bar{\tau}, \gamma, \theta) - z_\tau \cdot G_1(\hat{\tau}, \gamma, \theta) \bigg) \\
= & \int_0^\tau f(x)\dd x - \bigg( (1-z_\tau)\cdot \int_0^{\bar{\tau}}f(x)\dd x + z_\tau\cdot \int_0^{\hat{\tau}}f(x)\dd x \bigg)
+ \gamma\cdot \bigg( \int_{\tau}^1 \min \big\{ g(y), 1-g\big(\hat{\theta}\big) \big\} \dd y \\
& - (1-z_\tau) \cdot \int_{\bar{\tau}}^1 \min \big\{ g(y), 1-g\big(\hat{\theta}\big) \big\} \dd y - z_\tau \cdot \int_{\hat{\tau}}^1 \min \big\{ g(y), 1-g\big(\hat{\theta}\big) \big\} \dd y \bigg)\\
\ge &  -\frac{1}{8 n^2} + \gamma \Big( \int_{\tau}^{\hat{\tau}} \min \big\{ g(y), 1-g\big(\hat{\theta}\big) \big\} \dd y - (1-z_\tau) \int_{\bar{\tau}}^{\hat{\tau}} \min \big\{ g(y), 1-g\big(\hat{\theta}\big) \big\} \dd y \Big) \tag{by Claim~\ref{claim:error-of-int}} \\
\geq  & -\frac{1}{8 n^2}~. \tag{by monotonicity of $\min \big\{ g(y), 1-g\big(\hat{\theta}\big) \big\}$ w.r.t. $y$}
\end{align*}

\paragraph{Proof of Eqn.~\eqref{eqn:gamma-interpolate}.}
From Claim~\ref{claim:error-of-int}, we have the following immediately.
\begin{align*}
& G_1(\tau, \gamma, \theta) - \bigg( (1-z_\tau) \cdot G_1({\tau}, \bar{\gamma}, \theta) - z_\tau \cdot G_1({\tau}, \hat{\gamma}, \theta) \bigg) \\
= & \int_0^\gamma f(x)\dd x - \bigg( (1-z_\gamma)\cdot \int_0^{\bar{\gamma}}f(x)\dd x + z_\gamma\cdot \int_0^{\hat{\gamma}}f(x)\dd x \bigg) \geq -\frac{1}{8 n^2}~.
\end{align*}

\paragraph{Proof of Eqn.~\eqref{eqn:theta-interpolate}.}
It follows by a sequence of inequalities as follows.
\begin{align*}
& G_1(\tau,\gamma,\theta) - \bigg( (1-z_\theta) \cdot G_1(\tau, \gamma, \bar{\theta}) + z_\theta \cdot G_1(\tau, \gamma, \hat{\theta}) \bigg) \\
= & -\big(1-\tau\big) \bigg( \big(1-\theta\big) g\big({\theta}\big) - (1-z_\theta) (1-\bar{\theta})g(\bar{\theta}) - z_\theta \big(1-\hat{\theta}\big) g\big(\hat{\theta}\big) \bigg) \\
= & -(1-\tau)\cdot \frac{z_\theta(1-z_\theta)}{n} \big( g(\hat{\theta}) - g(\bar{\theta}) \big) 
\ge -\frac{1}{4n^2}~. \tag{by the Lipschitzness of $g$}
\end{align*}

\paragraph{Proof of Eqn.~\eqref{eqn:min-integration}.}
Observe that $\min \big\{ g(y), 1 - g\big(\hat{\theta}\big) \big\}$ is either linear or concave within every interval $[t, t+\frac{1}{n})$ for $t \in [0, 1]_n$.
Moreover, the latter happens only when $1 - g\big(\hat{\theta}) \in (g(t),g(t+\frac{1}{n}))$.
Let $t^*\in [0,1]_n$ be such that $1 - g\big(\hat{\theta}) \in (g(t^*),g(t^*+\frac{1}{n}))$.
If such $t^*$ does not exist, or $t^* < \tau$ then Eqn.~\eqref{eqn:min-integration} trivially holds.
Otherwise for all $\tau\in [0,1]_n$ we have:
\begin{align*}
& \int_\tau^1 \min \big\{ g(y), 1-g\big(\hat{\theta}\big) \big\} \dd y - \sum_{y \in [0, 1]_n : \tau \le y < 1} \frac{\min \big\{ g(y), 1-g(\hat{\theta}) \big\} + \min \big\{ g(y + \frac{1}{n}), 1-g(\hat{\theta} )  \big\}}{2n}. \\
= & \int_{t^*}^{t^*+\frac{1}{n}} \min \{ g(y), 1-g(\hat{\theta}) \} \dd y - \frac{1}{2n}\Big(g(t^*)+1-g(\hat{\theta})\Big) \ge 0~.
\end{align*}

\subsubsection{Feasibility of Eqn.~\eqref{eqn:proof-g-constraint-2}}
Recall that we need to prove for all $\forall \tau, \gamma\in [0,1]$: $G_2(\tau,\gamma) \geq \ell(\tau)$, where
\begin{align*}
G_2(\tau,\gamma) \eqdef \int_0^\tau g(y) \dd y + \int_0^\gamma g(y) \dd y + \big(1-\tau\big) \big( 1-\gamma \big)~.
\end{align*}

By Claim~\ref{claim:error-of-int} we have
\begin{align*}
& G_2(\tau,\gamma) \geq (1-z_\tau)\cdot G_2(\bar{\tau},\gamma) + z_\tau\cdot G_2(\hat{\tau},\gamma) - \frac{1}{8n^2} \\
\geq & (1-z_\tau)\cdot \Big( (1-z_\gamma) G_2(\bar{\tau},\bar{\gamma}) + z_\gamma G_2(\bar{\tau},\hat{\gamma}) \Big) + z_\tau\cdot \Big( (1-z_\gamma) G_2(\hat{\tau},\bar{\gamma}) + z_\gamma G_2(\hat{\tau},\hat{\gamma}) \Big) - \frac{1}{4n^2} \\
\geq & (1-z_\tau)\cdot \ell(\bar{\tau}) + z_\tau\cdot \ell(\hat{\tau}) = \ell(\tau),
\end{align*}
where the second inequality follows from Eqn.~\eqref{eqn:strengthen-g-constraint-2}.

\subsubsection{Feasibility of Eqn.~\eqref{eqn:proof-f-constraint}}
For convenience, let $p = \E[\tau]$ and $q = x_v^{(u)}$.
Recall that we need to prove for all $p,q\in [0,1]$ that $G_3(p,q)\ge \Gamma$, where
\begin{align*}
G_3(p,q)\eqdef \ell(p) + \int_0^{p} f(x) \dd x + \int_0^{q} f(x) \dd x - \big(1-p\big) f(q)~.
\end{align*}

Also recall from Eqn.~\eqref{eqn:strengthen-f-constraint} and Claim~\ref{claim:lpn-ratio} that for any $p, q \in [0, 1]_n$, we have
\begin{equation}\label{eqn:g-3-at-least-Gamma}
G_3(p,q) \geq \Gamma + \frac{1}{4n^2}~.
\end{equation}

By Claim~\ref{claim:error-of-int} and linear interpolation of $\ell$ and $f$, for all $p,q\in [0,1]$ we have
\begin{align*}
& G_3(p,q) \geq (1-z_{p})\cdot G_3(\bar{p},q) + z_{p}\cdot G_3(\hat{p},q) - \frac{1}{8n^2} \\
\geq & (1-z_{p}) \Big( (1-z_{q}) G_3(\bar{p},\bar{q}) + z_{q} G_3(\bar{p},\hat{q}) \Big) + z_{p} \Big( (1-z_{q}) G_3(\hat{p},\bar{q}) + z_{q} G_3(\hat{p},\hat{q}) \Big) - \frac{1}{4n^2} \\
\geq & (1-z_{p})\cdot \Gamma + z_{p}\cdot \Gamma = \Gamma. \tag{by Eqn.~\eqref{eqn:g-3-at-least-Gamma}.}
\end{align*}

\subsection{Proof of Lemma~\ref{lemma:lower-bound-ratio-ewf}}

Recall that we need an increasing function $h$ such that:
\begin{align}
\forall q\in[0,1], \qquad & q \cdot h(q) - \int_0^q h(y) \dd{y} + 1 - q \geq \Gamma, \label{eqn:proof_v_earlier}\\
\forall q_u,q_v\in[0,1],\qquad & q_u\cdot h(q_u) - \int_0^{q_u}h(y) \dd y
+ q_v\cdot h(q_v) - \int_0^{q_v}h(y) \dd y \notag \\ 
& \qquad + \int_0^{1-q_u} h(y) \dd y + (1-h(q_u)) \cdot (1-q_v) \geq \Gamma=0.592, \label{eqn:proof_u_earlier}\\
& h(0) = 0, h(1) = 1.
\end{align}


Fix $0 = h(0) < h(\frac{1}{n}) < \ldots < h(1) = 1$.
For each $y = \bar{y} + \frac{z_y}{n}$, where $\bar{y}\in [0,1]_n$ and $z_y\in [0,1)$, define $h(y) = (1-z_y)\cdot h(\bar{y}) + z_y\cdot h(\bar{y} + \frac{1}{n})$.
That is, function $h$ on points outside $[0, 1]_n$ is defined to be a linear interpolation of the function values on two nearest points in $[0, 1]_n$.

Note that for function $h$ defined this way and any $t \in [0,1]_n$, we have
\begin{equation*}
\int_0^t h(y) \dd y = \sum_{y\in[0,1]_n: y < t} \frac{h(y) + h(y+\frac{1}{n})}{2n}.
\end{equation*}

It remains to determine $\{ h(y) \}_{y\in[0,1]_n}$.
We claim that the optimal solution for the following LP, where $\{ h(y) \}_{y\in[0,1]_n}$ are the variables, defines a function $h$ that satisfies the constraints listed in Lemma~\ref{lemma:lower-bound-ratio-ewf}.
For convenience we define $h(1+\frac{1}{n}) = h(1) = 1$.
\begin{align}
{(LP_n)}\qquad\text{maximize}\qquad & r \nonumber \\
\text{subject to}\qquad & \textstyle r \leq q \cdot h(q) - \int_0^q h(y)\dd y + 1-q - \frac{3}{4 n^2}, \qquad \forall q\in [0,1]_n \label{constraint:v-arrive-first} \\
&\textstyle r \leq q \cdot h(q) - \int_0^q h(y) \dd y + p \cdot h(p) - \int_0^p \dd y
+ \int_0^{1-q} h(y) \dd y \nonumber\\
& \qquad \textstyle + \left(1-h(q+\frac{1}{n})\right)\cdot (1-p) -\frac{7}{4n^2},
\qquad \forall q,p\in [0,1]_n \qquad \label{constraint:u-arrive-first} \\
& \textstyle h(0) = 0,\quad h(1) = h(1+\frac{1}{n}) = 1, \nonumber\\
& \textstyle h(y) < h(y+\frac{1}{n}), \qquad \forall y\in[0,1]_n,\quad y < 1   \tag{monotonicity} \\
& \textstyle h(y) \geq h(y+\frac{1}{n}) + \frac{2}{n}, \qquad \forall y\in[0,1]_n,\quad y < 1
\tag{Lipschitzness}
\end{align}


The following claim is verified using the Gurobi LP solver. \footnote{Our code is available at \url{https://github.com/denil1111/Fully-Online-Maching-Improved-Algorithms}.} 

\begin{claim}\label{claim:lpn-ratio-ewf}
	For $n=1000$, the optimal objective of ${LP_n}$ is at least $\Gamma = 0.592$. 
\end{claim}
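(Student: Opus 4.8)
The statement concerns a fully explicit finite linear program: for $n=1000$, $LP_n$ has $n+2$ real variables --- the objective $r$ together with the grid values $h(0), h(\tfrac1n), \dots, h(1)$ --- and $\Theta(n^2)$ linear constraints. Since $LP_n$ is a maximization program, to show that its optimum is at least $\Gamma=0.592$ it suffices to exhibit one feasible assignment whose objective value equals $0.592$; no dual certificate is needed. The plan is therefore: (i) fix explicit values $h(i/n)\in[0,1]$ for $0\le i\le n$ with $h(0)=0$ and $h(1)=1$; (ii) verify the $O(n)$ monotonicity and Lipschitz-type constraints; (iii) verify the $O(n)$ inequalities of type \eqref{constraint:v-arrive-first} and the $O(n^2)$ inequalities of type \eqref{constraint:u-arrive-first}, each of which --- after substituting the closed trapezoidal form $\int_0^t h = \sum_{y\in[0,1]_n,\,y<t}\tfrac{h(y)+h(y+1/n)}{2n}$ --- is an ordinary linear inequality in the $h(i/n)$ with small rational coefficients; and (iv) read off that the minimum of all these right-hand sides is still $\ge 0.592$, which is the value of $r$ in the feasible point.

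The nontrivial part is step (i): producing a candidate that actually achieves $0.592$. The principled route is to pass to the continuous relaxation underlying Lemma~\ref{lemma:lower-bound-ratio-ewf} --- the infinite-dimensional LP in the unknown increasing $h:[0,1]\to[0,1]$ with constraint families \eqref{eqn:proof_v_earlier} and \eqref{eqn:proof_u_earlier} --- and impose its stationarity / complementary-slackness conditions. Writing $\Phi(q)=q h(q)-\int_0^q h$, the single-variable constraint is $A(q):=\Phi(q)+1-q\ge\Gamma$, and where it is tight differentiation forces $q h'(q)=1$; the two-variable constraint is $B(q_u,q_v):=\Phi(q_u)+\Phi(q_v)+\int_0^{1-q_u}h+(1-h(q_u))(1-q_v)\ge\Gamma$, and requiring it to be tight along its minimizing locus in $(q_u,q_v)$ couples $h$ with the reflected function $h(1-\cdot)$. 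Stitching these pieces together yields an integro-differential equation whose solution is the intended $h$; one evaluates it on the grid $[0,1]_n$ (rounding slightly to preserve strict monotonicity and the Lipschitz-type constraints), via a controlled numerical integration or a series expansion. Equivalently --- and this is what makes the claim literally a finite verification --- one records the rational basic optimal solution returned by the LP solver and plugs it in.

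The main obstacle is precisely that there is no clean closed form for the optimal $h$, so the candidate must be produced and checked numerically, and there are $\Theta(n^2)\approx 10^6$ constraints of type \eqref{constraint:u-arrive-first}, which puts a fully by-hand verification out of reach --- hence the LP solver. To turn the solver output into a rigorous proof one needs only two ingredients: the trivial observation that a feasible point lower-bounds a maximization LP (so exact rational arithmetic on the stored solution certifies the bound), and the $O(1/n^2)$ discretization-error analysis --- the analogue for $LP_n$ of Claim~\ref{claim:error-of-int} and the interpolation inequalities \eqref{eqn:tau-interpolate}--\eqref{eqn:theta-interpolate} already carried out for the Balanced Ranking program --- showing that the slack terms $-\tfrac{3}{4n^2}$ in \eqref{constraint:v-arrive-first} and $-\tfrac{7}{4n^2}$ in \eqref{constraint:u-arrive-first} are large enough to absorb the gap between each continuous constraint and its grid evaluation. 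That error analysis is the delicate step: it rests on the monotonicity of the integrands (so the trapezoidal sums err in a predictable direction) and on the minimizer of $B$ in $(q_u,q_v)$ lying within $O(1/n)$ of a grid pair, and one must check that the accumulated rounding error genuinely stays within the $\tfrac{3}{4n^2}$ and $\tfrac{7}{4n^2}$ budgets at $n=1000$.
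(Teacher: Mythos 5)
Your proposal takes essentially the same approach as the paper: Claim~\ref{claim:lpn-ratio-ewf} is a statement about a fully explicit finite linear program, and the paper proves it simply by running the Gurobi LP solver on $LP_n$ with $n=1000$, which is exactly the feasible-point certification you describe. One clarification: the $O(1/n^2)$ discretization-error analysis you list as a second ingredient is not part of the proof of the Claim itself, which concerns only the finite LP; it is deployed afterwards (via Claim~\ref{claim:error-of-int-ewf} and the subsequent feasibility arguments) to pass from the discrete $LP_n$ to the continuous constraints of Lemma~\ref{lemma:lower-bound-ratio-ewf}, and the slacks $-\tfrac{3}{4n^2}$ and $-\tfrac{7}{4n^2}$ are baked into \eqref{constraint:v-arrive-first} and \eqref{constraint:u-arrive-first} precisely to leave room for that later step.
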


We are left to prove that the optimal solution for ${LP_n}$, where $\{ h(y) \}_{y\in [0,1]_n}$ are the variables, defines the desired function $h$.
Note that the monotonicity constraint of $h$ is implied by the monotonicity of $h(y)$ for $y \in [0,1]_n$ and the linear interpolation of $h$. Further, $h(0)=0$ and $h(1)=1$ are stated explicitly in the above linear program. It remains to prove that the function $h$ defined by the optimal solution of ${LP_n}$ satisfies the constraints~\eqref{eqn:proof_v_earlier} and \eqref{eqn:proof_u_earlier}.

Let $h$ be defined by the optimal solution $\{ h(y) \}_{y\in[0,1]_n}$ of ${LP_n}$ with $n=1000$.
Fix any $q \in [0,1]$.
Let $q = \bar{q}+\frac{z_q}{n}$, where $\bar{q}\in[0,1]_n$ and $z_q\in [ 0,1 )$.
Let $\hat{q} = \bar{q}+\frac{1}{n}$.
Observe that we have $q = (1-z_q)\cdot \bar{q}+z_q\cdot \hat{q}$ and $h(q) = (1-z_q)\cdot h(\bar{q}) + z_q\cdot h(\hat{q})$.	

\begin{claim}\label{claim:error-of-int-ewf}
	Let $H(q) = q \cdot h(q) - \int_0^q h(y) \dd{y}$, we have
	\begin{equation*}
	H(q) - \Big( (1-z_q)\cdot H(\bar{q})+z_q\cdot H(\hat{q}) \Big) \in [-\frac{3}{4n^2},0).
	\end{equation*}
\end{claim}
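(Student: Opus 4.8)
The argument follows the template of Claim~\ref{claim:error-of-int}, now applied to the functional $H(q) = q\cdot h(q) - \int_0^q h(y)\,\dd{y}$ in place of $\int_0^q f$. Write $q = \bar q + \tfrac{z_q}{n}$ with $\bar q \in [0,1]_n$ and $z_q \in [0,1)$, and set $\hat q = \bar q + \tfrac1n$, so that $q = (1-z_q)\bar q + z_q\hat q$. Because $h$ is, by construction, the piecewise-linear interpolation of its values on $[0,1]_n$, it is affine on the grid cell $[\bar q,\hat q]$ with constant slope $s \eqdef n\big(h(\hat q)-h(\bar q)\big)$, and $h(q) = (1-z_q)h(\bar q) + z_q h(\hat q)$.

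First I would rewrite $H$ by integration by parts: since $h(0)=0$, we have $H(q) = \int_0^q y\,\dd{h(y)}$, and on $[\bar q,\hat q]$ the Stieltjes measure is $\dd{h(y)} = s\,\dd{y}$. The contributions from $\int_0^{\bar q}$ cancel in the linear combination, leaving
\[
H(q) - \big((1-z_q)H(\bar q) + z_q H(\hat q)\big)
= \int_{\bar q}^{q} y\,\dd{h(y)} - z_q\int_{\bar q}^{\hat q} y\,\dd{h(y)}
= s\left(\int_{\bar q}^{q} y\,\dd{y} - z_q\int_{\bar q}^{\hat q} y\,\dd{y}\right).
\]
Evaluating the two elementary integrals gives $\int_{\bar q}^{q} y\,\dd{y} = \tfrac{z_q\bar q}{n} + \tfrac{z_q^2}{2n^2}$ and $\int_{\bar q}^{\hat q} y\,\dd{y} = \tfrac{\bar q}{n} + \tfrac{1}{2n^2}$, so their combination equals $-\tfrac{z_q(1-z_q)}{2n^2}$, and therefore
\[
H(q) - \big((1-z_q)H(\bar q) + z_q H(\hat q)\big) = -\frac{z_q(1-z_q)\big(h(\hat q)-h(\bar q)\big)}{2n}.
\]
(One could equally reach this identity by expanding $q\,h(q)$ and the three integrals directly, exactly as in the proof of Claim~\ref{claim:error-of-int}.)

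It remains to bound the right-hand side. Monotonicity of $h$ gives $h(\hat q)-h(\bar q)\ge 0$, so the expression is $\le 0$, and it is strictly negative whenever $q\notin[0,1]_n$ (that is, $z_q\in(0,1)$) since the monotonicity constraint in $LP_n$ is strict. For the lower bound, $z_q(1-z_q)\le\tfrac14$, while the Lipschitzness constraint of $LP_n$ bounds each grid-cell increment of $h$ by $O(1/n)$; in particular $h(\hat q)-h(\bar q)$ is small enough that $\tfrac{z_q(1-z_q)}{2n}\big(h(\hat q)-h(\bar q)\big) \le \tfrac{3}{4n^2}$. Combining the two bounds places the difference in $[-\tfrac{3}{4n^2},0)$, as claimed.

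There is no real obstacle here: the only things to watch are that the elementary integral arithmetic is carried out correctly and that the numerical slack $\tfrac{3}{4n^2}$ is matched to the exact Lipschitz bound imposed on $h$ in $LP_n$ (any bound $h(\hat q)-h(\bar q)\le \tfrac{C}{n}$ with $C\le 6$ suffices, and the constant chosen in $LP_n$ is well within this range). Once this claim is established, it plays for $h,H$ exactly the role that Claim~\ref{claim:error-of-int} plays for $f,g$: it lets one transfer the discretized LP inequalities~\eqref{constraint:v-arrive-first} and~\eqref{constraint:u-arrive-first} to the continuum inequalities~\eqref{eqn:proof_v_earlier} and~\eqref{eqn:proof_u_earlier} by a convex-combination/interpolation argument, which is the next step of the proof of Lemma~\ref{lemma:lower-bound-ratio-ewf}.
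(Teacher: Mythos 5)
Your proof is correct, and it takes a genuinely cleaner route than the paper's. The paper splits $H(q)=q\,h(q)-\int_0^q h$ into its two summands, works out the interpolation error of each ($\frac{z_q(1-z_q)}{n}(h(\bar q)-h(\hat q))$ for the product and $\frac{z_q(1-z_q)}{2n}(h(\bar q)-h(\hat q))$ for the integral), and then finishes with a loose ``combine the two lower bounds'' step that simply adds the magnitudes $\frac{1}{2n^2}+\frac{1}{4n^2}=\frac{3}{4n^2}$, ignoring the fact that the two errors have the same sign and that the second is \emph{subtracted} from the first (so they partially cancel). Your integration-by-parts identity $H(q)=\int_0^q y\,\dd h(y)$ collapses the analysis into a single clean computation and yields the exact expression
\[
H(q)-\bigl((1-z_q)H(\bar q)+z_qH(\hat q)\bigr)
=-\frac{z_q(1-z_q)\bigl(h(\hat q)-h(\bar q)\bigr)}{2n},
\]
which is precisely what one gets by subtracting the paper's two expressions. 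This exposes that the true bound is $\frac{1}{4n^2}$ (with the grid-cell increment bounded by $\frac{2}{n}$), strictly tighter than the $\frac{3}{4n^2}$ stated in the claim, though of course $\frac{3}{4n^2}$ remains a valid (conservative) enclosure, and it is the value baked into the slack terms of $LP_n$. One cosmetic note that applies to both your write-up and the paper's: when $q\in[0,1]_n$ we have $z_q=0$ and the difference is exactly $0$, so the half-open interval $[-\frac{3}{4n^2},0)$ should really be $[-\frac{3}{4n^2},0]$; this is immaterial downstream since only the lower bound is used.
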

\begin{proof}
	Recall that $h$ is defined by $\{h(y)\}_{y\in[0,1]_n}$, which satisfies the constraints of ${LP_n}$.
	
	We first consider the first term $q \cdot h(q)$. Observe that
	\begin{align*}
	& q\cdot h(q) - \Big( (1-z_q)\cdot\bar{q}\cdot h(\bar{q}) + z_q\cdot\hat{q}\cdot h(\hat{q}) \Big) = \frac{z_q(1-z_q)}{n}\cdot \Big( h(\bar{q})-h(\hat{q}) \Big) \in [-\frac{1}{2n^2},0) ,
	\end{align*}
	where the last step follows from the monotonicity and Lipschitzness of $\{h(y)\}_{y\in[0,1]_n}$.
	
	Next we consider the second term $\int_0^q h(y) \dd{y}$. Observe that
	\begin{align*}
	& \int_0^q h(y)\dd y - \left( (1-z_q)\int_0^{\bar{q}}h(y)\dd y + z_q\int_0^{\hat{q}}h(y)\dd y \right)
	= \int_{\bar{q}}^q h(y)\dd y - z_q\int_{\bar{q}}^{\hat{q}}h(y)\dd y \\
	= & \frac{z_q}{2n}\cdot \Big( h(\bar{q})+h(q) \Big) - \frac{z_q}{2n}\cdot \Big( h(\bar{q})+h(\hat{q}) \Big)
	= \frac{z_q(1-z_q)}{2n}\cdot \Big( h(\bar{q}) - h(\hat{q}) \Big) \in [-\frac{1}{4n^2},0).
	\end{align*}
	
	Combining the two lower bounds concludes the proof.
\end{proof}

\subsubsection{Feasibility of Eqn.~\eqref{eqn:proof_v_earlier}}
By Eqn.~\eqref{constraint:v-arrive-first} and Claim~\ref{claim:lpn-ratio-ewf}, we have
\begin{align*}
H(q) + 1-q & \geq  (1-z_q)\cdot \Big(H(\bar{q})+1-\bar{q}-\frac{3}{4n^2} \Big) + 
z_q\cdot \Big(H(\hat{q})+1-\hat{q}-\frac{3}{4n^2} \Big) \\
& \geq (1-z_q)\cdot \Gamma + z_q\cdot \Gamma = \Gamma.
\end{align*}

\subsubsection{Feasibility of Eqn.~\eqref{eqn:proof_u_earlier}}
Let $\bar{p},\hat{p}, z_p$ (for $p$) be defined similarly as $\bar{q},\hat{q}, z_q$ (for $q$).
Note that by Claim~\ref{claim:error-of-int-ewf}, we have
\begin{align*}
& H(q) + H(p) + \int_0^{1-q} h(y) \dd y + (1-h(q)) \cdot (1-p) \\
\geq & H(q) + H(p) + \int_0^{1-q} h(y) \dd y + (1-h(\hat{q})) \cdot (1-p) \\
\geq & \Big((1-z_q)\cdot H(\bar{q})+z_q\cdot H(\hat{q}) - \frac{3}{4n^2}\Big)
+ \Big((1-z_p)\cdot H(\bar{p})+z_p\cdot H(\hat{p}) - \frac{3}{4n^2}\Big) \\
& \qquad + \Big((1-z_q)\cdot \int_0^{1-\bar{q}} h(y) \dd y + z_q\cdot \int_0^{1-\hat{q}} h(y) \dd y + \frac{z_q(1-z_q)}{2n}\cdot ( h(1-\hat{q})-h(1-\bar{q})) \Big) \\
& \qquad + \Big((1-z_q)\cdot (1-h(\hat{q}))(1-\bar{p}) + z_q\cdot (1-h(\hat{q}))(1-\hat{p}) \\
\geq & \Big((1-z_q)\cdot H(\bar{q})+z_q\cdot H(\hat{q}) - \frac{3}{4 n^2}\Big)
+ \Big((1-z_p)\cdot H(\bar{p})+z_p\cdot H(\hat{p}) - \frac{3}{4 n^2}\Big) \\
& \qquad + \Big((1-z_q)\cdot \int_0^{1-\bar{q}} h(y) \dd y + z_q\cdot \int_0^{1-\hat{q}} h(y) \dd y - \frac{1}{4n^2} \Big) \\
& \qquad + \Big((1-z_q)\cdot (1-h(\bar{q}+\frac{1}{n}))(1-\bar{p}) + z_q\cdot (1-h(\hat{q}+\frac{1}{n}))(1-\hat{p}) \\
= & (1-z_q)\cdot \Big( H(\bar{q})+H(\bar{p})+\int_0^{1-\bar{q}}h(y)\dd y + (1-h(\bar{q}+\frac{1}{n}))(1-\bar{p}) - \frac{7}{4n^2} \Big) \\
& + z_q\cdot \Big( H(\hat{q})+H(\hat{p})+\int_0^{1-\hat{q}}h(y)\dd y + (1-h(\hat{q}+\frac{1}{n}))(1-\hat{p}) - \frac{7}{4n^2} \Big) \\
\geq & (1-z_q)\cdot\Gamma + z_q\cdot\Gamma = \Gamma,
\end{align*}
where the last inequality comes from Eqn.~\eqref{constraint:u-arrive-first}.

\end{document}